\newcommand{\SarielComp}[1]{}
\newcommand{\NotSarielComp}[1]{#1}%
\newcommand{\SarielComp}[1]{#1}%
\newcommand{\NotSarielComp}[1]{}%
\newcommand{\IfPrinterVer}[2]{#2}%
\theoremstyle{plain}%
\newtheorem{theorem}{Theorem}[section]
\newtheorem{lemma}[theorem]{Lemma}
\newtheorem{claim}[theorem]{Claim}%
\newtheorem{observation}[theorem]{Observation}
\theoremstyle{plain}%
\newtheorem*{remark:unnumbered}[theorem]{Remark}%
\newtheorem{defn}[theorem]{Definition}
\newcommand{\myqedsymbol}{\rule{2mm}{2mm}}
\theoremstyle{nonumberplain}%
\newtheorem{proof}{Proof:}%
\definecolor{blue25emph}{rgb}{0, 0, 11}
\providecommand{\emphic}[2]{%
   \textcolor{blue25emph}{%
      \textbf{\emph{#1}}}%
   \index{#2}}
\providecommand{\emphi}[1]{\emphic{#1}{#1}}
\definecolor{almostblack}{rgb}{0, 0, 0.3}
\providecommand{\emphw}[1]{{\textcolor{almostblack}{\emph{#1}}}}%
\DeclareFontFamily{U}{BOONDOX-calo}{\skewchar\font=45 }
\DeclareFontShape{U}{BOONDOX-calo}{m}{n}{
  <-> s*[1.05] BOONDOX-r-calo}{}
\DeclareFontShape{U}{BOONDOX-calo}{b}{n}{
  <-> s*[1.05] BOONDOX-b-calo}{}
\DeclareMathAlphabet{\mathcalb}{U}{BOONDOX-calo}{m}{n}
\SetMathAlphabet{\mathcalb}{bold}{U}{BOONDOX-calo}{b}{n}
\DeclareMathAlphabet{\mathbcalb}{U}{BOONDOX-calo}{b}{n}
\newcommand{\HLink}[2]{\hyperref[#2]{#1~\ref*{#2}}}
\newcommand{\HLinkSuffix}[3]{\hyperref[#2]{#1\ref*{#2}{#3}}}
\newcommand{\figlab}[1]{\label{fig:#1}}
\newcommand{\figref}[1]{\HLink{Figure}{fig:#1}}
\newcommand{\thmlab}[1]{{\label{theo:#1}}}
\newcommand{\thmref}[1]{\HLink{Theorem}{theo:#1}}
\newcommand{\seclab}[1]{\label{sec:#1}}
\newcommand{\secref}[1]{\HLink{Section}{sec:#1}}
\newcommand{\apndlab}[1]{\label{apnd:#1}}
\newcommand{\apndref}[1]{\HLink{Appendix}{apnd:#1}}
\newcommand{\obslab}[1]{\label{observation:#1}}
\newcommand{\obsref}[1]{\HLink{Observation}{observation:#1}}
\providecommand{\deflab}[1]{\label{def:#1}}
\newcommand{\defref}[1]{\HLink{Definition}{def:#1}}
\newcommand{\clmlab}[1]{\label{claim:#1}}
\newcommand{\clmref}[1]{\HLink{Claim}{claim:#1}}
\newcommand{\lemlab}[1]{\label{lemma:#1}}
\newcommand{\lemref}[1]{\HLink{Lemma}{lemma:#1}}%
\providecommand{\eqlab}[1]{}%
\renewcommand{\eqlab}[1]{\label{equation:#1}}
\newcommand{\Eqref}[1]{\HLinkSuffix{Eq.~(}{equation:#1}{)}}
\newcommand{\remove}[1]{}%
\newcommand{\Set}[2]{\left\{ #1 \;\middle\vert\; #2 \right\}}
\newcommand{\pth}[2][\!]{\mleft({#2}\mright)}%
\newcommand{\pbrcx}[1]{\left[ {#1} \right]}%
\newcommand{\Prob}[1]{\mathop{\mathbf{Pr}}\!\pbrcx{#1}}
\newcommand{\ceil}[1]{\left\lceil {#1} \right\rceil}
\newcommand{\floor}[1]{\left\lfloor {#1} \right\rfloor}
\newcommand{\cardin}[1]{\left| {#1} \right|}%
\renewcommand{\th}{th\xspace}
\renewcommand{\Re}{\mathbb{R}}%
\newlist{compactenumA}{enumerate}{5}%
\setlist[compactenumA]{topsep=0pt,itemsep=-1ex,partopsep=1ex,parsep=1ex,%
   label=(\Alph*)}%
\newlist{compactenuma}{enumerate}{5}%
\setlist[compactenuma]{topsep=0pt,itemsep=-1ex,partopsep=1ex,parsep=1ex,%
   label=(\alph*)}%
\newlist{compactenumI}{enumerate}{5}%
\setlist[compactenumI]{topsep=0pt,itemsep=-1ex,partopsep=1ex,parsep=1ex,%
   label=(\Roman*)}%
\newlist{compactenumi}{enumerate}{5}%
\setlist[compactenumi]{topsep=0pt,itemsep=-1ex,partopsep=1ex,parsep=1ex,%
   label=(\roman*)}%
\newlist{compactitem}{itemize}{5}%
\setlist[compactitem]{topsep=0pt,itemsep=-1ex,partopsep=1ex,parsep=1ex,%
   label=\bullet}%
\providecommand{\Mh}[1]{#1}%
\newcommand{\PS}{\Mh{{P}}}%
\newcommand{\IIS}{\Mh{\mathsf{J}}}%
\newcommand{\IJS}{\Mh{\mathsf{K}}}%
\newcommand{\IS}{\Mh{\EuScript{I}}}%
\newcommand{\XS}{\Mh{\EuScript{X}}}%
\newcommand{\YS}{\Mh{\EuScript{Y}}}%
\newcommand{\mX}[1]{\Mh{\mu}(#1)}
\newcommand{\mmX}[1]{\Mh{\mu}\pth{#1}}
\newcommand{\mY}[2]{\Mh{\mu}_{#2}(#1)}
\newcommand{\crX}[1]{\Mh{\rho} \pth{#1}}%
\newcommand{\crY}[2]{\Mh{\rho}_{#2} \pth{#1}}%
\newcommand{\GG}{\Mh{\mathcal{G}}}%
\newcommand{\eps}{\varepsilon}%
\newcommand{\opt}{\Mh{\EuScript{O}}}
\newcommand{\optB}{\Mh{\EuScript{O}'}}
\newcommand{\optX}[1]{\Mh{\EuScript{O}}_{#1}}
\newcommand{\optY}[2]{\Mh{\EuScript{O}}_{#1}\pth{#2}}
\newcommand{\OB}{\Mh{\mathcal{P}}}%
\newcommand{\XX}{\Mh{\mathcal{X}}}%
\newcommand{\IntervalX}[1]{\Mh{\mathcalb{#1}}}
\renewcommand{\IJ}{\IntervalX{j}}%
\newcommand{\IK}  {\IntervalX{k}}%
\newcommand{\IL}  {\IntervalX{l}}%
\newcommand{\II}  {\IntervalX{t}}%
\newcommand{\IM}  {\IntervalX{m}}%
\newcommand{\IO}  {\IntervalX{o}}%
\newcommand{\IG}  {\IntervalX{g}}%
\newcommand{\IIL}  {\II_{\Mh{\leftarrow}}}%
\newcommand{\IIR}  {\II_{\Mh{\rightarrow}}}%
\newcommand{\uu}{\tau}%
\newcommand{\mvY}[2]{\Mh{\triangle}\pth{#1,#2}}%
\newcommand{\miY}[2]{\Mh{\nabla}^{}_{\!#2}#1}%
\newcommand{\miX}[1]{\Mh{\nabla}#1}%
\newcommand{\Li}{\Mh{\Xi}}%
\newcommand{\run}{\Mh{\Gamma}}%
\newcommand{\balanceX}[1]{\Mh{\chi}\pth{#1}}
\newcommand{\LX}[1]{\Mh{\mathsf{L}}_{#1}}
\newcommand{\RX}[1]{\Mh{\mathsf{R}}_{#1}}
\newcommand{\vopt}{\Mh{\nu}}%
\newcommand{\voptX}[1]{\vopt_{#1}}
\newcommand{\voptY}[2]{\vopt_{#1}\pth{#2}}
\newcommand{\Family}{\Mh{\mathcal{F}}}%
\newcommand{\AlgorithmI}[1]{{%
      \textcolor[named]{AlgorithmColor}{\texttt{\bf{#1}}}%
   }}
\providecommand{\ComplexityClass}[1]{{{\textcolor[named]{OliveGreen}{%
            \textsc{#1}}}}}
\providecommand{\NPHard}{{\ComplexityClass{NP-Hard}}%
   \index{NP!hard}\xspace}
\providecommand{\NP}{\ComplexityClass{NP}%
   \index{NP}\xspace}
\providecommand{\POLYT}{\ComplexityClass{P}\xspace}
\definecolor{darkred}{rgb}{0.1,0,0}
\newcommand{\ProblemC}[1]{{\textsf{\textcolor{darkred}{#1}\index{problem!#1}}}}
\newcommand{\etal}{\textit{et~al.}\xspace}
\newcommand{\predX}[1]{\mathrm{pred}\pth{#1}}
\newcommand{\dpY}[2]{\mathcalb{d}\pth{#1, #2}}%
\newcommand{\dpInY}[2]{\mathcalb{d}_{\in}\pth{#1, #2}}%
\newcommand{\profitCA}{\Mh{\mathcalb{q}}}%
\newcommand{\profitC}{\Mh{\mathcalb{p}}}%
\newcommand{\profitX}[1]{\profitC_{#1}}
\newcommand{\profitY}[2]{\profitC_{#1}\pth{#2}}
\newcommand{\nI}{\Mh{m}}%
\newcommand{\nP}{\Mh{n}}%
\begin{document}

\title{On Competitive Permutations for Set Cover by Intervals}
\author{%
   Sariel Har-Peled%
   \and%
   Jiaqi Cheng%
}

\date{\today}
\maketitle

\begin{abstract}
    We revisit the problem of computing an optimal partial cover of
    points by intervals. We show that the greedy algorithm computes a
    permutation $\Pi = \pi_1, \pi_2,\ldots$ of the intervals that is
    $3/4$-competitive for any prefix of $k$ intervals. That is, for
    any $k$, the intervals $\pi_1 \cup \cdots \cup \pi_k$ covers at
    least $3/4$-fraction of the points covered by the optimal solution
    using $k$ intervals.

    We also provide an approximation algorithm that in $O(n + m/\eps)$
    time, computes a cover by $(1+\eps)k$ intervals that is as good as
    the optimal solution using $k$ intervals, where $n$ is the number
    of input points, and $m$ is the number of intervals (we assume
    here the input is presorted).

    Finally, we show a counter example illustrating that the optimal
    solutions for set cover do not have the diminishing return
    property -- that is, the marginal benefit from using more sets is
    not monotonically decreasing.  Fortunately, the diminishing
    returns does hold for intervals.
\end{abstract}

\section{Introduction}

In the \ProblemC{max $k$ cover problem}, the input is a ground set
$\PS$ of $n$ elements, and a family $\Family$ of $m$ subsets of $\PS$,
and an integer $k$. The task is to pick $k$ sets of $\Family$, that
maximize the total number of elements of $\PS$ covered.  This problem
is \NPHard \cite{gj-cigtn-79}, and the standard greedy algorithm, of
repeatedly picking the set covering the largest number of elements not
covered yet, has approximation ratio of $1-1/e$ \cite{hp-agapm-98}.
If one wants to cover all the elements of $\PS$ (but minimize the
number of sets used), this is the \ProblemC{set cover} problem. In
this case, the greedy algorithm provides a $(1+\ln{n})$ approximation.
It is known that (essentially) no better approximation is possible
\cite{ds-aapr-13,ds-aapr-14} unless $\POLYT = \NP$.

\paragraph{Diminishing returns.}
A natural property of the greedy solution is that the benefit of the
$i$\th set in the cover declines as $i$ increases -- that is, the
$i$\th greedy set covers no more elements than the previous sets. This
phenomena is known as \emph{diminishing returns}. Somewhat
surprisingly, this phenomena does not hold for the optimal partial
solutions. Specifically, the $k$\th marginal value, is the increase in
coverage as one moves from the optimal $(k-1)$-cover to the optimal
$k$-cover. For the general set cover problem, the sequence of marginal
values is not monotone, as we show in \secref{d:r:optimal}.

\paragraph{Interval scheduling.}
A variant of the problem is where the ground set is on the real line,
and the sets are intervals. This problem rises naturally in scheduling
(i.e., time is the $x$-axis), known as \ProblemC{interval
   scheduling}. See \cite{lt-ois-94,sc-oaisp-1999,cjst-oselj-07,
   DBLP:journals/corr/abs-0812-2115} for related work.

\paragraph{Interval cover.}
The set cover problem becomes significantly easier if one is
restricted to points and intervals on the real line. It is well known
that the greedy algorithm adding intervals to the covers from let to
right, always adding the one extending furthest to the right, computes
the optimal solution. The $k$-cover variant can be solved using
dynamic programming in $O(nk)$ time \cite{gkkss-sd-09, egk-pisc-13,
   lldl-omkic-20}.

Edwards \etal \cite{egk-pisc-13} studied the \ProblemC{partial
   interval cover} problem, where needs to cover a specified number of
points using minimum number of intervals.  They also show that that
the diminishing return property holds in this case for the optimal
solution(s). Using this together with partitioning, they provide an
$(1+\eps)$-approximation algorithm to the number of intervals. Their
algorithm runs in $O\bigl(\eps^{-1} (n + m) \bigr)$ time.

\paragraph{Competitive ratio for interval cover.}

A natural question is how to order the input sets (i.e., intervals in
our case) so that they provide the best coverage for any prefix of
this ordering. Conceptually, we are interested in ordering the
intervals by the ``usefulness'' of the coverage they provide. The
greedy algorithm naturally provides such an ordering, known as the
\emphw{greedy permutation}. In particular, when considering the first
$k$ intervals of the greedy permutation, and comparing it to the
optimal $k$-cover, what is this competitive ratio? What is the
competitive ratio when we consider all $k$?

Since the optimal cover is unstable, and changes as one increases the
number of intervals used, it is not a priori clear what is the best
ordering if one wants to minimize the competitive ratio.

\subsection*{Our results.}

We prove that the diminishing returns holds for intervals -- we were
unaware of the work by Edwards \etal \cite{egk-pisc-13} who already
proved it. Our proof is somewhat different, and we include it in the
\apndref{d:intervals}.

In \secref{comp:greedy}, we prove that the greedy permutation (for the
case of intervals) provides a competitive ratio of $3/4$. We provide
an example showing that this analysis tight. This compares favorably
with the general case, where the competitive ratio is
$1-1/e \approx 0.6321 \ll 3/4$.

We provide an $(1-\epsilon)$-approximate algorithm with a complexity
of $O(\frac{n}{\eps} + k\log (\eps k))$. The algorithm is also based
on the diminishing returns property, and the partitioning method has
been altered to adapt for a non-discrete model. In addition to the
approximate algorithm, our research provided a tight approximate ratio
of $0.75$ for the greedy $k$ interval cover, as a comparison to the
approximate algorithm.

\section{Preliminaries}

\subsection{Definitions and problem statement}

\paragraph{Notations.}

In the following we deal with set systems. A set system $(\PS,\IS)$
has a ground set $\PS$ (which is finite in our case), and a family
$\IS$ of subsets of $\PS$. For a family of sets $\XS \subseteq \IS$,
let $\cup \XS = \cup_{X \in \XS} X$.  For a set $Y \subseteq \PS$, let
\begin{equation*}
    \XS \sqcap Y
    =%
    \Set{ x \in Y}{ x \in \cup \XS } \subseteq \PS.
\end{equation*}
In the following, given a set $X$, and an element $x$, we use the
notation $X - x = X \setminus \{x\}$, and similarly
$X+x = X \cup \{x\}$.

For a set $X$, its \emphi{measure} is $\mX{X} = |X \cap \PS|$.  For a
set of intervals $\IS$, its \emph{measure} is
$\mX{\IS} = \mX{ \cup \IS }$.

\begin{defn}
    An instance of \emphi{interval cover} is a pair
    $\IIS = (\PS, \IS)$, where $\PS \subseteq \Re$ is a set of points on
    the real line, and $\IIS$ is a set of intervals.
\end{defn}

\paragraph{The input.}
The input is an instance $\IIS = (\PS, \IS)$ of interval cover, where
$\nP = \cardin{\PS}$ and $\nI = \cardin{\IS}$.  Specifically, let
$\IS=\{\II_1, \II_2, \ldots, \II_\nI\}$ be the set of intervals,
sorted from left to right by their right endpoints (for simplicity of
exposition, we assume all endpoints are distinct).  Furthermore, we
assume that no two intervals contains the same subset of points of
$\PS$, and that no interval is contained inside another interval (as
one would also use the bigger interval in a cover, and the smaller
interval is as such redundant).  Thus, the order of the intervals by
their right endpoints, or by their left endpoints, is the same.

For two intervals $\II$ and $\IJ$, let $\II \prec \IJ$ indicates that
the left endpoint of $\II$ is to the left of the left endpoint of
$\IJ$ -- that is, $\II$ is to the \emphw{left} of $\IJ$.

\begin{defn}
    \deflab{opt:solution}%
    Let $\IIS = (\PS, \IS)$ the given instance of interval cover.  A
    set $\YS \subseteq \IS$ is \emphi{optimal $k$-cover}, if
    $|\YS| = k$, and the measure of $\YS$ is maximum among all such
    sets of intervals of size $k$. Let
    $\optX{k} = \optY{k}{\IIS} =\optY{k}{\PS} $ denote such an optimal
    $k$-cover of the point set $\PS$.  Let
    $\voptY{k}{\IIS} = \mX{\optY{k}{\IIS}}$.
\end{defn}

\paragraph{Problem definition.}

For a permutation $\pi$ of the intervals of $\IS$, its
\emphw{$k$-competitive ratio}, for any $k >0$, is
\begin{equation*}
    \crY{\pi}{k}
    =%
    \frac{\mX{ \cup_{i=1}^k \II_{\pi(i)} } }
    {\voptX{k}},
    \qquad\text{where}\qquad \voptX{k} = \mX{\optX{k}}.
\end{equation*}
Its overall \emphi{competitive ratio} is
$\crX{\pi} = \min_k \crY{\pi}{k}$.  The task at hand is to compute the
permutation with competitive ratio as close to one as possible.

\paragraph{The greedy algorithm.}
This algorithm repeatedly picks the interval that covers the most
points not yet covered, and add it to the current cover. Let
$\IG_1, \IG_2, \ldots$ be the input intervals as ordered by the greedy
algorithm. If there are several candidate intervals that cover the
same number of points, the greedy algorithm always pick the leftmost
such interval.

\paragraph{Diminishing returns and submodularity.}

An important property of the greedy algorithm is that the contribution
of each added set decreases.

\begin{defn}
    \deflab{marginal:value}%
    For a set of intervals $\XS$, and an interval $\II \in \XS$, its
    \emphi{marginal value} is
    \begin{equation*}
        \mvY{\II}{\XS}%
        =%
        \mX{ \XS } - \mX{\XS - \II}.
    \end{equation*}
\end{defn}

\begin{defn}
    \deflab{marginal:profit}%
    The \emphi{$i$\th marginal profit} of an instance
    $\IIS = (\PS, \IS)$ is the added value to the optimal solution by
    increasing the optimal solution to be of size $i$. Formally, it is
    the quantity
    \begin{equation*}
        \profitY{i}{\IIS}
        =
        \voptY{i}{\IIS} - 
        \voptY{i-1}{\IIS} 
    \end{equation*}
    see \defref{opt:solution}. Observe that
    $\voptY{k}{\IIS} = \sum_{i=1}^k \profitY{i}{\IIS}$.
\end{defn}

The following straightforward lemma shows that diminishing returns
property holds for the greedy solution.
\begin{lemma}
    Consider any instance of set cover (not necessarily of points and
    intervals).  Let $\GG_i = \{ \IG_1, \ldots, \IG_i\}$ be the prefix
    of the first $i$ intervals computed by the greedy algorithm. For
    all $i$, we have the \emphi{diminishing returns} property that
    \begin{equation*}
        \mX{\GG_i} - \mX{\GG_{i-1}} \geq 
        \mX{\GG_{i+1}} - \mX{\GG_{i}}.
    \end{equation*}
\end{lemma}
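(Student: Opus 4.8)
The plan is to prove this by a standard exchange/greedy argument, exploiting the fact that at each step the greedy algorithm maximizes the marginal gain over \emph{all} sets, not just the ones it will eventually add. Write $g_i = \mX{\GG_i} - \mX{\GG_{i-1}}$ for the marginal gain of the $i$\th greedy step (with $\mX{\GG_0}=0$). The claim is then simply $g_i \ge g_{i+1}$ for all $i$.

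First I would recall the definition of the greedy step: $\IG_i$ is chosen to be the set $S$ maximizing $\mX{\GG_{i-1} + S} - \mX{\GG_{i-1}}$, i.e.\ the set with the largest number of currently-uncovered elements, so $g_i = \max_{S \in \IS} \bigl( \mX{\GG_{i-1}+S} - \mX{\GG_{i-1}} \bigr)$. The key step is then the observation that $\IG_{i+1}$ was a \emph{candidate} at step $i$ as well (it had not yet been chosen), so
\begin{equation*}
    g_i = \max_{S \in \IS} \bigl( \mX{\GG_{i-1}+S} - \mX{\GG_{i-1}} \bigr)
    \ge \mX{\GG_{i-1} + \IG_{i+1}} - \mX{\GG_{i-1}}.
\end{equation*}
Now I would invoke monotonicity of the coverage function in the ``base set'' argument: since $\GG_{i-1} \subseteq \GG_i$, the number of elements that a fixed set $\IG_{i+1}$ adds on top of $\GG_{i-1}$ is at least the number it adds on top of the larger set $\GG_i$ — this is exactly submodularity (the diminishing-returns inequality for the coverage function $\mX{\cdot}$ viewed as a function of the set family, which holds for any union-of-sets coverage function). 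Concretely, the set of elements newly covered by $\IG_{i+1}$ relative to $\cup\GG_i$ is a subset of those newly covered relative to $\cup\GG_{i-1}$, because $\cup\GG_{i-1} \subseteq \cup\GG_i$. Hence
\begin{equation*}
    \mX{\GG_{i-1} + \IG_{i+1}} - \mX{\GG_{i-1}}
    \ge \mX{\GG_{i} + \IG_{i+1}} - \mX{\GG_{i}}
    = \mX{\GG_{i+1}} - \mX{\GG_{i}} = g_{i+1},
\end{equation*}
where the penultimate equality uses $\GG_i + \IG_{i+1} = \GG_{i+1}$. Chaining the two displays gives $g_i \ge g_{i+1}$, which is the desired inequality.

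I do not expect any real obstacle here — the lemma is, as the authors say, ``straightforward,'' and the only subtlety is making sure the two ingredients are cleanly stated: (i) $\IG_{i+1}$ is an admissible choice at step $i$, which is what lets us compare the greedy optimum $g_i$ against a specific suboptimal move, and (ii) the elementwise monotonicity argument showing a fixed set contributes at least as much on top of a smaller cover as on top of a larger one. If one wanted to be pedantic about the tie-breaking rule (leftmost interval) it is worth noting it plays no role in this argument; the inequality holds for any consistent tie-breaking. It may also be worth remarking that this proof uses nothing about intervals or the real line and is purely a property of coverage (submodular, monotone) functions, which is presumably why the statement is phrased for general set cover.
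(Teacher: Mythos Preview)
Your proof is correct and takes essentially the same approach as the paper's: both use that $\IG_{i+1}$ was an admissible choice at step $i$, combined with submodularity of coverage to compare its marginal gain over $\GG_{i-1}$ versus over $\GG_i$. The paper phrases it as a one-line proof by contradiction (if the inequality failed, $\GG_{i-1}+\IG_{i+1}$ would beat $\GG_i$, contradicting greediness), whereas you write out the direct chain of inequalities, but the content is identical.
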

\begin{proof}
    If the diminishing property fails, then
    $\mvY{\IG_i}{\GG_i} < \mvY{\IG_{i+1}}{\GG_{i+1}}$. This implies
    that $\GG_i - \IG_i + \IG_{i+1}$ would cover more elements than
    $\GG_i$, which is impossible as the greedy algorithm chooses
    $\IG_i$ as the set that covers the largest number of elements that
    are yet uncovered by $\GG_{i-1}= \GG_i - \IG_i$.
\end{proof}

Surprisingly, diminishing returns does not hold for the optimal
solution -- see \secref{d:r:optimal}. Our target function is
submodular in the sense that an interval $\II \in \IS$ has lesser
value as we add it into a bigger solution. Formally, $\mX{\cdot}$ is
\emphi{submodular} if
\begin{equation*}
    \forall \XS, \YS, \II%
    \qquad%
    \XS \subseteq \YS \subseteq \IS, \text{ and } \II \in \IS \qquad
    \mX{ \XS + \II } - \mX{ \XS } %
    \geq%
    \mX{ \YS + \II } - \mX{ \YS }.    
\end{equation*}

\section{Competitiveness of the greedy algorithm}
\seclab{comp:greedy}

\subsection{Extremality and allowable patterns}

\begin{observation}
    \obslab{extremal}%
    (A) For an optimal solution $\optX{t}$, and for any two distinct
    intervals $\II, \IJ \in \optX{t}$ that intersects, one can assume
    that they are extremal to each other. Specifically, if
    $\II \prec \IJ$, then one can assume that $\IJ$ is the right most
    interval that intersect $\II$. This can be enforced by applying a
    greedy replacement of intervals on the optimal solution from left
    to right. Similarly, one can assume that $\II$ is the leftmost
    interval that intersects $\IJ$. An optimal solution that has this
    property is \emphi{extremal}.
\end{observation}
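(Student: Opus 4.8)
The plan is to start from an arbitrary optimal $t$-cover and repeatedly apply measure-preserving ``sliding'' exchanges, each of which replaces one interval of the current solution by a strictly more extreme interval of $\IS$, until no such exchange remains; the fixpoint of this process is an extremal optimal $t$-cover.

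The geometric core, which I would isolate as a small lemma, is the following. Suppose $\II, \IJ \in \optX{t}$ with $\II \prec \IJ$ and the two intervals overlap, and let $\IJ^{+}$ be the rightmost interval of $\IS$ that overlaps $\II$; then $\IJ \preceq \IJ^{+}$, since $\IJ$ itself overlaps $\II$. Here the input assumptions do the work: because no interval of $\IS$ is contained in another, ordering by left endpoints and by right endpoints both agree with $\prec$, so $\IJ \preceq \IJ^{+}$ forces \emph{both} endpoints of $\IJ^{+}$ to lie at or to the right of the corresponding endpoints of $\IJ$. Hence $\IJ \setminus \IJ^{+}$ is the left sub-segment of $\IJ$ lying to the left of $\IJ^{+}$'s left endpoint, and since $\IJ$ overlaps $\II$ while $\IJ^{+}$'s left endpoint also lies inside $\II$ (it cannot lie to the left of $\II$, as $\IJ^{+} \succeq \IJ \succ \II$), that sub-segment is contained in $\II$; thus $\IJ \subseteq \II \cup \IJ^{+}$. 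Therefore $\mX{\optX{t} - \IJ + \IJ^{+}} \ge \mX{\optX{t}}$, and by optimality equality holds, so $\optX{t} - \IJ + \IJ^{+}$ is again an optimal $t$-cover. (If $\IJ^{+}$ already lies in the solution, the same inclusion $\IJ \subseteq \II \cup \IJ^{+}$ shows $\IJ$ is redundant and may be swapped for any unused interval; reflecting the real line gives the dual statement that $\II$ can be pushed to the leftmost interval overlapping $\IJ$.)

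With this tool, I would sweep the current solution from left to right: at each pair of consecutive overlapping intervals $(\II,\IJ)$, if $\IJ$ is not the rightmost interval of $\IS$ overlapping $\II$, slide it (or its replacement) rightward, and dually push the left member as far left as the mirrored lemma allows; then repeat the sweep until a full pass produces no change. The step I expect to be the real obstacle is termination, and in particular showing that the two sliding directions do not keep undoing one another: a rightward slide strictly increases the total $\prec$-rank of the intervals in the solution while a leftward slide strictly decreases it, so the rightward-only process clearly halts, but one must then argue that a subsequent leftward slide cannot reopen a rightward opportunity — the intended reason being that a leftward slide replaces the left member of a pair by an interval whose right end is no farther right, so it cannot un-make the ``rightmost interval overlapping $\II$'' relation except by relocating within already-settled territory. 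Making this interaction precise (possibly via a single potential dominating both moves, or a one-pass left-to-right argument with a strengthened ``processed prefix is extremal'' invariant), together with re-sorting and updating neighbor relations after each swap and ruling out or explicitly excluding the degenerate case $t=\nI$ where three pairwise-overlapping intervals can be unavoidable, is where the care lies; measure preservation itself is immediate from the lemma.
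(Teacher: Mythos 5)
Your core geometric lemma --- if $\II \prec \IJ$ overlap, both in $\optX{t}$, and $\IJ^{+}$ is the rightmost interval of $\IS$ overlapping $\II$, then $\IJ \subseteq \II \cup \IJ^{+}$ and the swap $\IJ \mapsto \IJ^{+}$ is measure-preserving --- is exactly the exchange the paper alludes to with ``greedy replacement from left to right,'' and it is correct. You also correctly flag the redundant-interval case and the degenerate case $t=\nI$.

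Termination, however, is a genuine gap, and your stated reason that the two sliding directions cannot undo one another does not hold. The claim that a leftward slide ``cannot un-make the rightmost-interval relation except by relocating within already-settled territory'' is precisely what fails: after replacing $\IO_i$ by the leftmost interval $\IO_i'$ overlapping $\IO_{i+1}$, the old $\IO_i$ (now unused) still overlaps $\IO_{i-1}$ and lies strictly to the right of $\IO_i'$, so the pair $(\IO_{i-1},\IO_i')$ now violates the rightmost condition, and a subsequent rightward slide on that pair restores $\IO_i$, producing a cycle. Concretely, take $\IS=\{[0,3],[1,5],[2,6],[4,8]\}$ with points at $0.5,1.5,3.5,4.5,7$. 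The only optimal $3$-covers are $\{[0,3],[1,5],[4,8]\}$ and $\{[0,3],[2,6],[4,8]\}$: in the first, $[2,6]$ is a right-more interval of $\IS$ overlapping $[0,3]$, and in the second, $[1,5]$ is a left-more interval of $\IS$ overlapping $[4,8]$, so neither solution is extremal, and the rightward and leftward slides simply map one to the other. Thus the obstruction you suspected is not bookkeeping; the fixpoint your process seeks need not exist when ``rightmost/leftmost'' is quantified over all of $\IS$. A complete argument would have to first pin down a weaker extremality notion that the downstream lemmas genuinely need (for example, extremality measured only among unused intervals, or a one-sided condition) before the exchange argument can be driven to termination.
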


From this point on, we assume that all optimal solutions under
discussion are extremal.

\begin{lemma}
    \lemlab{only:two}%
    Let $\optX{t}$ be an (extremal) optimal solution, and let $\II$ be
    an interval not in $\optX{t}$. Then, $\II$ intersects at most two
    intervals of $\optX{t}$.
\end{lemma}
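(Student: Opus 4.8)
The plan is to argue by contradiction: suppose $\II$ intersects three intervals of $\optX{t}$. Since the intervals are sorted and no interval contains another, the intervals of $\optX{t}$ that $\II$ meets occur consecutively in left-to-right order; call the three extreme ones among them $\IL \prec \IM \prec \IR$ (if $\II$ meets four or more, just pick the leftmost, some middle one, and the rightmost). The key observation I would use is that an interval on the line is connected, so if $\II$ intersects both $\IL$ and $\IR$, then $\II$ must span the entire gap between them, and in particular $\II$ contains every point of $\PS$ that lies between $\IL$'s right endpoint and $\IR$'s left endpoint. I would then use extremality (\obsref{extremal}): since $\IL \prec \IM$ and they intersect, extremality forces $\IL$ to be "pushed" appropriately, and more usefully, since $\IM$ is sandwiched, the portion of $\PS$ covered by $\IM$ that is not already covered by $\IL \cup \IR$ must lie strictly between $\IL$ and $\IR$ — hence inside $\II$.

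Concretely, I would show $\mX{\optX{t}} \le \mX{\optX{t} - \IM + \II}$, contradicting either optimality or (if equality holds) the extremal/leftmost tie-breaking convention, and in any case showing $\IM$ was redundant. The main step is: every point of $\PS$ covered by $\IM$ is covered by $\IL \cup \IR \cup \II$. Take $p \in \PS \cap \IM$. If $p$ is to the left of $\IR$'s left endpoint, then since $\II$ reaches $\IR$ and $\II$ reaches $\IL$ (so $\II$'s left endpoint is at or left of $\IL$'s right endpoint), and $p \in \IM$ lies to the right of $\IL$ (because $\IL \prec \IM$ and, by the no-containment assumption, $\IM$'s left endpoint is right of $\IL$'s left endpoint — one must check $p$ is actually right of $\IL$'s right endpoint, else $p \in \IL$ and we are done), we get $p \in \II$. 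Symmetrically if $p$ is right of $\IR$'s left endpoint then $p \in \IR$. So $\IM$ contributes nothing beyond $\IL, \IR, \II$, meaning $\optX{t} - \IM + \II$ covers at least as much as $\optX{t}$.

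The hard part will be handling the boundary cases cleanly — in particular ruling out that a point of $\PS \cap \IM$ sits inside $\IL$ or inside $\IR$ (in which case it is trivially covered), versus strictly between them (covered by $\II$), and making sure the three intervals really are distinct and ordered as claimed when $\II$ meets exactly two versus more than two. I would also need to be careful that replacing $\IM$ by $\II$ keeps the solution size $t$ and that $\II \notin \optX{t}$ guarantees we genuinely get a different solution, so that the tie-breaking/extremality rules can be invoked to derive the contradiction even in the equality case. Once the "$\IM$ is redundant" claim is established, the contradiction with optimality of $\optX{t}$ is immediate.
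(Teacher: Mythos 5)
Your central geometric claim is correct: if $\II$ meets $\IL \prec \IM \prec \IR$, then $\II$ necessarily covers the gap between $\IL$ and $\IR$, so every point of $\PS \cap \IM$ lies in $\IL$, in $\IR$, or in $\II$, and $\optX{t} - \IM + \II$ is a valid $t$-cover of no smaller measure. If the swap strictly improves coverage, this contradicts optimality, and you are done. So the plan is sound, and it is a genuinely different route from the paper's.

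The paper's proof is shorter and sidesteps the equality case you flag as ``the hard part.'' It observes that, since no interval contains another, any interval meeting $\II$ must cover one of $\II$'s two endpoints; three intersecting intervals therefore yield, by pigeonhole, two intervals $\IJ \prec \IK$ of $\optX{t}$ that cover the \emph{same} endpoint of $\II$. Those two then intersect each other (they share that endpoint), $\IK \prec \II$, and $\II$ still meets $\IJ$ -- so $\IK$ is not the rightmost interval meeting $\IJ$, directly violating \obsref{extremal}. This gives a contradiction with extremality unconditionally, with no need to separate ``strict improvement'' from ``equal coverage.''

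The gap in your write-up is precisely the equality case, and I don't think the mechanism you gesture at closes it. You write ``since $\IL \prec \IM$ and they intersect, extremality forces\dots'' -- but $\IL$ and $\IM$ need not intersect (they are merely consecutive among the three intervals of $\optX{t}$ that $\II$ happens to meet), so that invocation of \obsref{extremal} is not justified as stated. And ``leftmost tie-breaking'' is the paper's convention for the \emph{greedy} algorithm, not a property of an extremal optimal solution, so it is not the right tool here either. To finish your argument in the equality case you would in effect have to reproduce the paper's pigeonhole step: identify two of $\IL,\IM,\IR$ covering the same endpoint of $\II$ (hence intersecting), and then point to the extremality violation. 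Once you add that, both proofs are complete; the paper's just reaches the contradiction in one step rather than routing through the redundancy of $\IM$.
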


\begin{proof}
    Any interval (that is not $\II$) that intersects $\II$ must cover
    one of its endpoints, as no intervals contain each other. As such,
    if $\II$ intersects three intervals of $\optX{t}$, then two of
    them, say $\IJ$ and $\IK$, must cover one of the endpoints of
    $\II$ (say the left one).  Furthermore, assume that
    $\IJ \prec \IK$, see \figref{three}.
    \begin{figure}[h]
        \centerline{\includegraphics{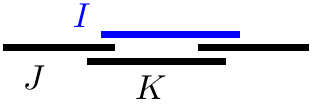}}
        \caption{}
        \figlab{three}
    \end{figure}%

    \noindent%
    But then, as $\II \notin \optX{t}$, it must be that $\IK \neq \II$
    and $\IK \prec \II$. This contradicts the extremality property for
    $\IJ\in \optX{t}$, as one can replace $\IK$ by $\II$ in the
    optimal solution.
\end{proof}

\begin{observation}
    \obslab{two}%
    The extremality property implies that for any two intersecting
    intervals $\II, \IJ$ in an optimal solution $\optX{t}$, with
    $\II \prec \IJ$, we have that no other interval $\IK \in \optX{t}$
    intersects $\IJ$, and $\IK \prec \IJ$.
\end{observation}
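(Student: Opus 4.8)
The plan is a short argument by contradiction. Suppose there is a third interval $\IK \in \optX{t}$, with $\IK \neq \II$ and $\IK \neq \IJ$, such that $\IK$ intersects $\IJ$ and $\IK \prec \IJ$. The first step I would record is the elementary geometric fact behind everything: since no interval of the instance contains another, the ordering of the intervals by left endpoint agrees with the ordering by right endpoint, and hence any interval that lies to the left of $\IJ$ and meets it must contain the left endpoint of $\IJ$. Applying this both to $\II$ (which intersects $\IJ$ with $\II \prec \IJ$) and to $\IK$, we get that $\II$ and $\IK$ both contain the left endpoint of $\IJ$; in particular $\II$ and $\IK$ intersect.

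Next I would split into two cases according to the relative order of the left endpoints of $\II$ and $\IK$ (these are comparable and distinct, since all endpoints are assumed distinct). If $\IK \prec \II$, then $\IK$ is an interval of $\optX{t}$ that meets $\IJ$ and lies strictly to the left of $\II$, which contradicts the extremality of the intersecting pair $\II, \IJ$ in $\optX{t}$ (\obsref{extremal}), namely that $\II$ is the leftmost interval meeting $\IJ$. If instead $\II \prec \IK$, then since $\II$ and $\IK$ intersect, extremality of the pair $\II, \IK$ forces $\IK$ to be the rightmost interval meeting $\II$; but $\IJ \in \optX{t}$ also meets $\II$, and since $\IK \prec \IJ$ it lies strictly to the right of $\IK$ --- again contradicting \obsref{extremal}. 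Either way we reach a contradiction, so no such $\IK$ exists.

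I do not expect a real obstacle here. The only points that need care are the geometric reduction in the first paragraph (which relies on the standing no-containment assumption, and is also what lets ``$\IK \prec \IJ$'' be read as ``$\IJ$ lies to the right of $\IK$''), and checking that the two cases are exhaustive and that in each case the extremal pair invoked --- $\II, \IJ$ in the first case and $\II, \IK$ in the second --- is genuinely a pair of intersecting intervals of $\optX{t}$, so that \obsref{extremal} applies. The rest is bookkeeping with the $\prec$ relation.
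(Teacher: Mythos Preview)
Your argument is correct. The paper states this observation without proof, treating it as an immediate consequence of the extremality definition in \obsref{extremal}, so your write-up simply fills in what the paper leaves implicit.

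Your case split works fine, but there is a slightly shorter route that bypasses both the geometric step ($\II$ and $\IK$ share the left endpoint of $\IJ$) and the case analysis: if $\IK \in \optX{t}$ with $\IK \prec \IJ$ and $\IK \cap \IJ \neq \emptyset$, then extremality applied directly to the intersecting pair $\IK, \IJ$ forces $\IK$ to be the leftmost interval of $\IS$ meeting $\IJ$; the same property applied to the pair $\II, \IJ$ says $\II$ is that leftmost interval, hence $\IK = \II$. Either argument is perfectly acceptable; yours is a bit longer but entirely self-contained.

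One small wording slip in your second case: ``since $\IK \prec \IJ$ it lies strictly to the right of $\IK$'' should read ``$\IJ$ lies strictly to the right of $\IK$''. The intent is clear and the logic is sound.
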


\begin{lemma}
    \lemlab{allowed:pattern}%
    Consider two optimal extremal solutions $\opt$ and $\OB$, and let
    $\II, \IJ, \IK \in \opt \cup \OB$ be three consecutive intervals
    such that $\II \prec \IJ \prec \IK$ and
    $\II \cap \IK \neq \emptyset$. Then, there are only two
    possibilities:
    \begin{compactenumI}
        \smallskip%
        \item $\II \in \opt \setminus \OB$,
        $\IJ \in \OB\setminus \opt$, and $\IK \in \opt \setminus \OB$,
        or

        \smallskip%
        \item $\II \in \OB \setminus \opt$,
        $\IJ \in \opt \setminus \OB$, and
        $\IK \in \OB \setminus \opt$.
    \end{compactenumI}
\end{lemma}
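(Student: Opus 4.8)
The plan is to funnel the entire argument through the middle interval $\IJ$. Using the extremality of $\opt$ and $\OB$ (\obsref{extremal}), I will show that $\IJ$ cannot lie in the same optimal solution as $\II$, nor in the same optimal solution as $\IK$; the two admissible patterns then fall out immediately.

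First I would record the geometry forced by the hypotheses. Since $\II\prec\IJ\prec\IK$ and the instance contains no nested pair of intervals, the three left endpoints and the three right endpoints occur in the same order; combining this with $\II\cap\IK\neq\emptyset$ (so the left endpoint of $\IK$ is at most the right endpoint of $\II$) yields $\mathrm{left}(\II)<\mathrm{left}(\IJ)<\mathrm{left}(\IK)\le\mathrm{right}(\II)<\mathrm{right}(\IJ)<\mathrm{right}(\IK)$. In particular the three intervals pairwise intersect (and $\IJ\subseteq\II\cup\IK$). I suspect the ``consecutive'' hypothesis is not actually used beyond putting us in this situation: the proof below needs only $\II\prec\IJ\prec\IK$ and $\II\cap\IK\neq\emptyset$.

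The heart of the argument is the claim: \emph{if $\IJ\in\opt$ then $\II\notin\opt$ and $\IK\notin\opt$}, together with its mirror image (swap $\opt$ and $\OB$). Suppose $\IJ\in\opt$ and, towards a contradiction, $\II\in\opt$. Then $\II$ and $\IJ$ are two intersecting intervals of the extremal solution $\opt$ with $\II\prec\IJ$, so by \obsref{extremal} $\IJ$ is the rightmost interval of $\IS$ that meets $\II$; but $\IK\in\IS$ extends strictly farther right than $\IJ$ and still meets $\II$, a contradiction. If instead $\IK\in\opt$, then $\IJ$ and $\IK$ are intersecting intervals of $\opt$ with $\IJ\prec\IK$, so by the mirror half of \obsref{extremal} $\IJ$ is the leftmost interval of $\IS$ that meets $\IK$; but $\II\in\IS$ lies strictly left of $\IJ$ and still meets $\IK$, again a contradiction.

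With the claim in hand I would assemble the dichotomy. Since $\IJ\in\opt\cup\OB$, it lies in $\opt$ or in $\OB$; it cannot lie in both, for then the claim (applied to $\opt$ and to $\OB$) would force $\II\notin\opt$ and $\II\notin\OB$, contradicting $\II\in\opt\cup\OB$. If $\IJ\in\opt$, then $\IJ\notin\OB$ and the claim forces $\II,\IK\notin\opt$; as $\II,\IK\in\opt\cup\OB$, this puts $\II,\IK\in\OB\setminus\opt$, which is case~(II). Symmetrically $\IJ\in\OB$ gives case~(I), and these are the only possibilities. The one step demanding care is the claim itself: one must be sure that the interval witnessing the failure of extremality ($\IK$ in the first sub-case, $\II$ in the second) really belongs to the instance $\IS$ (it does, since $\opt\cup\OB\subseteq\IS$) and lies on the correct side of $\IJ$ (it does, by the endpoint inequalities above, which also guarantee that the implicit replacement preserves coverage); and, to be scrupulous about a degenerate replacement that would collapse a cover, one invokes the standing assumption that the optimal solutions under discussion are extremal. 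Everything else is bookkeeping on the membership of three intervals in $\{\opt,\OB\}$.
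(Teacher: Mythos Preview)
Your proof is correct and follows essentially the same approach as the paper: both arguments reduce to showing that no two of $\II,\IJ$ or $\IJ,\IK$ can lie in the same extremal optimal solution (because the third interval would witness a violation of extremality), and then read off the two admissible patterns by bookkeeping. The paper phrases this as a flat case analysis on the four pairs $\{\II,\IJ\}\subseteq\opt$, $\{\II,\IJ\}\subseteq\OB$, $\{\IJ,\IK\}\subseteq\opt$, $\{\IJ,\IK\}\subseteq\OB$, while you organize it as a single claim about $\IJ$; the content is the same.
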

\begin{proof}
    The proof is by straightforward case analysis:
    \begin{compactenumi}
        \smallskip%
        \item $\II, \IJ \in \opt$.  This is impossible as $\IK$ can
        replace $\IJ$ in $\opt$, contradicting the extremality of
        $\opt$.
    \end{compactenumi}
    \smallskip%
    This implies that the following cases are impossible by symmetry:
    \begin{compactenumi}[resume]
        \smallskip%
        \item $\II, \IJ \in \OB$.

        \smallskip%
        \item $\IJ, \IK \in \opt$.

        \smallskip%
        \item $\IJ, \IK \in \OB$.
    \end{compactenumi}
    \smallskip%
    This readily implies that it is impossible that
    $\II \in \opt \cap \OB$, and the same holds for $\IJ$ and $\IK$.
    Thus, the only remaining possibilities are the ones stated in the
    lemma.
\end{proof}

\subsection{Competitive ratios}

The following is well known and is included for the sake of
completeness.
\begin{lemma}
    The competitive ratio of the greedy algorithm for set cover is
    $\geq 1-1/e$.
\end{lemma}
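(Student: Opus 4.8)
The plan is to reprove the classical $(1-1/e)$ guarantee for the greedy algorithm on \textsf{max $k$ cover}; nothing in the argument uses that we are dealing with points and intervals, so it applies verbatim to the greedy permutation of any set system. Fix $k$. Write $g_i = \mX{\GG_i}$ for the coverage of the first $i$ greedy sets (with $g_0 = 0$), and recall $\voptX{k} = \mX{\optX{k}}$. It suffices to show $g_k \geq (1 - 1/e)\,\voptX{k}$, since this gives $\crY{\pi}{k} \geq 1-1/e$ for the greedy permutation $\pi$ and every $k$, and hence $\crX{\pi} \geq 1-1/e$.

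The one real step is a per-round progress bound: for every $i$ with $0 \le i < k$,
\begin{equation*}
    g_{i+1} - g_i \;\ge\; \frac{\voptX{k} - g_i}{k}.
\end{equation*}
To prove it, observe that the $k$ intervals of $\optX{k}$ jointly cover all $\voptX{k}$ points of $\cup\optX{k}$, so in particular they cover at least $\voptX{k} - g_i$ points not yet covered by $\GG_i$; by averaging, some single interval of $\optX{k}$ covers at least $(\voptX{k}-g_i)/k$ of those still-uncovered points. Since the greedy rule picks $\IG_{i+1}$ to maximize the number of newly covered points over all intervals in $\IS \supseteq \optX{k}$, its marginal gain $g_{i+1}-g_i$ is at least this average.

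The rest is routine. Setting $\delta_i = \voptX{k} - g_i$, the progress bound rearranges to $\delta_{i+1} \le (1 - 1/k)\,\delta_i$, so $\delta_k \le (1-1/k)^k \delta_0 = (1-1/k)^k\,\voptX{k} \le \voptX{k}/e$, using $1 + x \le e^{x}$ with $x = -1/k$. Hence $g_k = \voptX{k} - \delta_k \ge (1-1/e)\,\voptX{k}$, which is the desired bound.

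I do not expect a genuine obstacle here. The only point needing a line of care is the averaging step: one must note that greedy optimizes the marginal coverage over the whole family $\IS$, so it performs at least as well as the best member of the restricted pool $\optX{k} \subseteq \IS$; and if some intervals of $\optX{k}$ contribute nothing new at step $i$, the averaging bound only improves, so the inequality is unaffected.
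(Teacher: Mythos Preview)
Your argument is correct and is essentially identical to the paper's own proof: both use the averaging observation that some interval of $\optX{k}$ covers at least $(\voptX{k}-\mX{\GG_i})/k$ still-uncovered points, derive the geometric decay of the deficit $\voptX{k}-\mX{\GG_i}$ by the factor $(1-1/k)$, and conclude via $(1-1/k)^k \le 1/e$. The only cosmetic difference is notation (your $\delta_i$ is the paper's $\Delta_{i+1}$).
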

\begin{proof}
    Let $\optX{k}$ denote an optimal solution of size $k$.  In the
    beginning of the $i$\th iteration, let
    $\Delta_i = \voptX{k} - \mX{\GG_{i-1}}$ be the deficit. There must
    be a set in $\optX{k}$ that covers at least $\Delta_i/k$ elements
    that are not covered by the first $i-1$ greedy sets. As such, the
    greedy algorithm picks a set that cover at least this number of
    elements (and potentially many more). As such, we have
    \begin{align*}
      \Delta_{i+1}%
      &=%
        \voptX{k} - \mX{\GG_i}
        =%
        \voptX{k} - \mX{\GG_{i-1}}
        - \mX{ \IG_i \setminus \cup\GG_{i-1}}
        \leq%
        \Delta_i - \Delta_i/k
        =%
        (1-1/k)\Delta_i
      \\&%
      \leq
      (1-1/k)^{i}\Delta_1
      =%
      (1-1/k)^{i}\voptX{k}.
    \end{align*}
    As such, we have
    \begin{equation*}
        \crX{k}%
        =%
        \frac{\mX{ \cup_{i=1}^k \IG_{k}  } }{\voptX{k}}
        =%
        \frac{\voptX{k} - \Delta_{k+1}}{\voptX{k}}
        \geq %
        \frac{\voptX{k} - (1-1/k)^k\voptX{k}}{\voptX{k}}
        \geq%
        1 - \frac{1}{e},
    \end{equation*}
    since $1-x \leq \exp(-x)$, for $x \geq 0$.
\end{proof}

\begin{lemma}
    \lemlab{first}%
    (A) Consider an optimal $k$ covering
    $\opt = \optX{k} = \{\IO_1,\IO_2,\ldots, \IO_k\} \subseteq
    \IS$. For any interval $\IO \in \opt$, we have that $\opt - \IO$
    is an optimal cover of $\PS \setminus \IO$ by $k-1$ intervals.

    (B) The set $\{ \IO_{q+1}, \ldots, \IO_{k} \}$ is an optimal cover
    by $k-q$ intervals of $\PS \setminus \bigcup_{i=1}^q \IO_i$.
\end{lemma}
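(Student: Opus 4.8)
The plan is to prove both parts by a straightforward exchange argument: if a purportedly optimal ``restricted'' cover were not optimal, we could lift a better restricted cover back to a cover of the original instance that beats $\opt$, a contradiction. I would begin with part (A). Fix $\IO \in \opt$ and set $\PSA = \PS \setminus \IO$. First I would observe that $\opt - \IO$ is a set of $k-1$ intervals, and that its measure with respect to $\PSA$ equals $\mX{\opt} - \mvY{\IO}{\opt}$; more usefully, the points of $\PSA$ it covers are exactly $(\cup(\opt-\IO)) \cap \PSA$. Suppose for contradiction there is a set $\YS$ of $k-1$ intervals with $\cardin{\cup \YS \cap \PSA} > \cardin{\cup(\opt-\IO) \cap \PSA}$. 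Then consider $\YS + \IO$: it has $k$ intervals, and it covers every point of $\PS$ covered by $\YS$ together with every point of $\IO$. Counting, the number of points of $\PS$ it covers is $\cardin{\IO \cap \PS} + \cardin{\cup\YS \cap \PSA}$, which strictly exceeds $\cardin{\IO \cap \PS} + \cardin{\cup(\opt-\IO)\cap \PSA} = \mX{\opt} = \voptX{k}$, contradicting optimality of $\opt$. (One must be slightly careful that $\YS+\IO$ might have fewer than $k$ distinct intervals if $\IO \in \YS$, but in that case $\YS$ already is a $(k-1)$-cover of $\PSA$ using an interval disjoint from the relevant part, and an even simpler counting argument applies; alternatively pad with an arbitrary interval, which only helps.)

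Part (B) then follows by a clean induction on $q$, peeling off one interval at a time. For the base case $q=0$ there is nothing to prove. For the inductive step, suppose $\{\IO_{q}, \IO_{q+1}, \ldots, \IO_k\}$ is an optimal $(k-q+1)$-cover of $\PS_{q-1} := \PS \setminus \bigcup_{i=1}^{q-1}\IO_i$. Apply part (A) to this instance with the distinguished interval $\IO = \IO_q$: it yields that $\{\IO_{q+1},\ldots,\IO_k\}$ is an optimal $(k-q)$-cover of $\PS_{q-1} \setminus \IO_q = \PS \setminus \bigcup_{i=1}^q \IO_i = \PS_q$, which is exactly the claim for $q$. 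The only thing to check is that part (A) is genuinely applicable here, i.e.\ that $\{\IO_q,\ldots,\IO_k\}$ really is an optimal cover of $\PS_{q-1}$ — but that is precisely the inductive hypothesis, so the induction closes.

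I do not expect any serious obstacle; the statement is essentially the ``optimal substructure'' property and the exchange argument is routine. The one place to be attentive is the bookkeeping in part (A): making sure the inequality goes the right way and that the lift $\YS + \IO$ is a legitimate $k$-cover (handling the degenerate overlap case as noted above), and making sure that when restricting to $\PSA$ we are measuring coverage of the \emph{correct} ground set. Once part (A) is stated cleanly with the measure written as $\cardin{\IO \cap \PS}$ plus coverage of $\PSA$, part (B) is a one-line induction.
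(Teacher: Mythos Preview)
Your proposal is correct and essentially identical to the paper's own proof: for (A) the paper supposes an optimal $(k-1)$-cover $\optB$ of $\PS\setminus\IO$ strictly beating $\opt-\IO$, observes that $\optB+\IO$ then beats $\opt$ on $\PS$, and derives a contradiction; for (B) it simply says ``follows by repeated application of (A)''. Your write-up is in fact a bit more careful (handling the $\IO\in\YS$ degeneracy and spelling out the induction), but the argument is the same exchange/optimal-substructure idea.
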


\begin{proof}
    (A) Let $\optB$ be an optimal cover of $\PS \setminus \IO$ by
    $k-1$ intervals.  If
    \begin{equation*}
        |\optB \sqcap (\PS \setminus \IO)| > |(\opt-\IO) \sqcap (\PS - \IO)|,
    \end{equation*}
    then $\optB + \IO$ covers more points of $\PS$ than $\opt$, which
    is a contradiction to the optimality of $\opt$. %

    (B) Follows by repeated application of (A).
\end{proof}

We remind the reader that the set of input intervals $\IS$ is made out
intervals $\II_1, \ldots, \II_\nI$, and they are sorted in increasing
order by their left endpoint.

\begin{lemma}
    \lemlab{one:or:two}%
    Let $\II \in \IS$ be an interval that covers the maximum number of
    points of $\PS$ among all the intervals of $\IS$. The interval
    $\II$ is either (i) in one of the optimal $k$ covers, or (ii)
    alternatively, exactly two intervals $\IJ, \IJ' \in \IS$ of an
    optimal cover overlap it, where $\IJ \prec \II \prec \IJ'$.
\end{lemma}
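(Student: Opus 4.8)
The plan is to argue by contradiction: assume $\II$ is not in any optimal $k$-cover and that it is not overlapped by exactly two intervals of an optimal cover as described. By \lemref{only:two} (applied with the optimal solution being extremal, which we may assume by \obsref{extremal}), $\II$ already intersects at most two intervals of any fixed optimal solution $\opt$. So the only possibilities left to rule out are: (a) $\II$ intersects exactly one interval $\IJ$ of $\opt$, and (b) $\II$ intersects no interval of $\opt$ at all. I will derive a contradiction in each case by exhibiting a modification of $\opt$ that covers at least as many points but in which $\II$ appears — which, together with a tie-breaking argument, forces $\II$ into an optimal cover, contradicting the standing assumption.

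First I would handle case (b): if $\II$ intersects no interval of $\opt$, then the points covered by $\II$ are disjoint from $\cup\opt$. Since $\II$ covers the maximum number of points among all intervals of $\IS$, swapping out any interval $\IO \in \opt$ for $\II$ gives a cover whose measure is $\mX{\opt} - \mvY{\IO}{\opt} + \mX{\II} \ge \mX{\opt}$, because $\mX{\II} = \mX{\{\II\}} \ge \mvY{\IO}{\opt}$ (the marginal value of $\IO$ in $\opt$ is at most the total number of points $\IO$ covers, which is at most what $\II$ covers). This produces an optimal cover containing $\II$, a contradiction. Case (a) is the heart of the argument: $\II$ intersects exactly one interval $\IJ$ of $\opt$, so $\mX{\II \setminus \IJ}$ are points of $\PS$ covered by $\II$ but not by any interval of $\opt$ other than possibly $\IJ$. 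Replacing $\IJ$ by $\II$ in $\opt$ changes the measure by $\mX{\II \setminus \cup(\opt - \IJ)} - \mvY{\IJ}{\opt}$. Here $\mX{\II \setminus \cup(\opt-\IJ)} = \mX{\II}$ since $\II$ meets only $\IJ$ among intervals of $\opt$, and $\mvY{\IJ}{\opt} \le \mX{\IJ} \le \mX{\II}$ by maximality of $\II$; hence the swap does not decrease the measure, and again we get an optimal cover containing $\II$. In all cases we obtain an optimal $k$-cover containing $\II$, contradicting the assumption in (i)'s negation — so either (i) holds, or $\II$ is overlapped by two intervals $\IJ \prec \II \prec \IJ'$ of an optimal cover, which is (ii).

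The remaining point to be careful about is why the two intervals in (ii) can be taken to satisfy $\IJ \prec \II \prec \IJ'$, i.e., one strictly to the left and one strictly to the right. Since no interval of $\IS$ is contained in another, any interval intersecting $\II$ must contain exactly one endpoint of $\II$. If two intervals $\IJ, \IJ'$ of $\opt$ both intersect $\II$ and both, say, contain $\II$'s left endpoint, then (as in the proof of \lemref{only:two}) the rightmost of them would be $\prec \II$ and could be replaced by $\II$, violating extremality — so, exactly as there, the two intersecting intervals of the optimal cover must straddle $\II$, one containing its left endpoint (hence $\prec \II$) and one containing its right endpoint (hence $\II \prec$ it).

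The main obstacle I anticipate is the bookkeeping in case (a): making sure that when $\II$ meets only $\IJ$ in $\opt$, removing $\IJ$ and inserting $\II$ genuinely does not lose coverage — one must verify $\mX{\II \setminus \cup(\opt-\IJ)} = \mX{\II}$, which uses precisely that $\II$ intersects no interval of $\opt$ other than $\IJ$ (disjoint intervals cover disjoint point sets), and then that the net change $\mX{\II} - \mvY{\IJ}{\opt} \ge 0$ follows from $\mvY{\IJ}{\opt} \le \mX{\{\IJ\}} \le \mX{\{\II\}} = \mX{\II}$. A secondary subtlety is the leftmost-tie-breaking convention: when the swapped solution has exactly the same measure as $\opt$, it is still an optimal $k$-cover by \defref{opt:solution}, so $\II$ does lie in \emph{an} optimal $k$-cover, which is all (i) asserts; no appeal to uniqueness or tie-breaking is actually needed.
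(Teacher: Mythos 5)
Your proposal is correct and follows essentially the same route as the paper: assume $\II$ is in no optimal $k$-cover, bound the number of intervals of $\opt$ meeting $\II$ by two, then derive a contradiction via a swap argument in the zero- and one-intersection cases, and finally use the no-containment property to show the remaining two intervals must straddle $\II$. The only cosmetic difference is that you invoke \lemref{only:two} (via extremality) for the ``at most two'' bound, whereas the paper's proof re-derives it directly with its own swap argument about endpoints covered twice; you are also slightly more explicit in handling the zero-intersection case, which the paper treats implicitly.
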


\begin{proof}
    If $\II$ appears in an optimal $k$ cover, then we are done. So
    assume $\II$ is not in any optimal cover, for all optimal
    $k$-covers. Fix such an optimal cover $\opt = \optX{k}$. If the
    right endpoint of $\II$ is covered by two intervals of $\opt$ then
    one of them can be replaced by $\II$, and yields an equivalent
    solution, a contradiction. Similarly, if the left endpoint of
    $\II$ is covered by two intervals in $\opt$, the same argument
    applies. As $\II$ can not contain fully any interval of $\opt$ it
    follows that it intersects at most two such intervals.

    If there was only one such interval, then one could just replace
    this interval by $\II$, yielding an equivalent or better solution,
    which would imply (i).

    The property that $\IJ \prec \II \prec \IJ'$ readily follows, as
    one of the two intervals covers the left endpoint of $\II$, and
    the other one covers the right endpoint of $\II$.
\end{proof}

\begin{lemma}
    \lemlab{half}%
    We have $\mX{\GG_k} \geq \mX{\opt_{2k}}/2$.
\end{lemma}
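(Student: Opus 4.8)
The plan is induction on $k$: at each step we strip off the first greedy interval $\IG_1$ (which, by the greedy rule, is a maximum--coverage interval of $\IS$) and recurse on the residual point set $\PS' = \PS \setminus \IG_1$. The starting observation is that the greedy permutation of $\IS$ \emph{with respect to $\PS'$} begins with $\IG_2, \IG_3, \ldots$: deleting the points of $\IG_1$ from $\PS$ is exactly what the greedy algorithm does after picking $\IG_1$, and the leftmost tie--breaking rule is unaffected, so $\IG_2, \dots, \IG_k$ is the greedy prefix of length $k-1$ for the instance $(\PS', \IS)$. Consequently, the points of $\PS$ covered by $\GG_k$ are exactly the points of $\IG_1$ together with the points of $\PS'$ covered by $\IG_2, \dots, \IG_k$; that is, $\mX{\GG_k} = \mX{\IG_1} + c$, where $c$ is the coverage of $\IG_2, \dots, \IG_k$ on $\PS'$. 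By the induction hypothesis applied to $(\PS', \IS)$ with parameter $k-1$, we have $c \ge \voptY{2k-2}{\PS'}/2$.

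Given this, the lemma reduces to the single inequality
\begin{equation}
    \voptY{2k}{\PS} \;\le\; 2\,\mX{\IG_1} + \voptY{2k-2}{\PS'},
    \eqlab{peel}
\end{equation}
because then $\mX{\GG_k} \ge \mX{\IG_1} + \voptY{2k-2}{\PS'}/2 \ge \mX{\IG_1} + \bigl(\voptY{2k}{\PS} - 2\mX{\IG_1}\bigr)/2 = \voptY{2k}{\PS}/2$. To establish \Eqref{peel}, apply \lemref{one:or:two} to $\IG_1$ with parameter $2k$, which gives two cases.

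Case 1: some optimal $2k$--cover $\opt$ contains $\IG_1$. By \lemref{first}(A) the set $\opt - \IG_1$ is an optimal $(2k-1)$--cover of $\PS'$, and since $\IG_1$ contributes no point of $\PS'$ we get $\voptY{2k}{\PS} = \voptY{2k-1}{\PS'} + \mX{\IG_1}$. Now, since $\PS' \subseteq \PS$, no single interval covers more than $\mX{\IG_1}$ points of $\PS'$, so $\voptY{1}{\PS'} \le \mX{\IG_1}$; and by the diminishing--returns property for intervals (\apndref{d:intervals}) the marginal profits of $(\PS', \IS)$ are non--increasing, whence $\voptY{2k-1}{\PS'} - \voptY{2k-2}{\PS'} \le \voptY{1}{\PS'} \le \mX{\IG_1}$. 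Substituting yields \Eqref{peel}. I expect this case to be the main obstacle: a naive union bound here loses a factor of $3$, and the improvement to $2$ relies essentially on diminishing returns, used to charge the top marginal profit $\voptY{2k-1}{\PS'} - \voptY{2k-2}{\PS'}$ against a single $\mX{\IG_1}$.

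Case 2: no optimal $2k$--cover contains $\IG_1$, so \lemref{one:or:two} yields exactly two intervals $\IJ \prec \IG_1 \prec \IJ'$ of $\opt$ overlapping $\IG_1$. The remaining $2k-2$ intervals $\opt - \IJ - \IJ'$ are disjoint from $\IG_1$, hence cover points of $\PS'$ only, so their coverage is at most $\voptY{2k-2}{\PS'}$; and since $\IG_1$ is a maximum--coverage interval of $\IS$, both $\mX{\IJ}$ and $\mX{\IJ'}$ are at most $\mX{\IG_1}$. A union bound over $\opt = (\opt - \IJ - \IJ') \cup \{\IJ, \IJ'\}$ then gives $\voptY{2k}{\PS} \le \voptY{2k-2}{\PS'} + \mX{\IJ} + \mX{\IJ'} \le \voptY{2k-2}{\PS'} + 2\mX{\IG_1}$, i.e.\ \Eqref{peel} again. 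Together with the trivial base case $k = 0$ (both sides zero), this completes the induction. (One should also double--check that, as claimed, the greedy prefix for the residual instance $(\PS', \IS)$ coincides with $\IG_2, \dots, \IG_k$, which is where the consistency of the tie--breaking rule is used.)
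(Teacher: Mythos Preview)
Your proof is correct and follows the same inductive skeleton as the paper: peel off $\IG_1$, observe that $\IG_2,\dots,\IG_k$ is the greedy prefix on $\PS'=\PS\setminus\IG_1$, and reduce to the inequality $\voptY{2k}{\PS}\le 2\,\mX{\IG_1}+\voptY{2k-2}{\PS'}$, split into the two cases according to whether $\IG_1$ lies in an optimal $2k$-cover. Your Case~2 is essentially identical to the paper's.

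The one genuine difference is Case~1. You remove only $\IG_1$, obtain $\voptY{2k}{\PS}=\mX{\IG_1}+\voptY{2k-1}{\PS'}$, and then invoke the diminishing-returns theorem for intervals to bound $\voptY{2k-1}{\PS'}-\voptY{2k-2}{\PS'}\le\voptY{1}{\PS'}\le\mX{\IG_1}$. The paper instead peels off \emph{two} intervals at once: it takes any second interval $\IO_j\in\optX{2k}$, uses $\mX{\IG_1\cup\IO_j}\le 2\,\mX{\IG_1}$, and bounds the remaining $2k-2$ intervals (an optimal cover of $\PS-\IG_1-\IO_j$ by \lemref{first}) by $\voptY{2k-2}{\PS'}$ via monotonicity of the optimum in the point set. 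So contrary to your expectation, diminishing returns is \emph{not} needed here; the paper's argument for Case~1 is strictly more elementary and keeps \lemref{half} independent of the appendix. Your route is still valid, just heavier than necessary.
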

\begin{proof}
    If $k=1$ the claim is immediate, as
    $\mX{\opt_{2}} \leq 2 \mX{\opt_{1}} = 2 \mX{\GG_1}$.  Consider the
    optimal $2k-2$ cover $\opt'_{2k-2}$ of $\PS - \IG_1$. There are
    several possibilities to consider:
     
    If $\IG_1 \in \opt_{2k}$: Let $\IO_j$ be to any interval in
    $\opt_{2k} \setminus \{\IG_1 \}$. Set $\opt''_{2k-2}$ to be the
    optimal $2k-2$ cover of $\PS - \IG_1 - \IO_j$.  By the optimality
    of $\opt'_{2k-2}$, we have
    \begin{equation*}
        \mY{\opt'_{2k-2}}{\PS-\IG_1}
        \geq
        \mY{\opt''_{2k-2}}{\PS-\IG_1-\IO_j}.
    \end{equation*}
    Recall that $2\mX{\IG_1}$ is larger or equal to the coverage
    provided by any two intervals of $\opt_{2k}$. By induction on the
    point set $\PS-\IG_1$, we have by \lemref{first} that
    \begin{align*}
      \mX{\optX{2k}}
      &=
        \mX{\IG_1 \cup \IO_j} + \mY{\opt''_{2k-2}}{\PS-\IG_1-\IO_j}
      \leq%
      2\mX{\IG_1} + \mY{\opt'_{2k-2}}{\PS-\IG_1}
      \\&%
      \leq%
      2\mX{\IG_1} + 2\mathrm{greedy}(\PS - \IG_1, k-1)
      =%
      2\mX{\GG_k}.
    \end{align*}

    Otherwise, $\IG_1 \notin \optX{2k}$. %
    By \lemref{only:two}, $\IG_1$ intersects at most two intervals of
    $\opt_{2k}$.  Let $\IO_1$ and $\IO'_1$ be these two intervals. Set
    $\opt'' = \opt_{2k} - \IO_1 - \IO_1'$. By \lemref{first}, $\opt''$
    is an optimal $2k-2$ cover of $P-\IO_1-\IO'_1$, and by
    construction is does not cover any point of $\IG_1$.  %
    As such
    \begin{equation*}
        \mY{\opt'_{2k-2}}{\PS-\IG_1}%
        \geq%
        \mY{\opt''}{\PS-\IG_1}
        =%
        \mY{\opt''}{\PS}
        \geq%
        \mY{\opt''}{\PS-\IO_1-\IO'_1}.
    \end{equation*}
    As $2\mX{\IG_1} \geq \mX{\IO_1 \cup \IO'_1}$, by induction we have
    \begin{align*}
      \mX{\optX{2k}}%
      &=%
        \mX{\IO_1 \cup \IO'_1} + \mY{\opt''}{\PS-\IO_1-\IO'_1}%
      \leq%
      2\mX{\IG_1} + \mY{\opt'_{2k-2}}{\PS-\IG_1}%
      \\&%
      \leq%
      2\mX{\IG_1} + 2\mathrm{greedy}(\PS - \IG_1, k-1)%
      =2\mX{\GG_k}.
    \end{align*}                             
\end{proof}

\subsubsection{The even case}

\begin{lemma}
    \lemlab{even}%
    If $k$ is even, we have $\mX{\GG_k} \geq 3\voptX{k}/4$, where
    $\voptX{k}$ is the optimal coverage by $k$ intervals.
\end{lemma}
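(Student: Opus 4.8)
The plan is to split the greedy run in half and invoke \lemref{half} twice: once for the first half and once for the instance obtained by deleting the points the first half has already covered. Write $h = k/2$; since $k$ is even this is an integer and $2h = k$. Applying \lemref{half} with budget $h$ gives immediately
\[
   \mX{\GG_h} \;\geq\; \mX{\optX{2h}}/2 \;=\; \voptX{k}/2 .
\]

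Next, set $\PS_1 = \PS \setminus \bigcup \GG_h$, the set of points not covered by the first $h$ greedy intervals. The continuation $\IG_{h+1}, \ldots, \IG_{2h}$ of the greedy run is exactly the greedy solution of the instance $(\PS_1, \IS)$ with budget $h$ — greedy always takes the interval maximizing the coverage of the currently uncovered points (with leftmost tie-breaking), and after step $h$ the uncovered points are precisely $\PS_1$. Hence $\mX{\GG_k} = \mX{\GG_{2h}} = \mX{\GG_h} + \mathrm{greedy}(\PS_1, h)$. Now restrict the optimal cover $\optX{k}$ to $\PS_1$: this is a feasible cover of $\PS_1$ by $k = 2h$ intervals, and it covers at least $\voptX{k} - \mX{\GG_h}$ points of $\PS_1$, since every point of $\bigcup \optX{k}$ lies either in $\bigcup \GG_h$ (at most $\mX{\GG_h}$ such points) or in $\PS_1$. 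Applying \lemref{half} to the instance $(\PS_1, \IS)$ with budget $h$ (the lemma and all the lemmas it rests on hold for any interval-cover instance) therefore yields
\[
   \mathrm{greedy}(\PS_1, h) \;\geq\; \frac{\voptX{k} - \mX{\GG_h}}{2}.
\]
Combining the two displayed bounds,
\[
   \mX{\GG_k}
   \;\geq\;
   \mX{\GG_h} + \frac{\voptX{k} - \mX{\GG_h}}{2}
   \;=\;
   \frac{\voptX{k} + \mX{\GG_h}}{2}
   \;\geq\;
   \frac{\voptX{k} + \voptX{k}/2}{2}
   \;=\;
   \frac{3\voptX{k}}{4},
\]
as desired.

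Given \lemref{half}, essentially everything here is bookkeeping about counting points. The only step that needs a word of care is the ``restart'' claim — that greedy's choices on $\PS$ after step $h$, including the tie-breaking rule, coincide with a fresh greedy run on $(\PS_1, \IS)$ — but this is immediate from the definition of the greedy algorithm, since the interval set is unchanged and the set of yet-uncovered points after step $h$ is exactly $\PS_1$. Note also that the parity of $k$ is used only to ensure that $h = k/2$ is a legal budget with $2h = k$; the odd case will need a separate, slightly lossier variant of this argument (handling the leftover interval), which I expect to be the genuinely fiddly part of the full $3/4$ bound.
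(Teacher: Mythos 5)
Your proof is correct and is essentially identical to the paper's: the same split of the greedy run at $k/2$, the same two applications of \lemref{half} (once on $\PS$, once on $\PS' = \PS \setminus \bigcup \GG_{k/2}$), and the same final averaging. The only differences are notational.
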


\begin{proof}
    Break the greedy $k$ cover into two parts
    $\GG' = \{ \IG_1,\ldots, \IG_{k/2}\}$ and
    $\GG'' = \{\IG_{k/2+1}, \ldots, \IG_{k}\}$.  By \lemref{half},
    $\mX{\GG'}/\voptX{k} \geq 1/2$.

    Let $\PS' = \PS \setminus \bigcup\GG'$.  Observe that $\GG''$ is
    the greedy $k/2$ cover of $\PS'$. Now, the optimal $k$ cover of
    $\PS'$ has value at least $\voptX{k}-\mX{\GG'}$. By \lemref{half},
    we have $\mY{\GG''}{\PS'} \geq (\voptX{k}-\mX{\GG'})/2$. As such,
    we have
    \begin{align*}
      \frac{\mX{\GG_{k}}}{\voptX{k}}%
      &=%
        \frac{ \mX{\GG'} + \mY{\GG''}{\PS'} }{\voptX{k}}%
        \geq%
        \frac{\mX{\GG'} + (\voptX{k}-\mX{\GG'})/2}{\voptX{k}}%
        =%
        \frac{ \voptX{k}/2 + \mX{\GG'}/2 }{\voptX{k}}%
        \geq%
        \frac{1}{2} + \frac{ \mX{\GG'}}{2\voptX{k}}        
        \geq
        \frac{3}{4},
    \end{align*}
    since, by \lemref{half}, $\mX{\GG'}/\voptX{k} \geq 1/2$.
\end{proof}

\subsubsection{The odd case}
\newcommand{\greedyA}{\AlgorithmI{greedy}}%

\begin{lemma}
    \lemlab{onefour}%
    $\mX{\GG_{k+1}} \geq \voptX{2k+1}/2+\mX{\IG_1}/4$.
\end{lemma}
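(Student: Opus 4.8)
The plan is to peel off the first greedy interval and recurse, using \lemref{half} as the engine. Write $\PS' = \PS \setminus \IG_1$ and $\GG'' = \{\IG_2,\ldots,\IG_{k+1}\}$. Since the greedy algorithm, once it has committed to $\IG_1$, simply runs greedy on the residual point set, $\GG''$ is exactly the greedy $k$-cover of $\PS'$, and $\mX{\GG_{k+1}} = \mX{\IG_1} + \mY{\GG''}{\PS'}$. Applying \lemref{half} to the instance on the point set $\PS'$ gives $\mY{\GG''}{\PS'} \geq \tfrac12 \voptY{2k}{\PS'}$. Thus the whole statement reduces to the single inequality
\begin{equation*}
    \voptY{2k}{\PS'} \;\geq\; \voptX{2k+1} - \tfrac{3}{2}\mX{\IG_1},
\end{equation*}
because this yields $\mX{\GG_{k+1}} \geq \mX{\IG_1} + \tfrac12\bigl(\voptX{2k+1} - \tfrac32\mX{\IG_1}\bigr) = \voptX{2k+1}/2 + \mX{\IG_1}/4$.

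To prove the claim I would split into two cases. If $\IG_1$ lies in some optimal $(2k+1)$-cover $\opt$, then \lemref{first} says $\opt - \IG_1$ is an optimal $2k$-cover of $\PS'$; as the $\mX{\IG_1}$ points of $\IG_1 \cap \PS$ are disjoint from $\PS'$, this already gives $\voptY{2k}{\PS'} = \voptX{2k+1} - \mX{\IG_1}$, which is stronger than the claim. (This case also disposes of $k=0$, where $\{\IG_1\}$ is itself an optimal $1$-cover.)

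Otherwise $\IG_1$ is in no optimal $(2k+1)$-cover. Since $\IG_1$ covers the most points of any single interval, \lemref{one:or:two} yields an optimal $(2k+1)$-cover $\opt$ in which exactly two intervals $\IL \prec \IG_1 \prec \IR$ meet $\IG_1$; set $\opt'' = \opt - \IL - \IR$, a set of $2k-1$ intervals, none meeting $\IG_1$. The one geometric observation is that, because no interval contains another, $\IL \setminus \IG_1$ lies entirely to the left of $\IG_1$ while $\IR \setminus \IG_1$ lies entirely to its right, so these two sets are disjoint. Let $U$ be the set of points covered by $\opt$ but not by $\opt''$, and partition $U = U_0 \sqcup U_1$ according to whether a point falls outside or inside $\IG_1$; by the geometric observation, $U_0$ is the disjoint union of its part in $\IL$ and its part in $\IR$. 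Hence the better of the two $2k$-element covers $\opt''+\IL$ and $\opt''+\IR$ covers at least $\mX{\opt''} + \tfrac12|U_0|$ points of $\PS'$ (using $\mY{\opt''}{\PS'} = \mX{\opt''}$, since $\opt''$ meets nothing in $\IG_1$). Since $\mX{\opt} = \mX{\opt''} + |U_0| + |U_1|$, this rearranges to $\voptY{2k}{\PS'} \geq \voptX{2k+1} - \bigl(\tfrac12|U_0| + |U_1|\bigr)$, and the claim follows from $\tfrac12|U_0| + |U_1| \leq \tfrac32\mX{\IG_1}$. The latter holds because $|U_1| \leq \mX{\IG_1}$ (as $U_1 \subseteq \IG_1 \cap \PS$) together with $|U_0| + |U_1| = |U| \leq \mX{\IL} + \mX{\IR} \leq 2\mX{\IG_1}$ (as $\mX{\IG_1}$ is maximal among single intervals).

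The hard part is the second case: extracting the extra $\mX{\IG_1}/4$ beyond the naive bound $\voptX{2k+1}/2$ needs both the left/right disjointness of $\IL\setminus\IG_1$ and $\IR\setminus\IG_1$ — this is exactly what lets ``keep the better of $\opt''+\IL$ and $\opt''+\IR$'' recover half of $U_0$ rather than nothing — and the maximality of $\mX{\IG_1}$; dropping either ingredient collapses the estimate back to $\mX{\GG_{k+1}}\geq\voptX{2k+1}/2$. Note that, pleasantly, this argument does not seem to need the diminishing-returns property at all.
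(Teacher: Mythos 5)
Your proof is correct and follows essentially the same route as the paper: peel off $\IG_1$, apply \lemref{half} to the residual point set $\PS \setminus \IG_1$, and case-split on whether $\IG_1$ lies in an optimal $(2k+1)$-cover. The only variation is the bookkeeping in the second case: the paper drops a single one of the two overlapping intervals, chosen via the private-overlap comparison $\mX{\IM'} \leq \mX{\IM}$, and bounds the loss by $\mX{\IO_1} + \mX{\IM'}$, whereas you drop both $\IL,\IR$ and add back whichever recovers more of the coverage outside $\IG_1$; both accounting schemes arrive at the same $\tfrac{3}{2}\mX{\IG_1}$ loss bound via the maximality of $\mX{\IG_1}$.
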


\begin{proof}
    Consider the interval $\IG_1$, by greedyness, it is the interval
    that covers the most points in the input $\IS$.

    First consider the case that $\IG_1 \in \optX{2k+1}$. After
    removing $\IG_1$ from $\PS$, the intervals
    $\IG_2, \ldots, \IG_{k+1}$ are the greedy $k$ cover of
    $\PS - \IG_1$.  For
    $\voptX{2k+1} = \voptY{2k+1}{\PS} = \mX{\optY{2k+1}{\PS}}$,
    observe that
    $\voptX{2k+1}- \mX{\IG_1} \leq \voptY{2k}{\PS - \IG_1}$.  By
    \lemref{half} we have:
    \begin{equation*}
        \mX{\GG_{k+1}}
        =%
        \mX{\greedyA(\PS - \IG_1, k)} + \mX{\IG_1} 
        \geq%
        \frac{\voptX{2k+1}-\mX{\IG_1}}2+\mX{\IG_1}
        \geq%
        \frac{\voptX{2k+1}+\mX{\IG_1}}{2}.        
    \end{equation*}

    The other possibility is that $\IG_1 \notin \optX{2k+1}$. Then by
    \lemref{one:or:two}, $\IG_1$ intersects at most two intervals
    $\IO_1$ and $\IO'_1$, where $\IO_1, \IO'_1 \in \optX{2k+1}$.  Let
    $\IM = (\IG_1 \cap \IO_1)\setminus \IO_1'$ and
    $\IM' = (\IG_1 \cap \IO_1')\setminus \IO_1$, and observe that they
    both contained in $\IG_1$, and are disjoint. Assume, with loss of
    generality, that $\mX{\IM'} \leq \mX{\IM}$. This implies that
    $\mX{\IM'} \leq \mX{\IG_1}/2$. Thus, we have
    \begin{align*}
      \mmX{\bigl.\cup \optX{2k+1} \setminus (\IG_1 \cup \IO_1)}
      &=%
        \mmX{\bigl.\cup \optX{2k+1} \setminus \pth{ (\IG_1 \cap \IO_1) \cup
        (\IG_1 \cap
        \IO_1')  \cup \IO_1 \bigr.}\Bigr. }
      \\&%
      =%
      \voptX{2k+1} - \mX{\IO_1} - \mX{\IM'}
      \geq%
      \voptX{2k+1} - \mX{\IG_1} - \frac{\mX{\IG_1}}{2}
      =
      \voptX{2k+1} - \frac{3}{2}\mX{\IG_1},
    \end{align*}
    as $\mX{\IG_1} \geq \mX{\IO_1}$.

    \remove{%
       \begin{align*}
         &\mmX{\bigl.\cup \optX{2k+1} - (\IG_1 \cup \IO_1)}
           =%
           \voptX{2k+1} - \mX{\IO_1}
           - \mmX{\bigl.(\IG_1 \cap \IO'_1)\setminus \IO_1}.
         \\
         \text{and}\quad
         &
           \mX{\cup \optX{2k+1} - (\IG_1 \cup \IO'_1)}
           =%
           \voptX{2k+1} - \mX{\IO'_1}
           - \mX{(\IG_1 \cap \IO_1)\setminus \IO'_1}.
       \end{align*}%
    }%

    Thus when the points covered by $\IG_1$ are removed from $\PS$,
    the union size of the remaining optimal $2k$ cover is at least
    $\mX{\cup \optX{2k+1} - (\IG_1 \cup \IO_1)}/2$. By \lemref{half},
    we have
    \begin{equation*}
        \mmX{\bigl.\greedyA(\PS-\IG_1, 2k)}%
        \geq
        \frac{\mX{\optY{2k}{\PS - \IG_1}\bigr.}}{2}
        \geq%
        \frac{\mX{\cup \optX{2k+1} - (\IG_1 \cup \IO_1)}}{2}%
        \geq%
        \frac{\voptX{2k+1}}{2} - \frac{3\mX{\IG_1}}{4}.
    \end{equation*}
    As such, we have
    \begin{equation*}
        \mX{\GG_{k+1}}%
        =%
        \mX{\IG_1}+\mmX{\greedyA(\PS - \IG_1, k)\bigr.} %
        \geq%
        \mX{\IG_1}+  \frac{\voptX{2k+1}}{2}
        - \frac{3\mX{\IG_1}}{4}\geq \frac{\voptX{2k+1}}{2} +
        \frac{\mX{\IG_1}}{4}.            
    \end{equation*}
\end{proof}

\begin{defn}
    For a set of intervals $\XS$, and an interval $\II \in \XS$, its
    \emphi{marginal interval} is
    \begin{equation*}
        \miX{\II}%
        =%
        \miY{\II}{\XS}%
        =%
        \II - \bigcup (\XS \setminus \{\II\}).
    \end{equation*}
\end{defn}

\begin{lemma}
    \lemlab{2marginal}%
    For a set of intervals $\XS\subset\IS$, such that no interval of
    $\IS$ contains another interval of $\IS$, any interval
    $\II \in \IS$ intersects at most two marginal intervals of $\XS$.
\end{lemma}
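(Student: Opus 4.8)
The plan is a proof by contradiction: if some $\II \in \IS$ met three or more marginal intervals of $\XS$, I will exhibit an interval of $\XS$ that is strictly contained in $\II$, contradicting the standing assumption that no interval of $\IS$ contains another interval of $\IS$.

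First I would fix $\II \in \IS$ and write $\XS = \{\IK_1, \ldots, \IK_s\}$ with $\IK_1 \prec \IK_2 \prec \cdots \prec \IK_s$; since the input intervals are sorted and none of them contains another, this is simultaneously the order of their left endpoints and of their right endpoints. Put $M_i = \miY{\IK_i}{\XS} = \IK_i \setminus \bigcup_{j \neq i} \IK_j$. Two immediate observations set the stage: $M_i \subseteq \IK_i$, and the sets $M_1, \ldots, M_s$ are pairwise disjoint, since a point of $M_i$ lies in no $\IK_j$ with $j \neq i$, hence in no $M_j \subseteq \IK_j$.

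The one substantive step is the following geometric claim: for indices $a < b$, every point of $M_a$ lies strictly to the left of the left endpoint of $\IK_b$; symmetrically, for $b < c$, every point of $M_c$ lies strictly to the right of the right endpoint of $\IK_b$. I would prove the first half by noting that $\IK_b$ is one of the intervals deleted in forming $M_a$, so $M_a \subseteq \IK_a \setminus \IK_b$; and since $a<b$ forces the left endpoint of $\IK_a$ to the left of that of $\IK_b$ and likewise the right endpoint of $\IK_a$ to the left of that of $\IK_b$ (the two orderings coincide), the set $\IK_a \setminus \IK_b$ lies entirely to the left of the left endpoint of $\IK_b$ — whether or not $\IK_a$ and $\IK_b$ overlap. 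The second half is the mirror image.

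To conclude, suppose $\II$ meets at least three marginal intervals; let $a$ and $c$ be the smallest and largest indices among them and let $b$ be an intermediate index that also occurs, so $a < b < c$. Choose $p \in \II \cap M_a$ and $q \in \II \cap M_c$. By the claim $p$ is strictly left of the left endpoint of $\IK_b$ and $q$ is strictly right of its right endpoint, and since $\II$ is an interval containing both $p$ and $q$ we get $\IK_b \subseteq [p,q] \subseteq \II$; moreover the left endpoint of $\II$ is at most $p$, hence strictly left of that of $\IK_b$, so in fact $\IK_b \subsetneq \II$. As $\IK_b \in \XS \subseteq \IS$ and $\II \in \IS$, this contradicts the hypothesis, so $\II$ meets at most two marginal intervals of $\XS$. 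The main — and really the only — obstacle is the geometric claim, and its crucial ingredient is using non-containment within $\IS$ to control the order of \emph{both} endpoints of the $\IK_i$'s, which is what places $\IK_a \setminus \IK_b$ strictly to the left of $\IK_b$.
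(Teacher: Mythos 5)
Your proof is correct and follows the same route as the paper's: marginal intervals are disjoint, and hitting three of them forces the middle one's parent interval to sit inside $\II$, contradicting non-containment. The paper states this in one sentence; you have simply filled in the geometric detail (that $M_a$ lies strictly left of $\IK_b$'s left endpoint because non-containment makes the left- and right-endpoint orders coincide), which is exactly the implicit step the paper leaves to the reader.
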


\begin{proof}
    The marginal intervals of $\XS$ are disjoint. As such, an interval
    $\II$ intersecting three marginal intervals
    $\IM_1 \prec \IM_2 \prec \IM_3$, would have to contain the
    original interval of $\XS$ inducing $\IM_2$, which is impossible.
\end{proof}

\begin{lemma}
    \lemlab{odd}%
    For any non-negative integer $k$, we have
    $\mX{\GG_{2k+1}} \geq 3\voptX{2k+1}/4$.
\end{lemma}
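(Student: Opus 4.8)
The goal is to prove the odd-case bound $\mX{\GG_{2k+1}} \ge 3\voptX{2k+1}/4$. The natural approach mirrors the even case (\lemref{even}): split the greedy $(2k+1)$-cover into a first part and a second part, lower-bound the first part's coverage in terms of $\voptX{2k+1}$, and then apply a $1/2$-type bound to the greedy cover of the residual point set. The key difference is that $2k+1$ is odd, so the split is asymmetric: I would take $\GG' = \{\IG_1,\ldots,\IG_{k+1}\}$ (the first $k+1$ greedy intervals) and $\GG'' = \{\IG_{k+2},\ldots,\IG_{2k+1}\}$ (the remaining $k$ intervals). \lemref{onefour} is exactly the tool tailored to this split: it gives $\mX{\GG_{k+1}} = \mX{\GG'} \ge \voptX{2k+1}/2 + \mX{\IG_1}/4$, so the first part already carries more than half of the optimum plus a surplus proportional to $\mX{\IG_1}$.

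**Main steps.** First, set $\PS' = \PS \setminus \bigcup \GG'$, and observe (as in \lemref{even}) that $\GG''$ is precisely the greedy $k$-cover of $\PS'$, by the way the greedy algorithm proceeds. Second, bound the optimal $2k$-cover of $\PS'$ from below: since $\optX{2k+1}$ restricted to $\PS'$ covers at least $\voptX{2k+1} - \mX{\GG'}$ points using $2k+1$ intervals, and we can afford to drop one interval at a cost of at most $\mX{\IG_1}$ (the maximum single-interval coverage, by greediness), we get $\voptY{2k}{\PS'} \ge \voptX{2k+1} - \mX{\GG'} - \mX{\IG_1}$. Third, apply \lemref{half} to $\PS'$ with parameter $k$: $\mY{\GG''}{\PS'} \ge \voptY{2k}{\PS'}/2 \ge \bigl(\voptX{2k+1} - \mX{\GG'} - \mX{\IG_1}\bigr)/2$. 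Fourth, combine: $\mX{\GG_{2k+1}} = \mX{\GG'} + \mY{\GG''}{\PS'} \ge \mX{\GG'} + \bigl(\voptX{2k+1} - \mX{\GG'} - \mX{\IG_1}\bigr)/2 = \voptX{2k+1}/2 + \mX{\GG'}/2 - \mX{\IG_1}/2$. Now substitute the lower bound on $\mX{\GG'}$ from \lemref{onefour}: $\mX{\GG'}/2 \ge \voptX{2k+1}/4 + \mX{\IG_1}/8$, which after cancellation leaves $\mX{\GG_{2k+1}} \ge 3\voptX{2k+1}/4 + \mX{\IG_1}/8 - \mX{\IG_1}/2 = 3\voptX{2k+1}/4 - 3\mX{\IG_1}/8$ — which is the wrong direction, so the naive bookkeeping loses the $\mX{\IG_1}$ term.

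**The obstacle and the fix.** The difficulty — and I expect this to be the crux — is that dropping an interval from $\optX{2k+1}$ to pass to a $2k$-cover of $\PS'$ costs $\mX{\IG_1}$, and this exactly eats the surplus gained in \lemref{onefour}. The fix is to be smarter about which interval we charge, or to handle the residual point set $\PS' = \PS \setminus \bigcup\GG'$ directly rather than through a fresh optimum: the marginal-interval machinery (\HLink{Definition}{def:marginal:value} of marginal intervals, and \lemref{2marginal}) should let us argue that each interval of $\optX{2k+1}$ contributes, on $\PS'$, at most the coverage of its intersection with at most two marginal intervals of $\GG'$, so the loss in passing to $\PS'$ is bounded in aggregate rather than by a crude per-interval $\mX{\IG_1}$. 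Concretely, I would show $\voptY{2k}{\PS'} \ge \voptX{2k+1} - \mX{\GG'}$ up to a term that \lemref{onefour}'s slack of $\mX{\IG_1}/4$ already pays for — i.e., re-run the accounting so that the $k+1$ greedy intervals in $\GG'$ both cover a large fraction and, via \lemref{half} applied with the right parameter, leave a residual whose optimal $2k$-cover value plus $\mX{\GG'}$ recovers all but a $\le \mX{\IG_1}/2$-sized deficit against $\voptX{2k+1}$. Balancing the surplus $\mX{\IG_1}/4$ from \lemref{onefour} (which contributes $\mX{\IG_1}/8$ to the final sum after halving) against this deficit, together with the freedom to also invoke \lemref{first}(B) to treat $\{\IO_{q+1},\dots\}$ as an optimal cover of a residual, is what closes the gap to exactly $3/4$; getting this balancing to come out without losing a constant is the delicate part, and I expect it will require applying \lemref{onefour} not to $\GG_{k+1}$ on $\PS$ but to a carefully chosen sub-instance so that the $\mX{\IG_1}/4$ surplus is preserved through the division by $2$.
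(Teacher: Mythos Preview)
Your proposal has the right ingredients but assembles them in the wrong order, and this is why the bookkeeping refuses to close. You split as $\GG' = \{\IG_1,\ldots,\IG_{k+1}\}$ and $\GG'' = \{\IG_{k+2},\ldots,\IG_{2k+1}\}$, apply \lemref{onefour} to $\GG'$ on the original instance, and then try to use \lemref{half} on the residual. As you discovered, the surplus $\mX{\IG_1}/4$ produced by \lemref{onefour} is in terms of the global greedy-first interval, and there is no reason this quantity should match the deficit incurred when passing from a $(2k{+}1)$-cover to a $2k$-cover of the residual. You correctly sense that \lemref{2marginal} and a ``carefully chosen sub-instance'' are needed, but you have not identified them.

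The paper reverses the split: take $\GG_k = \{\IG_1,\ldots,\IG_k\}$ first and $\GG'' = \{\IG_{k+1},\ldots,\IG_{2k+1}\}$ second, and apply \lemref{onefour} to the \emph{residual} instance $\PS' = \PS \setminus \bigcup \GG_k$, not to the original. The point of taking only $k$ intervals first is a pigeonhole against $\optX{2k+1}$: by \lemref{2marginal}, the $k$ intervals of $\GG_k$ meet at most $2k$ of the $2k+1$ marginal intervals of $\optX{2k+1}$, so some marginal interval $\IM$ survives untouched in $\PS'$. This has two consequences. First, \lemref{half} gives $\mX{\GG_k} \ge \voptX{2k}/2 \ge (\voptX{2k+1} - \mX{\IM})/2$, a deficit of $\mX{\IM}/2$. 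Second, on $\PS'$ the greedy-first interval is $\IG_{k+1}$, and since $\IM \subseteq \PS'$ is still fully covered by its parent interval, $\mX{\IG_{k+1}} \ge \mX{\IM}$; hence \lemref{onefour} on $\PS'$ yields $\mY{\GG''}{\PS'} \ge (\voptX{2k+1} - \mX{\GG_k})/2 + \mX{\IM}/4$. Adding, the $+\mX{\IM}/4$ exactly cancels the $-\mX{\IM}/4$ left over after halving the deficit, and one lands on $3\voptX{2k+1}/4$ on the nose. The missing idea in your attempt is precisely this: the surplus from \lemref{onefour} must be manufactured on the residual, where it is tied to the same quantity $\mX{\IM}$ that measures the loss in the first half.
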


\begin{proof}
    If $k = 0$, the claim is immediate because
    $\mX{\GG_{1}} = \voptX{1}$.

    If $k \geq 1$, we separate the greedy $2k+1$ cover into two parts:
    $\GG_k = \{\IG_1,\ldots, \IG_k\}$ and
    $\GG'' = \{ \IG_{k+1},\ldots,\IG_{2k+1}\}$. By \lemref{2marginal},
    we have that $\IG_1, \ldots, \IG_k$ intersect at most $2k$
    marginal intervals of
    $\optX{2k+1}$. Let $\II$ be the interval of $\optX{2k+1}$, such
    that $\IG_1,\ldots, \IG_k$ do no overlap its marginal interval
    $\IM = \miY{\II_x}{\optX{2k+1}}$.
    
    By \lemref{half}, we have that
    \begin{equation}
        \eqlab{12a}
        \mX{\GG_k}%
        \geq%
        \frac{\voptX{2k}}{2}%
        \geq%
        \frac{\voptX{2k+1} - \mvY{\II}{\optX{2k+1}}}{2}
        =%
        \frac{\voptX{2k+1} - \mX{\IM}}{2}.
    \end{equation}

    Let $\PS' = \PS - \bigcup\GG_k$.  Since $\IM$ does not intersect
    any intervals in $\GG_k$, we have that the largest of the
    remaining intervals over $ \PS'$ is at least of size
    $\mvY{\II_x}{\optX{2k+1}}$. Thus by \lemref{onefour},
    \begin{align*}
      \beta%
      =%
      \mmX{\greedyA(\PS', k+1)\bigr.} %
      \geq%
      \frac{\voptY{2k+1}{\PS'}}{2} + \frac{\mX{\IG_{k+1}}}{4}
      \geq%
      \frac{\mY{\opt_{2k+1}}{\PS'}}{2} +
      \frac{\mX{\IM}}{4}
      \geq
      \frac{\voptX{2k+1} -
      \mX{\GG_k}}2 +
      \frac{\mX{\IM}}{4}.
    \end{align*}
    Thus, by \Eqref{12a}, we have
    \begin{equation*}
        \mX{\GG_{2k+1}}
        =%
        \mX{\GG_k} + \beta
        \geq%
        \frac{\voptX{2k+1}}2 + \frac{\mX{\GG_k}}{2}
        +\frac{\mX{\IM}}{4}
        \geq%
        \frac{\voptX{2k+1}}{2} + \frac{\voptX{2k+1} -
           \mX{\IM}}{4}
        +
        \frac{\mX{\IM}}{4}
        \geq
        \frac{3}{4} \voptX{2k+1}.
    \end{equation*}
\end{proof}

\subsection{The result}

\begin{theorem}
    The greedy algorithm for cover by intervals has competitive ratio
    at least $3/4$, for any prefix of the greedy permutation computed
    by the algorithm.
\end{theorem}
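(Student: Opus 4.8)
The plan is to observe that this theorem is an immediate corollary of the two parity cases already established. Let $\pi$ denote the greedy permutation, so that $\cup_{i=1}^k \II_{\pi(i)} = \cup\GG_k$ and hence $\crY{\pi}{k} = \mX{\GG_k}/\voptX{k}$ for every $k \geq 1$ (the ratio is well defined whenever $\voptX{k} > 0$, and the statement is vacuous when $\PS = \emptyset$). Thus it suffices to prove $\mX{\GG_k} \geq 3\voptX{k}/4$ for all $k \geq 1$ and then minimize over $k$.

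First I would dispatch the even prefixes: when $k$ is even, \lemref{even} gives $\mX{\GG_k} \geq 3\voptX{k}/4$ directly, so $\crY{\pi}{k} \geq 3/4$.

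Next I would dispatch the odd prefixes: when $k$ is odd, write $k = 2k'+1$ with $k'$ a non-negative integer; invoking \lemref{odd} with its parameter set to $k'$ yields $\mX{\GG_{2k'+1}} \geq 3\voptX{2k'+1}/4$, i.e. $\mX{\GG_k} \geq 3\voptX{k}/4$, so again $\crY{\pi}{k} \geq 3/4$. Combining the two cases, $\crY{\pi}{k} \geq 3/4$ for every $k \geq 1$, whence the overall competitive ratio satisfies $\crX{\pi} = \min_k \crY{\pi}{k} \geq 3/4$, which is the claim.

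There is no substantive obstacle remaining at this stage: all of the difficulty has already been absorbed into Lemmas~\lemref{half}, \lemref{onefour}, and \lemref{2marginal}, and the resulting Lemmas~\lemref{even} and \lemref{odd}. The only point to be careful about is the index bookkeeping in the odd case, since the parameter $k$ in \lemref{odd} indexes a prefix of length $2k+1$ rather than the prefix length itself, so one must substitute $k' = (k-1)/2$ when applying it. Equivalently, one could phrase the entire argument as the single statement ``$\mX{\GG_j} \geq 3\voptX{j}/4$ for all $j \geq 1$,'' proved by splitting on the parity of $j$ and citing \lemref{even} or \lemref{odd} accordingly.
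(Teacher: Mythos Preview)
Your proposal is correct and matches the paper's own proof essentially verbatim: the paper simply writes ``Combining \lemref{odd} and \lemref{even}, implies that for any $k$, we have $\mX{\GG_k} \geq (3/4)\voptX{k}$.'' Your additional remarks about index bookkeeping in the odd case and the vacuity when $\PS = \emptyset$ are fine elaborations but add nothing the paper felt the need to state.
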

\begin{proof}
    Combining \lemref{odd} and \lemref{even}, implies that for any
    $k$, we have $\mX{\GG_k} \geq (3/4)\voptX{k}$.
\end{proof}

\begin{lemma}
    The $3/4$ competitive ratio of the greedy algorithm is tight.
\end{lemma}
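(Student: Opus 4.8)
The plan is to exhibit, for every integer $m\ge 2$, an admissible instance $\IIS_m$ of interval cover on which the first two intervals of the greedy permutation cover only a $(3m+1)/(4m)$ fraction of the optimal $2$-cover. Since $(3m+1)/(4m)=3/4+1/(4m)$ tends to $3/4$ from above as $m\to\infty$, and since the theorem already guarantees $\mX{\GG_k}\ge (3/4)\voptX{k}$ for every $k$, such a family shows that the constant $3/4$ cannot be replaced by anything larger; i.e., it is tight.

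The construction is a ``tie-breaking trap'' for the first greedy pick. Place $4m+1$ points on the line, grouped from left to right into four consecutive blocks $P_L,P_1,P_2,P_R$ of $m$ points each, with a single extra point $p^{\ast}$ lying strictly between $P_1$ and $P_2$. Take three intervals: $\IL$ covering exactly $P_L\cup P_1$, a middle interval $\IM$ covering exactly $P_1\cup\{p^{\ast}\}\cup P_2$, and $\IR$ covering exactly $P_2\cup P_R$. The endpoints can be chosen distinct, no interval contains another, and no two intervals cover the same subset of points, so the instance obeys the paper's conventions (one may also pad with further dominated intervals without affecting anything). The design principle: $\IM$ is the unique largest interval, with $2m+1$ points (versus $2m$ for $\IL$ and $\IR$), so greedy grabs it first; but $\IM$ overlaps each of the two ``good'' intervals $\IL,\IR$ in a full half-block of $m$ points, so once $\IM$ is removed the residual point set is $P_L\cup P_R$ -- two far-apart blocks of $m$ points -- and greedy's second pick can reach only one of them.

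I would then verify the three quantities in order. Greedy: the first iteration selects $\IG_1=\IM$ since it is the unique interval of maximum size $2m+1$; in the second iteration the uncovered points are $P_L\cup P_R$, on which $\IL$ and $\IR$ each have marginal coverage $m$ and $\IM$ has $0$, so the leftmost-among-ties rule picks $\IG_2=\IL$, giving $\mX{\GG_2}=m+m+1+m=3m+1$. Optimum: the three candidate pairs cover $3m+1$ points ($\{\IL,\IM\}$), $4m$ points ($\{\IL,\IR\}$), and $3m+1$ points ($\{\IM,\IR\}$), so $\voptX{2}=4m$ -- here the hypothesis $m\ge 2$ is exactly what makes $4m>3m+1$. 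Hence $\mX{\GG_2}/\voptX{2}=(3m+1)/(4m)$; moreover the $k=1$ and $k=3$ competitive ratios of this instance are both $1$, so the overall greedy competitive ratio on $\IIS_m$ equals $(3m+1)/(4m)$, which decreases to $3/4$.

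The one point that must be handled carefully -- and essentially the only obstacle -- is the tie-breaking: if $\IM$ merely tied $\IL$ and $\IR$ in size, the leftmost rule would force greedy to take $\IL$ first and the instance would be perfectly coverable in two steps. The lone point $p^{\ast}$ is precisely what makes $\IM$ strictly dominant and channels greedy onto the wasteful middle interval; this is also why the ratio is $(3m+1)/(4m)$ rather than exactly $3/4$, as making $\IM$ strictly largest costs one point, which washes out in the limit. Everything else is routine counting.
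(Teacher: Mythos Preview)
Your construction is correct and is essentially the same three-interval example as the paper's: a middle interval that is one point (the paper uses $\eps$) larger than two flanking intervals, each of which it overlaps in half of its points, so greedy is forced to take the middle interval first and can only recover one side with its second pick. Your write-up is in fact more careful than the paper's---you work with discrete points in accordance with the stated measure, verify the tie-breaking explicitly, and check the ratios at $k=1$ and $k=3$---but the idea is identical.
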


\begin{proof}
    Consider three intervals $\II_1$, $\II_2$, $\II_3$, where
    $\mX{\II_1} = \mX{\II_2} = s$, and $\mX{\II_3} = s + \eps$. In
    this example, $\II_1$ and $\II_2$ are connected end-to-end. Let
    the connecting point of $\II_1$ and $\II_2$ be the median point of
    $\II_3$. As such, the greedy two cover would first include
    $\II_3$, then one of $\II_1$ and $\II_2$. The union size of greedy
    two cover is $\frac{3s + \eps}{2}$. The optimal two cover has a
    union size of $2s$. The competitive ratio is
    $\frac{3}{4} + \frac{\eps}{2s}$.
\end{proof}
\section{Diminishing returns do not hold for %
   optimal set cover}
\seclab{d:r:optimal}

\subsection{The construction}

Let $\uu > 1$ be some arbitrary integer.  In the following, pick some
arbitrary rational numbers $\alpha, \beta, \gamma \in (0,1)$, such
that $0 < \alpha < \beta < \gamma$,
\begin{equation*}
    \beta - \alpha < \gamma - \beta,
    \quad\text{ and }\quad%
    \frac{\gamma}{\uu+2} < \frac{\beta}{\uu+1} < \frac{\alpha}{\uu}.
\end{equation*}
We have a ground set $U$ -- this set is going to be a sufficiently
large finite set (more on that below).  For a set $X \subseteq U$, its
\emphi{measure} is
\begin{equation*}
    \mu( X) = \frac{|X|}{|U|}.
\end{equation*}

In the following we pick some sets from the ground set -- how exactly
we do that so that we have the desired properties listed below is
described in \secref{howto}.

We pick $\uu$ disjoint sets $B_1, \ldots, B_\uu$ from $U$, each one of
measure $\alpha/\uu$.

Next, we pick $\uu+1$ disjoint sets $C_1, \ldots, C_{\uu+1}$ each one
of measure $\beta/(\uu+1)$, such that for all $i,j$, we have
\begin{equation*}
    \mu( B_i \cap C_j) = \frac{ \alpha}{\uu(\uu+1)}.
\end{equation*}

Finally, we pick disjoint sets $D_1, \ldots, D_{\uu+2}$ each one of
measure $\gamma/(\uu+2)$, such that for all $i$ and $j$ we have
\begin{equation*}
    \mu( B_i \cap D_j) = \frac{ \alpha}{\uu(\uu+2)}.
\end{equation*}
Similarly, for all $i$ and $j$, we require that
\begin{equation*}
    \mu( C_i \cap D_j) = \frac{ \beta}{(\uu+1)(\uu+2)}.
\end{equation*}

\subsubsection{Realization in three dimensions}

The above construction can be realized in three dimensions using
axis-parallel cubes if one uses volume for measure.  So, assume
$\alpha =1$.  Consider the unit cube in three dimensions $[0,1]^3$. We
set
\begin{equation*}
    B_i = [(i-1)/\uu, i/\uu] \times [0,1]^2, \qquad \text{ for }
    i=1,\ldots, \uu.
\end{equation*}
That is, $B_1, \ldots, B_\uu$ slices the unit cube into equal boxes
along the $x$-axis.

Next, consider the enlarged cube
$[0,\beta] \times [0,1] \times [0,1]$. We set $y = 1/(\uu +1)$ and
\begin{equation*}
    C_i = [0,\beta] \times [(i-1)y,  i y] \times [0,1],
    \qquad \text{ for }
    i=1,\ldots, \uu+1.
\end{equation*}

This is illustrated in \figref{2d}.

\begin{figure}[h]
    \hfill%
    \includegraphics[page=1]{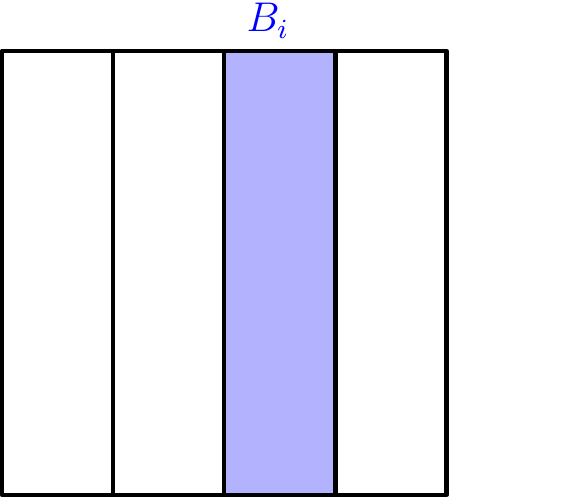}%
    \hfill%
    \includegraphics[page=2]{figs/2d_example}%
    \hfill\phantom{}

    \caption{}
    \figlab{2d}
\end{figure}

Finally, consider the cube
$[0,\beta] \times [0,\gamma/\beta] \times [0,1]$. We set
$z = 1 /(\uu+2)$ and
\begin{equation*}
    D_i = [0,\beta] \times [0,\gamma/\beta] \times
    [(i-1)z,
    i/z],
    \qquad \text{ for }
    i=1,\ldots, \uu+2.
\end{equation*}

It is easy to verify that this construction has the required
properties from above.

\subsection{Some properties}

Here is a list of some easy properties that the construction has:
\begin{compactenumI}
    \item For any $i$, we have that $\cup_j ( B_i \cap C_j) = B_i$.
    Indeed, since the $C_j$s are disjoint, we have
    \begin{equation*}
        \frac{\alpha}{k}%
        =%
        \mu( B_i ) %
        \geq %
        \sum_j \mu( B_i \cap C_j)%
        = %
        (\uu+1) \frac{\alpha}{\uu(\uu+1)}%
        =%
        \frac{\alpha}{k}.
    \end{equation*}
    
    \item $\cup_i B_i \subseteq \cup_j C_j$.

    \item Similarly, $\cup_j C_j \subseteq \cup_k D_k$.

\end{compactenumI}

\begin{lemma}
    The optimal cover by $\uu$ sets is $B_1, \ldots, B_{uu}$.
\end{lemma}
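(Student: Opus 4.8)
The plan is to prove the bound using nothing more than subadditivity of $\mu$, combined with the hypothesis $\frac{\gamma}{\uu+2} < \frac{\beta}{\uu+1} < \frac{\alpha}{\uu}$, which is exactly the statement that the $B_i$'s have the largest measure per set among the three types of sets. The intersection data ($\mu(B_i\cap C_j)$, and so on) is not needed for this lemma; it only becomes relevant when computing $\voptX{\uu+1}$ and $\voptX{\uu+2}$ later.

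First I would note the easy direction: the sets $B_1,\ldots,B_\uu$ are pairwise disjoint, so $\mu(B_1\cup\cdots\cup B_\uu)=\sum_{i=1}^\uu \mu(B_i)=\uu\cdot\frac{\alpha}{\uu}=\alpha$, giving $\voptX{\uu}\ge\alpha$.

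For the matching upper bound, let $\mathcal{S}$ be an arbitrary family of $\uu$ sets drawn from $\{B_i\}\cup\{C_j\}\cup\{D_k\}$, and let $a,b,c$ denote how many of them are $B$-, $C$-, and $D$-sets respectively, so $a+b+c=\uu$. By subadditivity of $\mu$,
\[
    \mu(\cup\mathcal{S}) \le a\cdot\frac{\alpha}{\uu} + b\cdot\frac{\beta}{\uu+1} + c\cdot\frac{\gamma}{\uu+2} \le (a+b+c)\cdot\frac{\alpha}{\uu} = \alpha ,
\]
where the second inequality uses $\frac{\beta}{\uu+1},\frac{\gamma}{\uu+2}\le\frac{\alpha}{\uu}$. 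Since these latter inequalities are strict, the overall bound is strict unless $b=c=0$; and as there are exactly $\uu$ sets $B_i$, the only family attaining the value $\alpha$ is $\{B_1,\ldots,B_\uu\}$ itself. Hence $\{B_1,\ldots,B_\uu\}$ is the unique optimal $\uu$-cover and $\voptX{\uu}=\alpha$.

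The argument is short, so the only place to be careful is bookkeeping the inequalities: getting the direction right (we want an \emph{upper} bound on a union, hence plain subadditivity rather than inclusion--exclusion), and checking that the ratio hypothesis is applied with the correct sign, so that swapping any $B_i$ for a $C_j$ or $D_k$ can only lose measure. I would also make explicit, for use later in \secref{d:r:optimal}, that this establishes the value $\voptX{\uu}=\alpha$, and not merely the identity of an optimizer.
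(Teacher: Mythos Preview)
Your proof is correct and follows essentially the same approach as the paper: both rely on the fact that every $C_j$ and $D_k$ has strictly smaller measure than each $B_i$, together with subadditivity of $\mu$ and the disjointness of the $B_i$'s. Your version is simply more explicit, spelling out the subadditivity bound and the uniqueness conclusion that the paper leaves implicit.
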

\begin{proof}
    Indeed, the sets $C_1, \ldots, C_{\uu+1}, D_1, \ldots, D_{\uu+2}$
    are smaller than $B_i$, for all $i$. Furthermore, the sets
    $B_1, \ldots, B_\uu$ are disjoint, which implies that it is indeed
    the largest possible cover by $\uu$ sets.
\end{proof}

\begin{lemma}
    The optimal cover by $\uu+1$ sets is $C_1, \ldots, C_{uu +1}$.
\end{lemma}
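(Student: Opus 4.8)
The plan is to turn the statement into a small finite maximization. The available sets are exactly $B_1,\dots,B_\uu$, $C_1,\dots,C_{\uu+1}$, $D_1,\dots,D_{\uu+2}$, so any cover by $\uu+1$ sets is a subfamily of $\Family$ consisting of that many of them, and since the $C_j$'s are pairwise disjoint the particular subfamily $\{C_1,\dots,C_{\uu+1}\}$ has measure exactly $\beta$. Thus it suffices to show that no $(\uu+1)$-subfamily has measure exceeding $\beta$. The construction can be taken symmetric in each of the three groups (automatically so in the three-dimensional realization): every $B_i$ meets every $C_j$ in the same measure, every $B_i$ meets every $D_k$ in the same measure, and every $C_j$ meets every $D_k$ in the same measure. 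Consequently the measure of the union of a subfamily depends only on the triple $(a,b,c)$ recording how many $B$-, $C$-, and $D$-sets it uses, with $a+b+c=\uu+1$ and $0\le a\le\uu$, $b,c\ge 0$; the whole question is to maximize over these finitely many triples.

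To evaluate a triple, I would account for marginal contributions in the order ``$D$'s first, then $C$'s, then $B$'s'', using the nesting $\cup_iB_i\subseteq\cup_jC_j\subseteq\cup_kD_k$ together with the facts that each $B_i$ splits (up to null sets) into $\uu+1$ equal parts, one inside each $C_j$, and that each $C_j$ and each $B_i$ splits into $\uu+2$ equal parts, one inside each $D_k$ -- all consequences of the prescribed intersection measures. The $c$ chosen $D$-sets cover $c\gamma/(\uu+2)$; each chosen $C_j$ then adds $\frac{\beta}{\uu+1}\bigl(1-\frac{c}{\uu+2}\bigr)$ (it meets the chosen $D$'s in measure $c\frac{\beta}{(\uu+1)(\uu+2)}$ and is disjoint from the other chosen $C$'s); each chosen $B_i$ adds at most $\frac{\alpha}{\uu}\bigl(1-\frac{b}{\uu+1}\bigr)$ (it has already lost measure $b\frac{\alpha}{\uu(\uu+1)}$ to the chosen $C$'s). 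This yields an upper bound $M(a,b,c)$, equal to $\beta$ at $(0,\uu+1,0)$, which is convex in $b$ for fixed $c$ and affine in $c$ for fixed $b$; hence its maximum over the admissible polytope is attained at a vertex. On the part where at least one $C$ is used ($b\ge 1$) the vertices are $(0,\uu+1,0)$, $(\uu,1,0)$, and $(0,1,\uu)$, with $M$-values $\beta$, $\frac{\uu\alpha+\beta}{\uu+1}$, and $\frac{\uu\gamma}{\uu+2}+\frac{2\beta}{(\uu+1)(\uu+2)}$; the last two are strictly below $\beta$, using respectively $\alpha<\beta$ and $\frac{\gamma}{\uu+2}<\frac{\beta}{\uu+1}$.

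This leaves the subfamilies that use no $C$ at all ($b=0$, $a+c=\uu+1$, $c\ge 1$). Here I would use the exact measure $\frac{a\alpha}{\uu}\bigl(1-\frac{c}{\uu+2}\bigr)+\frac{c\gamma}{\uu+2}$, which is convex in $a$, so over $a\in\{0,\dots,\uu\}$ it is maximized at $a=0$ or $a=\uu$. The endpoint $a=0$ is the all-$D$ family, of measure $(\uu+1)\gamma/(\uu+2)<\beta$. The endpoint $a=\uu$ -- all $\uu$ of the (individually largest, pairwise disjoint) sets $B_i$ together with a single $D_k$ -- has measure $\alpha+\frac{\gamma-\alpha}{\uu+2}$, and this is the step I expect to be the crux and to use the full strength of the hypotheses: bounding it by $\beta$ is equivalent to the inequality $\gamma-\beta\le(\uu+1)(\beta-\alpha)$, so one needs the precise quantitative relation between the gaps $\gamma-\beta$ and $\beta-\alpha$ secured by the choice of $\alpha,\beta,\gamma$ (the soft inequality $\beta-\alpha<\gamma-\beta$ alone is not what does the work), and this is the comparison I would scrutinize most carefully. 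Once it is in place, the case analysis above shows that every $(\uu+1)$-subfamily has measure at most $\beta$, with equality only for $\{C_1,\dots,C_{\uu+1}\}$, which is therefore the optimal $(\uu+1)$-cover.
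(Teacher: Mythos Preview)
Your approach differs markedly from the paper's. The paper's proof is short: it observes that each $D_k$ has smaller measure than each $C_j$ and, on that basis alone, asserts that ``any optimal cover by $\uu+1$ sets can involve only $B$s and $C$s''; it then checks (correctly) that among $B$/$C$-only families the all-$C$ family wins. You instead carry out a full case analysis over all triples $(a,b,c)$, and you are right to do so---the paper's dismissal of the $D$-sets is not justified by the size comparison alone.

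Your instinct about the crux is exactly right, and in fact your scrutiny uncovers a genuine gap. The inequality $\gamma-\beta\le(\uu+1)(\beta-\alpha)$ that you correctly identify as equivalent to $\mu(B_1\cup\cdots\cup B_\uu\cup D_1)\le\beta$ is \emph{not} a consequence of the paper's stated hypotheses $0<\alpha<\beta<\gamma$, $\beta-\alpha<\gamma-\beta$, and $\gamma/(\uu+2)<\beta/(\uu+1)<\alpha/\uu$. Concretely, take $\uu=2$, $\alpha=0.69$, $\beta=0.7$, $\gamma=0.8$: all hypotheses hold ($0.2<7/30<0.345$ and $0.01<0.1$), yet $\gamma-\beta=0.1>0.03=(\uu+1)(\beta-\alpha)$. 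For these parameters
\[
\mu(B_1\cup B_2\cup D_1)=\alpha+\frac{\gamma-\alpha}{\uu+2}=0.69+\frac{0.11}{4}=0.7175>0.7=\beta,
\]
so $\{C_1,C_2,C_3\}$ is \emph{not} the optimal $3$-cover and the lemma is false as stated.

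Thus the step you flagged cannot be completed from the given constraints; the paper's proof has a real hole at the point where it discards the $D$'s, and your more careful route exposes it. The fix is not analytic but a strengthening of the standing assumptions: adding $\gamma-\beta<(\uu+1)(\beta-\alpha)$ to the list of constraints (which is compatible with all the others---e.g.\ $\uu=2$, $\alpha=0.6$, $\beta=0.7$, $\gamma=0.9$ works) makes your argument go through verbatim and leaves the paper's intended conclusion about non-monotone marginal profits intact.
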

\begin{proof}
    Indeed, the set $D_k$ is smaller than $C_j$, for all $j$ and
    $k$. Furthermore, the sets $C_1, \ldots, C_{\uu+1}$ are disjoint,
    which implies that any optimal cover by $\uu+1$ sets can involve
    only $B$s and $C$s.

    Consider a set $C_j$, and observe that
    \begin{equation*}
        \mu( C_j \setminus \cup_i B_i)
        =%
        \frac{\beta}{\uu+1} - \sum_{i=1}^\uu \mu( B_i \cap C_j)
        =%
        \frac{\beta}{\uu+1}  - \uu \frac{\alpha}{\uu(\uu+1)}
        =%
        \frac{\beta - \alpha }{ \uu+1}
        >%
        0,
    \end{equation*}
    since $\beta > \alpha$. This implies that each $C_j$ contains
    elements that are not in any of the $B_i$s.
    
    As such, any cover by $\uu+1$ sets (made out of $B_i$s and $C_j$s)
    that does not include all sets of $C_1, \ldots, C_{\uu+1}$, must
    fail to cover some element in $\cup_j C_j$. This implies that
    $\cup_j C_j$ is an optimal cover.
\end{proof}

\begin{lemma}
    The optimal cover by $\uu+2$ sets is $D_1, \ldots, D_{uu+2}$.
\end{lemma}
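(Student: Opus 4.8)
The plan is to follow the pattern of the two preceding lemmas, but to exploit the fact that $\uu+2$ sets already suffice to realize the global optimum of the whole construction. First I would observe that, by the containments $\cup_i B_i \subseteq \cup_j C_j \subseteq \cup_k D_k$ established above, the union of \emph{all} the sets of the construction, namely $B_1,\ldots,B_\uu$, $C_1,\ldots,C_{\uu+1}$ and $D_1,\ldots,D_{\uu+2}$, is exactly $\cup_k D_k$. Since the $D_k$ are pairwise disjoint and each has measure $\gamma/(\uu+2)$, we get $\mu\bigl(\cup_k D_k\bigr) = (\uu+2)\cdot\frac{\gamma}{\uu+2} = \gamma$. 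Hence the union of \emph{any} subfamily of the construction is contained in $\cup_k D_k$ and so has measure at most $\gamma$; in particular no cover of size $\uu+2$ has measure exceeding $\gamma$.

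Next I would note that $\{D_1,\ldots,D_{\uu+2}\}$ is itself a family of exactly $\uu+2$ sets whose union has measure precisely $\gamma$, so it attains the above bound and is therefore an optimal $(\uu+2)$-cover. To see that it is moreover the \emph{unique} optimal one, I would show each $D_k$ carries private elements: using $\cup_i B_i \subseteq \cup_j C_j$ and then the disjointness of the $C_j$,
\begin{equation*}
    \mu\bigl( D_k \cap \bigl(\cup_i B_i \cup \cup_j C_j\bigr)\bigr)
    =
    \mu\bigl( D_k \cap \cup_j C_j\bigr)
    =
    \sum_{j=1}^{\uu+1} \mu\bigl(D_k \cap C_j\bigr)
    =
    (\uu+1)\cdot\frac{\beta}{(\uu+1)(\uu+2)}
    =
    \frac{\beta}{\uu+2},
\end{equation*}
whence $\mu\bigl(D_k \setminus (\cup_i B_i \cup \cup_j C_j)\bigr) = \frac{\gamma}{\uu+2} - \frac{\beta}{\uu+2} = \frac{\gamma - \beta}{\uu+2} > 0$, since $\gamma > \beta$. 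Thus each $D_k$ owns an element lying in no $B_i$ and no $C_j$, so any subfamily of measure $\gamma$ must include every $D_k$; with only $\uu+2$ sets there is then no room for anything else, and $\{D_1,\ldots,D_{\uu+2}\}$ is the unique optimal $(\uu+2)$-cover.

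I do not anticipate a genuine obstacle here. The one point requiring a little care is the inclusion--exclusion bookkeeping for $\mu\bigl(D_k \cap (\cup_i B_i \cup \cup_j C_j)\bigr)$, which collapses cleanly once $\cup_i B_i$ is absorbed into $\cup_j C_j$ and the disjointness of the $C_j$ is invoked. Conceptually, the only new ingredient relative to the $(\uu+1)$-set case is the observation that $\uu+2$ sets already let one take all the $D$'s and thereby cover the entire ground configuration, which is why the matching upper bound is immediate here rather than needing a separate exchange argument. Together with the previous two lemmas this fixes $\voptX{\uu}=\alpha$, $\voptX{\uu+1}=\beta$, $\voptX{\uu+2}=\gamma$, so that $\profitX{\uu+2}=\gamma-\beta > \beta-\alpha = \profitX{\uu+1}$, violating diminishing returns.
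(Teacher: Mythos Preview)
Your proposal is correct and follows essentially the same approach as the paper: both compute $\mu\bigl(D_k \setminus \cup_j C_j\bigr) = (\gamma-\beta)/(\uu+2) > 0$ to show every $D_k$ owns private elements, and then conclude that any other $(\uu+2)$-cover, being contained in $\cup_k D_k$, must have strictly smaller measure. Your write-up is slightly more explicit in first pinning down $\gamma$ as a global upper bound on any cover, but the substance is identical.
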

\begin{proof}
    Consider a set $D_k$, and observe that
    \begin{equation*}
        \mu( D_k \setminus \cup_j C_j)
        =%
        \frac{\gamma}{\uu+2} - \sum_{j=1}^{\uu+1} \mu( C_j \cap D_k)
        =%
        \frac{\gamma}{\uu+2}  - (\uu+1) \frac{\beta}{(\uu+1)(\uu+2)}
        =%
        \frac{\gamma - \beta }{ \uu+2}
        >%
        0,
    \end{equation*}
    since $\gamma > \beta$. This implies that each $D_k$ contains
    elements that are not in any of the $C_j$s (and thus also elements
    not covered by any of the $B_i$s). As such, any other cover by
    $\uu+2$ sets fails to cover some element of $\cup_k D_k$, which
    being contained in this union, thus implying the claim.
\end{proof}

\subsection{How to pick the sets exactly}
\seclab{howto}

One can explicitly describe how to pick the sets, but we instead are
going to use an existential argument that is easier to see. Pick $n$
to be a sufficiently large, such that $\alpha n, \beta n, \gamma n$
are all integer numbers divisible by $\uu, \uu+1$ and $\uu+2$.  Let
$U =\{ 1,\ldots, n\}$.  Let $b = \alpha n / \uu$, and we set
$B_i = \{ (i-1)b + 1, \ldots, i b -1 \}$, for $i=1,\ldots, \uu$.

Next, we random assign each element of $\{1,\ldots, \beta n \}$ to
$C_1, \ldots, C_{\uu+1}$ with probability $1/(\uu+1)$. Standard
application of Chernoff's inequality implies that
\begin{equation*}
    \Prob{ \left| \mu(  C_j) - \frac{\beta}{\uu+1}  \right|
       > \sqrt{ \frac{c \log n}{n} } }%
    <%
    \frac{1}{n^{O(1)}}.       
\end{equation*}

Namely, by picking $n$ to be sufficiently large enough, we can assume
the measure of the $C_j$s are arbitrarily close to the desired
measure. Furthermore, using the same argumentation we have
\begin{equation*}
    \Prob{ \left| \mu( B_i \cap C_j) - \frac{1}{\uu(\uu+1)}  \right|
       > \sqrt{ \frac{c \log n}{n} } }%
    <%
    \frac{1}{n^{O(1)}}.       
\end{equation*}
Finally, we chose the sets $D_1, \ldots, D_{\uu+2}$, by assigning each
element of $\{1,\ldots, \gamma n \}$ to one of these sets with equal
probability. Again, a Chernoff type argument implies that all the
desired measures hold as desired within additive error that is
arbitrarily small. Picking all these additive errors to be smaller
than $(\beta-\alpha)/(\uu+2)^{10}$ (say), implies that the above
example implies the desired properties.

\section{Algorithms}

The input is a set of $n$ points $\PS \subseteq \Re$, and a set of $m$
intervals $\IS$. For simplicity of exposition, we assume that
$m = O(n)$. Here, the measure of the coverage is
$\mY{\II}{\PS} = \cardin{ \PS \cap \II}$.

\subsection{Dynamic programming algorithm}

\newcommand{\OptY}[2]{\mathrm{opt}\pth{#1,#2}}
\newcommand{\OptInY}[2]{\mathrm{opt}_{\in}\pth{#1, #2}}

The task at hand is to compute (exactly) the cover by $k$ intervals of
$\IS$ that maximizes the coverage

The algorithm starts by sorting, in $O(n\log n)$ time, all the
intervals in $\IS$ in non-decreasing order of their left endpoint.
For an interval $\II$.

\begin{defn}
    For an interval $\II$, let $\IIL$ be the first interval
    intersecting $\II$ in this sorted order.  Let $\predX{\II}$ be the
    predecessor of $\II$ in this order -- it is the interval
    immediately to the left of $\II$.
\end{defn}

Let $\IS_{\leq \II}$ be all the intervals of $\IS$ that are before
$\II$ in this order (including $\II$).  Let $\OptY{\II}{t}$ to be the
optimal cover by $t$ intervals of $\IS_{\leq \II}$.  Similarly, let
$\OptInY{\II}{t}$ to be the optimal cover by $t$ intervals of
$\IS_{\leq \II}$ that must contain $\II$ in the cover.

\begin{observation}    
    Consider an interval $\II \in \IS$, and the cover
    $\opt = \OptInY{\II}{t}$. Then, if there is an interval
    $\II^+ \in \opt$ to the left of $\II$ that intersects $\II$, then
    $\opt - \II^+ + \IIL$ is an equivalent solution providing the same
    coverage, as $\II^+ \subseteq \IIL \cup \II$.
\end{observation}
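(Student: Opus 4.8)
The plan is to prove the point‑set inclusion $\II^+ \subseteq \IIL \cup \II$ by a short argument about the four relevant endpoints, and then read off that swapping $\II^+$ for $\IIL$ keeps the solution feasible for the subproblem $\OptInY{\II}{t}$ and does not decrease its coverage; optimality of $\opt$ then forces the coverage to be unchanged.

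First I would pin down the geometry. Since $\II^+ \prec \II$ and $\II^+$ intersects $\II$, and since no input interval is contained in another, $\II^+$ must cover the left endpoint of $\II$ while its own right endpoint lies strictly inside $\II$; writing $\II = [a,b]$ and $\II^+ = [a^+, b^+]$, this says $a^+ < a \le b^+ < b$. Now $\II^+$ is itself an interval that intersects $\II$ and lies to the left of $\II$, so the leftmost interval intersecting $\II$ satisfies $\IIL \preceq \II^+ \prec \II$; applying the same containment‑free reasoning to $\IIL = [a', b']$ gives $a' \le a^+ < a \le b' < b$. Therefore the part of $\II^+$ to the left of $a$ lies in $[a', b'] = \IIL$, and the part of $\II^+$ in $[a, b^+]$ lies in $\II$, so $\II^+ \subseteq \IIL \cup \II$, which is exactly the claim flagged in the observation.

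Then I would assemble the replacement. Put $\opt' = \opt - \II^+ + \IIL$. Because $a' < a$, the interval $\IIL$ precedes $\II$ in the left‑endpoint order, so $\IIL \in \IS_{\leq \II}$ and $\IIL \neq \II$; also $\II \in \opt - \II^+ \subseteq \opt'$, so $\opt'$ is a legitimate candidate for $\OptInY{\II}{t}$ (of size $t$ when $\IIL \notin \opt$, and of size $t-1$ otherwise, in which case one pads with an arbitrary further interval of $\IS_{\leq \II}$). Every point covered by the removed interval $\II^+$ is covered by $\IIL \cup \II$, and both $\IIL$ and $\II$ lie in $\opt'$; hence $\cup \opt \subseteq \cup \opt'$ and $\mX{\opt'} \geq \mX{\opt}$. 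Since $\opt$ is optimal among such covers, equality holds, so $\opt'$ is also optimal — an equivalent solution with the same coverage.

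There is no real obstacle here: the whole argument is careful bookkeeping with the endpoints $a', a^+, a, b^+, b$. The only points requiring a moment's care are (i) confirming that ``the first interval intersecting $\II$'' is genuinely strictly to the left of $\II$ in this situation, which is guaranteed by the witness $\II^+$ (so $\IIL \neq \II$ and $\IIL \in \IS_{\leq \II}$), and (ii) the degenerate case $\IIL \in \opt$, where the swap drops the interval count by one and one simply pads back up to size $t$ without losing coverage.
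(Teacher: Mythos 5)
The paper leaves this observation unproved, treating the containment $\II^+ \subseteq \IIL \cup \II$ as self-evident, so there is no proof to compare against; your argument is a correct and complete fill-in. The endpoint chase ($a' \le a^+ < a \le b' < b$ together with $b^+ < b$, all forced by $\IIL \preceq \II^+ \prec \II$, the no-nesting assumption, and the fact that $\IIL$ and $\II^+$ both intersect $\II$) is the natural way to extract the containment, and the appeal to optimality of $\opt$ to upgrade ``$\opt'$ covers at least as much'' to ``same coverage'' is standard. You are also right to flag the degenerate case $\IIL \in \opt$: the set $\opt - \II^+ + \IIL$ then has only $t-1$ intervals, a point the paper's phrasing elides, and padding back to size $t$ with any remaining interval of $\IS_{\le\II}$ (even $\II^+$ itself) is a sound repair. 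A marginally slicker route to the containment, if you want one: since $\IIL$ and $\II$ intersect, $\IIL \cup \II$ is a single interval whose left endpoint is that of $\IIL$ and whose right endpoint is that of $\II$; because left- and right-endpoint orders coincide and $\IIL \preceq \II^+ \prec \II$, both endpoints of $\II^+$ lie inside this interval, giving $\II^+ \subseteq \IIL \cup \II$ immediately.
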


Note that $\IIL$ can be precomputed, in linear time, for each interval
$\II \in \IS$ by linear scanning (after the sorting).

When our algorithm is scanning from first interval to the last, it
would monotonically discard all the intervals to the left of current
$\IIL$.

After sorting and pre-processing comes to the dynamic programming
algorithm. Let $\dpInY{\II}{t} = \mX{\OptInY{\II}{t}}$ and
$\dpY{\II}{t} = \mX{\OptY{\II}{t}}$. We get the following recursive
definitions of these two quantities
\begin{align*}
  \dpInY{\II}{t}%
  &=%
    \max
    \begin{cases}
        \mX{\II} & t= 1\\%
        \dpInY{\IIL}{t-1}+\mX{\II \setminus \IIL} \qquad &t > 1 \text{
           and } \IIL \text{ is defined}
        \\
        \dpY{\predX{\IIL}}{t-1\bigr.} + \mX{\II} &t > 1 \text{ and
        }\predX{\IIL}\text{ is defined,}
    \end{cases}
  \\  
  \text{and } \qquad 
  \dpY{\II}{t}%
  &=%
    \max
    \begin{cases}
        \dpInY{\II}{t} \\
        \dpY{\predX{\II}}{t}.
    \end{cases}
\end{align*}

\begin{theorem}
    \thmlab{d:p:k:cover}%
    Given a set $\PS$ of $\nP$ points on the line, a set $\IS$ of
    $\nI$ intervals, and a parameter $k$, one can compute the optimal
    $k$-cover of $\PS$ by $k$ intervals of $\IS$, in $O(\nI k)$ time,
    assuming the intervals and points are presorted. Otherwise, the
    running time is $O(\nI k + (\nI+\nP) \log (\nI+\nP))$.

    Furthermore, the execution of this algorithm can be resumed, to
    compute the optimal $k+1$-cover in additional $O(\nI)$ time, and
    this can be done repeatedly, to compute an optimal $t$-cover, for
    any $t > k$, in $O((t-k)\nI)$ additional time.
\end{theorem}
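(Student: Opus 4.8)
The plan is to verify the correctness of the recurrences, then bound the running time, and finally argue the resumption property. First I would establish correctness of $\dpInY{\II}{t}$ by induction on $t$ and on the position of $\II$ in the sorted order. The base case $t=1$ is immediate: the best single-interval cover containing $\II$ is just $\II$ itself, with coverage $\mX{\II}$. For $t>1$, consider the optimal cover $\opt = \OptInY{\II}{t}$. Using the observation that precedes the recurrence, we may assume that among the intervals of $\opt$ that lie to the left of $\II$ and intersect it, at most one is present, and it can be taken to be $\IIL$; this is because any such interval $\II^+$ satisfies $\II^+ \subseteq \IIL \cup \II$, so it can be swapped for $\IIL$ without losing coverage. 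So either (a) $\IIL \in \opt$, in which case $\opt - \II$ is an optimal $(t-1)$-cover of $\IS_{\le \IIL}$ containing $\IIL$ — contributing $\dpInY{\IIL}{t-1}$ — and $\II$ adds $\mX{\II \setminus \IIL}$ new points; or (b) no interval of $\opt$ to the left of $\II$ intersects $\II$, so all the other $t-1$ intervals lie entirely to the left of $\predX{\IIL}$ (inclusive), giving $\dpY{\predX{\IIL}}{t-1}$, and $\II$ contributes its full $\mX{\II}$ since nothing else covers any of its points. Taking the maximum over these cases — and noting that each case is itself achievable, so the recurrence is exact — completes the inductive step. The recurrence for $\dpY{\II}{t}$ is then trivial: the optimal $t$-cover of $\IS_{\le \II}$ either uses $\II$ (value $\dpInY{\II}{t}$) or does not (value $\dpY{\predX{\II}}{t}$).

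For the running time, note that the table has $O(\nI k)$ entries, and each entry is computed in $O(1)$ time given the precomputed values $\IIL$ and $\predX{\cdot}$ for every interval, plus the quantities $\mX{\II}$ and $\mX{\II \setminus \IIL}$. The pointers $\IIL$ and $\predX{\II}$ are computed by a single linear scan after sorting, using the monotone-discard idea mentioned in the text; the values $\mX{\II}$ and $\mX{\II\setminus\IIL}$ are likewise obtained by a linear sweep over the sorted points (maintaining, for the current interval, how many points it contains and how many lie beyond the right endpoint of $\IIL$). Hence, once everything is presorted, the total time is $O(\nI k)$. If the input is not presorted, sorting the $\nI$ intervals and $\nP$ points costs $O((\nI + \nP)\log(\nI+\nP))$, which we add.

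Finally, for the resumption property: the recurrences for column $t$ depend only on columns $t-1$ and on previously-computed entries in column $t$ (via $\predX{\II}$). So after the table has been filled through column $k$, computing column $k+1$ requires only one more pass over the $\nI$ intervals, each entry taking $O(1)$ time — a total of $O(\nI)$ additional time. Iterating this $t-k$ times yields the optimal $t$-cover in $O((t-k)\nI)$ extra time, as claimed. The main obstacle, I expect, is the correctness argument for the $\dpInY{\II}{t}$ recurrence — specifically, being careful in case (b) that ``no interval to the left of $\II$ intersects $\II$'' genuinely forces the other $t-1$ intervals into $\IS_{\le \predX{\IIL}}$ rather than merely $\IS_{\le \predX{\II}}$, and conversely that the $\dpY{\predX{\IIL}}{t-1}$ solution can always be extended by $\II$ without any overlap double-counting. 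Everything else is routine bookkeeping.
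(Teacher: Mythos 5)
Your proof is correct and takes essentially the same approach as the paper, which simply asserts that the stated recurrences together with memoization (after an $O(\nP+\nI)$ sweep to precompute $\mX{\II}$ and $\mX{\II\setminus\IIL}$) yield the result; your write-up fills in the correctness argument for the $\dpInY{\II}{t}$ recurrence that the paper leaves implicit. One small point of care: when reducing case (a) to ``exactly $\IIL$ intersects $\II$ among earlier intervals,'' the swap $\II^+ \mapsto \IIL$ handles a single such $\II^+$, but if several earlier intervals of $\opt$ intersect $\II$ you should note that after one swap the others become redundant (each is contained in $\IIL \cup \II$) and may be exchanged for arbitrary unused intervals in $\IS_{\leq \predX{\IIL}}$ without loss, so the normal form with at most one such interval is still achievable.
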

\begin{proof}
    Observe that, by sweeping from left to right, one can compute for
    all intervals $\II \in \IS$, the quantities
    $\mX{\II} = \cardin{ \PS \cap \II}$ and
    $\mX{\II \setminus \IIL} = \cardin{\PS \cap \pth{\II \setminus
          \IIL}\bigr.}$. This takes $O(\nP+\nI)$ time. The algorithm
    then follows by using dynamic programming using the recursive
    formals provided above together with memoization.
\end{proof}

\subsection{An approximate algorithm for maximum $k$-coverage}

\subsubsection{Merging optimal solutions from disjoint instances.}

Given two instances of interval cover $\IIS=(\PS,\IS)$ and
$\IIS' = (\PS', \IS')$ that are disjoint (that is
$(\cup \IS) \cap (\cup \IS') = \emptyset$), consider computing the
optimal marginal profits  for the two instances. That is, for all $i$, let
\begin{equation*}
    \profitCA_{i} = \profitY{i}{\IIS}
    \qquad\text{ and }\qquad%
    \profitCA_{i}' = \profitY{i}{\IIS'}.
\end{equation*}
see \defref{marginal:profit}. By \thmref{diminishing:returns} the
sequences $\Pi \equiv \profitCA_{1}, \profitCA_{2},\ldots$ and
$\Pi' \equiv \profitCA_{1}', \profitCA_{2}',\ldots$ are
non-increasing.  As a reminder, the value of the optimal $k$-cover of
$\IIS$, is the sum of the first $k$ elements in $\Pi$.

Next, consider the merged instance
$\IJS = (\PS \cup \PS', \IS \cup \IS')$, and the sorted non-decreasing
sequence $\alpha_1 \geq \alpha_2 \geq \cdots$ formed by merging $\Pi$
and $\Pi'$.

\begin{claim}
    \clmlab{merge}%
    For any $i$, we have that $\alpha_i = \profitX{i}$, where
    $\profitX{i} = \profitY{i}{\IJS}$.
\end{claim}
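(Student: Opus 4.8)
The plan is to reduce the statement to a purely combinatorial fact about merging two non-increasing sequences. First I would record that, since the two instances are \emph{disjoint}, coverage is additive: any $k$-cover $\XS$ of $\IJS$ splits as $\XS = \XS_1 \cup \XS_2$ with $\XS_1 = \XS \cap \IS$ and $\XS_2 = \XS \cap \IS'$, and because $\cup\IS$ and $\cup\IS'$ are disjoint we get $\mX{\XS} = \mX{\XS_1} + \mX{\XS_2}$, where each summand equals the measure of $\XS_j$ inside its own instance. Conversely, the union of an optimal $a$-cover of $\IIS$ with an optimal $b$-cover of $\IIS'$ is a valid $(a+b)$-cover of $\IJS$. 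Together with $\voptY{0}{\cdot}=0$ this yields
\[
    \voptY{k}{\IJS} \;=\; \max_{a+b=k,\; a,b\ge 0}\pth{\; \voptY{a}{\IIS} + \voptY{b}{\IIS'} \;}.
\]

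Next I would rewrite each optimal value as a prefix sum of marginal profits (see \defref{marginal:profit}): $\voptY{a}{\IIS} = \sum_{i=1}^a \profitCA_i$ and $\voptY{b}{\IIS'} = \sum_{i=1}^b \profitCA_i'$, so that $\voptY{k}{\IJS} = \max_{a+b=k}\pth{\sum_{i=1}^a \profitCA_i + \sum_{i=1}^b \profitCA_i'}$. The crux is then an elementary exchange lemma: since $\Pi$ and $\Pi'$ are non-increasing by \thmref{diminishing:returns}, the $k$ largest elements of the merged multiset $\Pi \cup \Pi'$ can always be taken to be a prefix of $\Pi$ together with a prefix of $\Pi'$. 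Hence $\max_{a+b=k}\pth{\sum_{i=1}^a \profitCA_i + \sum_{i=1}^b \profitCA_i'} = \sum_{i=1}^k \alpha_i$, the sum of the $k$ largest merged values, and therefore $\voptY{k}{\IJS} = \sum_{i=1}^k \alpha_i$ for every $k$.

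Finally, differencing consecutive values gives
\[
    \profitX{k} \;=\; \voptY{k}{\IJS} - \voptY{k-1}{\IJS} \;=\; \sum_{i=1}^k \alpha_i - \sum_{i=1}^{k-1}\alpha_i \;=\; \alpha_k,
\]
which is exactly the claim.

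I expect the main obstacle to be the exchange lemma, specifically making precise why monotonicity of both sequences lets one assume the optimal split is ``prefix plus prefix'': in any selection of $a$ entries of $\Pi$, replacing them by the first $a$ entries does not decrease the sum (those are the $a$ largest), and symmetrically for $\Pi'$, so any choice of $k$ merged values can be massaged into this form without loss. This is precisely where the non-increasing property is indispensable — without it the identity fails, which is the point of the diminishing-returns discussion. The additivity step is routine given disjointness, modulo the harmless normalization that each point set lies in the union of its own intervals (otherwise uncovered points can be discarded from each instance).
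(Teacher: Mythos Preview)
Your proposal is correct and follows essentially the same route as the paper: both use disjointness to reduce $\voptY{k}{\IJS}$ to $\max_{a+b=k}\bigl(\voptY{a}{\IIS}+\voptY{b}{\IIS'}\bigr)$, and then invoke the non-increasing property of $\Pi,\Pi'$ (from \thmref{diminishing:returns}) in an exchange argument to identify the maximizing split with the merged order. Your write-up is in fact more explicit than the paper's --- you spell out the additivity step, the prefix-sum rewriting, and the final differencing $\profitX{k}=\voptY{k}{\IJS}-\voptY{k-1}{\IJS}=\alpha_k$ --- whereas the paper compresses all of this into a single exchange observation.
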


\begin{proof}
    Consider the optimal $i$-cover of $\IJS$. It uses $\alpha$
    intervals of $\IIS$ and $\beta$ intervals of $\IIS'$, where
    $\alpha + \beta = i$. As such, the value of this optimal solution
    is
    \begin{equation*}
        \sum_{i=1}^\alpha \profitCA_i + 
        \sum_{i=1}^\beta \profitCA_i'.
    \end{equation*}
    If $\profitCA_{\alpha+1} > \profitCA_\beta$, then a better
    solution is formed by taking $\alpha+1$ intervals of $\IIS$ and
    $\beta-1$ intervals of $\IIS'$. Similarly, if
    $\profitCA_{\alpha} < \profitCA_{\beta+1}$, then a better solution
    is formed by taking $\alpha-1$ intervals of $\IIS$ and $\beta+1$
    intervals of $\IIS'$. It follows that the optimal strategy is to
    always partition the optimal solution into the two subproblems
    according to the merged (sorted) sequence, which implies the
    claim.
\end{proof}

\subsubsection{Settings}

The input is an instance $\IIS = (\PS, \IS)$ of interval cover, and
parameter $\eps$ and $k$. Here $\nP = |\PS|$ and $\nI = \cardin{\IS}$,
with $\IS = \{\II_1, \ldots, \II_\nI\}$.

The task at is to approximate the maximum coverage provided by $k$
intervals of $\IS$ -- specifically, the approximation algorithm would
output a cover with $(1+\eps)k$ intervals, that has value better than
optimal $k$-cover.  In the following, assume the intervals of $\IS$
are sorted from left to right by their left endpoint, and the points
of $\PS$ are sorted similarly.

\subsubsection{The algorithm}
Let $\Delta = \ceil{ m/ (\eps k)}$. For $K = \floor{\eps k }$, the
algorithm takes the set of intervals
\begin{equation*}
  \XS = \Set{\II_{j\Delta}}{j=1, \ldots,K}
\end{equation*}
into the computed cover. Since no input interval contains another
interval, this breaks the given instance into $K+1$ instances, where
the set of intervals for the $j$\th instance is
\begin{equation*}
    \IS_j = \{ \II_{j\Delta+1}, \ldots, \II_{(j+1)\Delta-1} \},
\end{equation*}
for $j=0,\ldots, K$.  The algorithm computes the set
$\PS' = \PS \setminus \cup\XS$. For each set of intervals $\IS_j$, the
corresponding point set is
\begin{equation*}
    \PS_j = \IS_j \sqcap \PS'.
\end{equation*}
Now, we have $K+1$ ``parallel'' disjoint instances of interval cover.
Formally, the $i$\th instance is $\IIS_i = (\IS_i, \PS_i)$, for
$i=0, \ldots, K$. Each instance induces a sequence of marginal
profits. Specifically, the sequence $\Pi_i$ for the $i$\th instance is
$\profitY{1}{\IIS_i} \geq \profitY{2}{\IIS_i}\geq \cdots$. By
\clmref{merge}, we need to compute the first $k$ elements in the
merged (in non-increasing order) sequence formed by these $K+1$
sequences.

Interpreting these sequences as streams, where we compute a number
only if we need it, and using heap to perform these $K+1$-way merge
results in the desired optimal solution. Specifically, for each
sequence $\Pi_i$ if the algorithm uses $t_i$ elements from it in the
current merged sequence, then the algorithm computes $t_i+1$ elements
of $\Pi_i$ using \thmref{d:p:k:cover} on $\IIS_i$. Whenever, a number
in a sequence is being consumed by the algorithm, we compute the next
number in the sequence using the ``resume'' procedure provided by
\thmref{d:p:k:cover}.

\subsubsection{Analysis}

The algorithm outputs a cover by
$(1+\eps)k$ intervals.

Let $\vopt$ be the value of the optimal cover by $k$ intervals of
$\PS' = \PS \setminus \cup \XS$. Clearly, the algorithm outputs a
cover of $\PS'$ of the same value. It follows that the value of the
solution computed by the algorithm is
\begin{equation*}
    \vopt  + \mX{ \XS} \geq \voptY{k}{\IIS}.
\end{equation*}

As for the running time, let $e_i$ be the number of elements computed
in the $i$\th sequence by the end of the execution of the
algorithm. We have that $u = \sum_i e_i = k+ K = (1+\eps)k$. Computing
an element in such a sequence takes
$T = O( \nI/K ) = O\bigl( \nI/(\eps k) \bigr)$. It follows that the
running time, ignoring presorting, is $O( T u) = O( \nI/\eps)$.

\subsubsection{The result}

\begin{theorem}
    Given an instance $(\PS, \IS)$ of $\nP$ points and $\nI$
    intervals, and parameters $k$ and $\eps$, one can a (partial)
    cover of $\PS$ by $(1+\eps)k$ intervals, in
    $O(\nP + {\nI}/{\eps} + k\log k)$ time, and
    $O(\nP + \nI/\eps)$ space. If the points and intervals are not
    presorted, the running time becomes
    $O\bigl( (\nP + \nI)\log(\nP+\nI) + \frac{\nI}{\eps} ) $.

    The computed solution covers at least as many points of $\PS$ as
    the optimal $k$-cover.
\end{theorem}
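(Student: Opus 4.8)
The plan is to stitch together the three ingredients already in place: the partitioning algorithm of the previous subsection, \clmref{merge} on merging disjoint instances, and the resumable dynamic program of \thmref{d:p:k:cover}. I would organize the proof around three claims — the output uses at most $(1+\eps)k$ intervals, it covers at least $\voptY{k}{\IIS}$ points of $\PS$, and it meets the stated time and space bounds — and verify them in that order. The cardinality bound is immediate: the algorithm commits to the $K = \floor{\eps k}$ spacer intervals of $\XS$, plus an optimal $k$-cover of the residual point set, so it uses at most $k + \eps k = (1+\eps)k$ intervals.

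For correctness I would proceed as follows. Write $\PS' = \PS \setminus \cup\XS$ and let $\vopt$ be the value of the optimal $k$-cover of $\PS'$ by intervals of $\IS \setminus \XS$. The key observation is that no interval of $\XS$ covers any point of $\PS'$, since any point covered by a spacer lies in $\cup\XS$. Hence, taking $\opt = \optX{k}$ to be an optimal $k$-cover of $\PS$, the subset $\opt \setminus \XS$ — which has at most $k$ intervals, all in $\IS \setminus \XS$ — already covers every point of $\PS'$ that $\opt$ covers; by monotonicity this gives $\mX{\PS' \cap \cup\opt} \le \vopt$, and since $\opt$ covers at most $\mX{\XS}$ points inside $\cup\XS$, we obtain $\voptY{k}{\IIS} \le \mX{\XS} + \vopt$. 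It then remains to check that the algorithm indeed outputs a cover of value $\mX{\XS} + \vopt$: the $K+1$ residual instances $\IIS_0, \ldots, \IIS_K$ are pairwise disjoint, and any point of $\PS'$ not captured by some $\PS_i$ is uncoverable by $\IS \setminus \XS$, so applying \clmref{merge} iteratively to the $K+1$ instances shows that the sum of the first $k$ entries of the merged non-increasing marginal-profit sequence equals $\vopt$ — which is exactly what the heap-based lazy $(K+1)$-way merge produces, on top of the $\mX{\XS}$ points already covered by $\XS$.

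For the running time I would split the cost into preprocessing, dynamic-programming work, and heap work. Preprocessing — optionally sorting the input in $O((\nP+\nI)\log(\nP+\nI))$, then extracting $\XS$, forming $\PS'$ and the point sets $\PS_i$, and computing the per-interval quantities $\mX{\II}$ and $\mX{\II \setminus \IIL}$ that feed \thmref{d:p:k:cover} — is done by linear scans over the (sorted) input in $O(\nP + \nI)$ time. Each residual instance $\IIS_i$ has at most $\Delta - 1 = O(\nI/(\eps k))$ intervals, so by \thmref{d:p:k:cover} computing its first marginal profit, and each subsequent one via the ``resume'' step, costs $O(\nI/(\eps k))$ while keeping the $K+1$ dynamic-programming states side by side. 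Since the merged prefix has length $k$ and each stream is primed with one extra element, the total number of profits computed is at most $k + K + 1 = O((1+\eps)k)$, giving $O(\nI/\eps)$ total dynamic-programming work. The $(K+1)$-way heap merge performs $O(k)$ extract-min/insert pairs at $O(\log K) = O(\log k)$ each, contributing $O(k\log k)$. Summing yields $O(\nP + \nI/\eps + k\log k)$ when the input is presorted, and $O((\nP+\nI)\log(\nP+\nI) + \nI/\eps)$ otherwise (one may assume $k \le \nI$, so $k\log k$ is absorbed). The space is dominated by the $K+1$ dynamic-programming tables, of total size $O((\nI/(\eps k)) \cdot k) = O(\nI/\eps)$, plus $O(\nP)$ for the points.

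The main obstacle is not any single deep step but getting the amortized accounting of the lazy merge right: one must note that the per-step dynamic-programming cost is uniform across the $K+1$ instances ($O(\nI/(\eps k))$ each) and that the total number of such steps is $O((1+\eps)k)$ rather than anything that could blow up with an individual stream, and one must confirm that \clmref{merge}, stated for two instances, composes cleanly to $K+1$ disjoint instances (here the non-nesting assumption is what keeps the $\PS_i$ pairwise disjoint once $\cup\XS$ is removed). The only genuinely combinatorial point is the inequality $\voptY{k}{\IIS} \le \mX{\XS} + \vopt$, and it rests entirely on the simple fact that the spacer intervals contribute nothing to the coverage of $\PS'$.
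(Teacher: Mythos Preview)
Your proposal is correct and follows essentially the same approach as the paper's analysis: partition via the $K$ spacer intervals, invoke \clmref{merge} to reduce to a lazy $(K{+}1)$-way merge of marginal-profit streams computed by the resumable dynamic program of \thmref{d:p:k:cover}, and amortize the cost as $O((1+\eps)k)$ resume steps of $O(\nI/(\eps k))$ each. You fill in details the paper leaves implicit --- in particular the inequality $\voptY{k}{\IIS}\le \mX{\XS}+\vopt$, the disjointness of the residual instances via the non-nesting assumption, and the absorption of $k\log k$ into the sorting term when $k\le \nI$ --- but the structure and the key ideas are the same.
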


\bibliographystyle{alpha}%
\bibliography{interval_cover}

\appendix
\section{Diminishing returns for optimal covers by %
   intervals}
\apndlab{d:intervals}

In the following, we use $\optX{t}$ to denote an optimal cover with
$t$ intervals.

\subsection{Definitions and basic properties}

\subsubsection{Diminishing returns, and marginal value}

For two sets $\XS$ and $\YS$, let
$\XS \oplus \YS = (\XS \setminus \YS) \cup (\YS \setminus \XS)$ denote
their symmetric difference.

\begin{observation}
    For any sets $X, Y \subseteq \Re$, we have that
    $\mX{X \cup Y} \leq \mX{X} + \mX{Y}$. As such, we have
    $\mX{X \cup Y} - \mX{X} \leq \mX{Y}$.
\end{observation}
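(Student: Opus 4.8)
The plan is to reduce everything to the elementary fact that the counting measure on the finite ground set $\PS$ is subadditive under union. First I would unfold the definition $\mX{Z} = |Z \cap \PS|$ on the left-hand side. Since intersection distributes over union, $(X \cup Y) \cap \PS = (X \cap \PS) \cup (Y \cap \PS)$, and hence $\mX{X \cup Y} = \bigl|(X \cap \PS) \cup (Y \cap \PS)\bigr|$. The key point of bookkeeping is that $X$ and $Y$ are arbitrary subsets of $\Re$, so every step must be phrased in terms of the finite sets $A = X \cap \PS$ and $B = Y \cap \PS$, which are all the measure "sees".

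Next I would apply inclusion–exclusion (equivalently, the trivial bound $|A \cup B| \le |A| + |B|$ for finite sets) to get $\mX{X \cup Y} = |A| + |B| - |A \cap B| \le |A| + |B| = \mX{X} + \mX{Y}$, which is the first claimed inequality. For the "as such" consequence, I would simply subtract $\mX{X}$ from both sides of the inequality just established, yielding $\mX{X \cup Y} - \mX{X} \le \mX{Y}$.

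I do not expect any genuine obstacle here: the statement is essentially the subadditivity of cardinality, and the only thing requiring care is not conflating $X$ with $X \cap \PS$. Once the first line is written in terms of $A$ and $B$, the rest is a one-line computation.
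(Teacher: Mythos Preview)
Your argument is correct: reducing to finite sets $A = X \cap \PS$ and $B = Y \cap \PS$ via distributivity and then applying the inclusion--exclusion bound $|A \cup B| \le |A| + |B|$ is exactly the right justification, and the second inequality follows by subtracting $\mX{X}$ from both sides. The paper itself does not give a proof for this observation---it is stated as self-evident---so your write-up simply spells out the one-line computation the authors left implicit.
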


\begin{defn}
    \deflab{d:returns:opt}%
    For an instance $\IIS = (\PS, \IS)$ of interval cover, the
    \emphi{diminishing return} property at step $k$, for the optimal
    solution, states that $\profitX{k} \geq \profitX{k+1}$, where
    $\profitX{k} = \profitY{k}{\IIS} = \mX{\optX{k}} -
    \mX{\optX{k-1}}$.
\end{defn}

\begin{lemma}
    \lemlab{marginal}%
    For any $i$ and any $\II \in \optX{t}$, we have that
    \begin{math}
        \mvY{\II}{\optX{i}}%
        \geq %
        \profitX{i},
    \end{math}
    see \defref{marginal:value}.
\end{lemma}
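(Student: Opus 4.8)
The plan is to prove the statement directly from the definitions, using only the optimality of the covers involved. Since $\mvY{\II}{\optX{i}}$ is, by the cited \defref{marginal:value}, defined precisely when $\II$ is a member of $\optX{i}$, I take the hypothesis to place $\II$ inside the optimal $i$-cover $\optX{i}$ appearing in the conclusion; so I fix an optimal $i$-cover $\optX{i}$ together with an interval $\II\in\optX{i}$, and aim to show $\mvY{\II}{\optX{i}}\ge\profitX{i}$. Nothing in the argument is particular to intervals, so it will in fact apply verbatim to an arbitrary set-cover instance.

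The key observation is that deleting $\II$ leaves $\optX{i}-\II$, a family of exactly $i-1$ intervals of $\IS$, hence a feasible cover by $i-1$ intervals; by the optimality of $\optX{i-1}$ in \defref{opt:solution}, this forces $\mX{\optX{i}-\II}\le\voptX{i-1}$. Unwinding \defref{marginal:value}, using $\mX{\optX{i}}=\voptX{i}$ (optimality of $\optX{i}$), and then \defref{marginal:profit}, this yields
\[
   \mvY{\II}{\optX{i}} \;=\; \mX{\optX{i}}-\mX{\optX{i}-\II} \;=\; \voptX{i}-\mX{\optX{i}-\II} \;\ge\; \voptX{i}-\voptX{i-1} \;=\; \profitX{i},
\]
which is the asserted inequality; at $i=1$ it degenerates to an equality, since then $\optX{i}-\II=\emptyset$ and $\voptX{0}=0$.

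I do not expect any genuine obstacle: the only thing to check is that removing a single interval from an optimal $i$-cover always produces a feasible — though in general sub-optimal — $(i-1)$-cover, which is what gives $\mX{\optX{i}-\II}\le\voptX{i-1}$. I would finally remark that the inequality is typically strict, being tight only when some optimal $i$-cover has the form $\optX{i-1}+\II$; it is exactly this one-sided bound, rather than an identity, that the subsequent diminishing-returns argument needs when it compares consecutive marginal profits.
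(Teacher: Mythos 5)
Your proof is correct and follows the same simple argument as the paper: $\optX{i} - \II$ is a feasible $(i-1)$-cover, so $\mX{\optX{i}-\II}\le\voptX{i-1}$, and unwinding the definitions gives the bound. You also correctly identified and resolved the $t$/$i$ mismatch in the lemma statement, which is just a typo; the paper's own proof uses the same index throughout.
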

\begin{proof}
    Observe that $\mX{\optX{t} - \II} \leq \mX{\optX{t-1}}$ and
    $\mX{\optX{t} - \II } = \mX{\optX{t}} -
    \mvY{\II}{\optX{t}}$. Combining we have
    $\mX{\optX{t-1}} \geq \mX{\optX{t}} - \mvY{\II}{\optX{t}}$.
\end{proof}

\subsubsection{Decomposing the optimal solutions into runs}

Let
\begin{equation*}
    \XX = \optX{k+2} \cap \optX{k}    
\end{equation*}
be the set of intervals that appear in both $\optX{k}$ and
$\optX{k+2}$.  Specifically, let $k' = k - |\XX|$, and let
\begin{equation*}
    \Li%
    =%
    (\optX{k+2} \cup \optX{k}) \setminus \XX
    =%
    \{\IJ_1, \ldots, \IJ_{2k'+2}\},
\end{equation*}
where the intervals are sorted in increasing order of their left
endpoints.  A \emphi{run} is a consecutive sequence of intervals
$\II_u, \ldots, \II_v \in \Li$, such that:
\begin{compactenumi}
    \smallskip%
    \item any two consecutive intervals in the run intersects, and
    
    \smallskip%
    \item the odd intervals belong to $\optX{k}$ and the even
    intervals belong to $\optX{k+2}$ (or vice versa).
\end{compactenumi}
\medskip%
We partition $\Li$ into the unique maximal runs from left to right,
and let $\run_1, \ldots, \run_t$ be he resulting partition of
$\Li$. The uniqueness follows as the breakpoints between runs are
pre-determined.

\begin{defn}
    The \emphw{balance} of a run $\run$ is
    $\balanceX{\run} = |\run \cap \optX{k+2}| - |\run \cap\optX{k}|$
    and it is either $-1,0,+1$. For a single interval
    $\II \in \optX{k} \cup \optX{k+2}$, we denote
    $\balanceX{ \II} = \balanceX{ \{ \II \} }$.
\end{defn}

\begin{observation}
    \obslab{two:runs}%
    Since $|\optX{k+2} | = |\optX{k}| + 2$, we have that
    $\sum_{i} \balanceX{\run_i} =2$. As such, there must be at least
    two runs in $\run_1, \ldots, \run_t$ with balance one. That is, at
    least two runs that starts and ends with an interval of
    $\optX{k+2}$.
\end{observation}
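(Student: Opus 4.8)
The plan is to first establish the identity $\sum_i \balanceX{\run_i} = 2$ by a simple partition/telescoping argument, and then feed it into the fact (already recorded in the definition of balance) that each run has balance in $\{-1,0,+1\}$.

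First I would use that the runs $\run_1, \ldots, \run_t$ form a partition of $\Li$ — which is exactly how they were constructed, as the maximal runs from left to right. Summing the balances therefore telescopes to a difference of global counts:
\begin{equation*}
    \sum_{i} \balanceX{\run_i}
    =
    \cardin{\Li \cap \optX{k+2}} - \cardin{\Li \cap \optX{k}}.
\end{equation*}
Now $\XX = \optX{k+2} \cap \optX{k}$ is contained in both $\optX{k+2}$ and $\optX{k}$, and $\Li = (\optX{k+2} \cup \optX{k}) \setminus \XX$, so $\Li \cap \optX{k+2} = \optX{k+2} \setminus \XX$ and $\Li \cap \optX{k} = \optX{k} \setminus \XX$. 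Hence the right-hand side equals $\bigl( \cardin{\optX{k+2}} - \cardin{\XX} \bigr) - \bigl( \cardin{\optX{k}} - \cardin{\XX} \bigr) = \cardin{\optX{k+2}} - \cardin{\optX{k}} = 2$.

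Next I would combine this with $\balanceX{\run_i} \in \{-1,0,+1\}$: if $a$ and $b$ denote the number of runs of balance $+1$ and $-1$ respectively, then $a - b = 2$, so $a \geq 2$. Finally, a run of balance $+1$ contains one more interval of $\optX{k+2}$ than of $\optX{k}$; since a run strictly alternates between the two solutions, this forces it to have odd length with both endpoints in $\optX{k+2}$ — that is, it starts and ends with an interval of $\optX{k+2}$. This gives the claimed two runs. The argument is essentially bookkeeping; the only points needing care are checking that the runs genuinely partition $\Li$ (so that the sum of balances telescopes to a difference of global cardinalities) and that $\XX$ sits inside both optimal solutions so that the $\cardin{\XX}$ terms cancel — both immediate from the definitions — so I do not expect a real obstacle here.
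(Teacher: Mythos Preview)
Your argument is correct and is exactly the natural fleshing-out of what the paper leaves implicit: the paper states this as an observation with no separate proof, relying on the same two facts you use --- that the runs partition $\Li$ (so balances sum to $\cardin{\optX{k+2}} - \cardin{\optX{k}} = 2$) and that each run's balance lies in $\{-1,0,+1\}$. There is no difference in approach.
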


\subsection{Proving the diminishing returns property}

\begin{lemma}
    \lemlab{must:intersect}%
    If there is an interval $\II \in \optX{k+2}$ that does not
    intersect any interval of $\optX{k}$, then
    \defref{d:returns:opt} holds.
\end{lemma}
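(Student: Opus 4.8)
The plan is to recast the inequality we must establish, the diminishing-returns bound $\profitX{k+1} \ge \profitX{k+2}$, into the equivalent form $2\voptX{k+1} \ge \voptX{k} + \voptX{k+2}$, and then exhibit a single cover by $k+1$ intervals whose measure already meets this target. Given the hypothesis, the obvious candidate is $\optX{k}$ together with the special interval $\II$. First I would observe that $\optX{k} + \II$ is a cover by $k+1$ intervals and, because $\II \in \optX{k+2}$ meets no interval of $\optX{k}$, the points of $\PS$ lying in $\II$ are disjoint from those covered by $\cup\optX{k}$; hence $\mX{\optX{k} + \II} = \voptX{k} + \mX{\II}$, and therefore $\voptX{k+1} \ge \voptX{k} + \mX{\II}$.

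The second step is to bound $\mX{\II}$ from below by $\profitX{k+2}$. Since $\II \in \optX{k+2}$, \lemref{marginal} (applied at index $k+2$) gives $\mvY{\II}{\optX{k+2}} \ge \profitX{k+2}$; and deleting $\II$ from any family can only drop points that lie inside $\II$, so $\mX{\II} \ge \mvY{\II}{\optX{k+2}}$. Chaining the two inequalities yields $\mX{\II} \ge \profitX{k+2}$. Combining this with the first step, $\voptX{k+1} \ge \voptX{k} + \profitX{k+2} = \voptX{k} + \voptX{k+2} - \voptX{k+1}$, which rearranges to $2\voptX{k+1} \ge \voptX{k} + \voptX{k+2}$; subtracting $\voptX{k} + \voptX{k+1}$ from both sides gives $\profitX{k+1} \ge \profitX{k+2}$, which is exactly the assertion of \defref{d:returns:opt} at the step under consideration.

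I do not anticipate a genuine obstacle here — the whole point of this lemma is that its hypothesis lets us sidestep the run/balance machinery, so the degenerate case can be disposed of before the main argument. The only spot requiring a little care is the equality $\mX{\optX{k} + \II} = \voptX{k} + \mX{\II}$: it rests squarely on the hypothesis that $\II$ is disjoint, as a subset of $\Re$, from every interval of $\optX{k}$, so that no point of $\PS \cap \II$ has already been counted in $\voptX{k}$. Everything else is a two-line rearrangement together with the already-proven \lemref{marginal}.
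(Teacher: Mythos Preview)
Your proposal is correct and follows essentially the same approach as the paper: both argue that $\optX{k}+\II$ is a $(k{+}1)$-cover with measure $\voptX{k}+\mX{\II}$, and then bound $\mX{\II}$ below by $\voptX{k+2}-\voptX{k+1}$ via the marginal value of $\II$ in $\optX{k+2}$. The only cosmetic difference is that you invoke \lemref{marginal} directly, whereas the paper re-derives that inequality inline by a short contradiction argument.
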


\begin{proof}
    If
    $\Delta = \mvY{\II}{\optX{k+2}} < \mX{\optX{k+2}} -
    \mX{\optX{k+1}}$ then
    \begin{equation*}
        \mX{\optX{k+2} - \II}%
        =%
        \mX{\optX{k+2}} - \Delta%
        >%
        \mX{\optX{k+2}} - \bigl( \mX{\optX{k+2}} - \mX{\optX{k+1}} \bigr)
        =%
        \mX{\optX{k+1}},
    \end{equation*}
    which is a contradiction to the optimality of $\optX{k+1}$.  Thus,
    $\Delta \geq \mX{\optX{k+2}} - \mX{\optX{k+1}}$ and we have
    \begin{equation*}
        \mX{\optX{k+1}}%
        \geq%
        \mX{\optX{k} + \II} %
        =%
        \voptX{k} + \mX{\II}%
        \geq %
        \voptX{k} + \Delta%
        \geq%
        \voptX{k} + \mX{\optX{k+2}} - \mX{\optX{k+1}},
    \end{equation*}
    which implies the claim.
\end{proof}

\begin{lemma}
    \lemlab{smaller}%
    If there is a run $\run$ of length one with $\balanceX{\run}=1$,
    then \defref{d:returns:opt} holds.
\end{lemma}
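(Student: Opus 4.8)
The plan is to exploit the fact that a length-one run $\run$ with $\balanceX{\run}=1$ consists of a single interval $\II \in \optX{k+2}$ that is \emph{isolated} within $\Li$: by the definition of runs, $\II$ does not intersect any interval of $\Li$ adjacent to it, and since runs partition $\Li$, this means $\II$ intersects no interval of $\optX{k} \setminus \XX$. The one subtlety is whether $\II$ can intersect an interval of $\XX = \optX{k+2} \cap \optX{k}$; but such an interval lies in $\optX{k+2}$ as well, and by extremality of $\optX{k+2}$ (\obsref{extremal}) two intersecting intervals of an optimal solution are extremal to each other, so if $\II$ intersected some $\IK \in \XX$ we would get a forbidden pattern — alternatively, one argues that $\II$ together with the intervals of $\XX$ adjacent to it would force a replacement contradicting extremality. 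Either way, I would first establish the clean statement: \emph{$\II$ does not intersect any interval of $\optX{k}$.} Once that is in hand, \lemref{must:intersect} applies verbatim to $\II$ and gives \defref{d:returns:opt}.

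So the body of the proof is essentially a reduction: show the isolated interval of the length-one balance-one run satisfies the hypothesis of \lemref{must:intersect}, then invoke that lemma. Concretely, first I would note that the two neighbors of $\run$ in the left-to-right ordering of $\Li$ (if they exist) cannot intersect $\II$ — otherwise the run would not be maximal, it would extend to include them. Next I would rule out $\II$ intersecting any \emph{non-adjacent} interval of $\Li$: since all intervals of $\Li$ are sorted and no interval contains another, an interval intersecting $\II$ must cover an endpoint of $\II$, and the intervals covering (say) the left endpoint of $\II$ are consecutive in the sorted order ending at $\II$'s position, hence one of them is the left neighbor of $\II$ in $\Li$ — contradiction with the previous point. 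This handles intersections with $\Li = (\optX{k+2} \cup \optX{k}) \setminus \XX$; combined with the $\XX$ argument above it shows $\II$ meets nothing in $\optX{k}$.

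I expect the main obstacle to be the bookkeeping around $\XX$: the run decomposition is defined only on $\Li$, which has the shared intervals removed, so "isolated in $\Li$" does not immediately mean "isolated in $\optX{k}$." The cleanest route is probably to observe that any $\IK \in \XX$ intersecting $\II$ would, since both $\IK, \II \in \optX{k+2}$ and $\optX{k+2}$ is extremal, be forced into a replacement — e.g. $\IK \in \optX{k}$ could be swapped for something, or one directly contradicts \obsref{two} / \lemref{only:two}. If that route proves fiddly, a fallback is to re-examine which intervals can intersect $\II$ using that $\II$'s neighbors in the \emph{full} union $\optX{k} \cup \optX{k+2}$ are what matter, and that a shared interval adjacent to $\II$ would itself have to be part of a run containing $\II$, contradicting $|\run|=1$. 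Everything after the isolation claim is a one-line appeal to \lemref{must:intersect}.
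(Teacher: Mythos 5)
The central claim of your proof, that the singleton $\II$ intersects no interval of $\optX{k}$, is false in general, and both of your attempts to justify it fail. The run decomposition partitions $\Li = (\optX{k}\cup\optX{k+2})\setminus\XX$, so intervals of $\XX$ are never members of any run; an interval $\IK\in\XX$ that intersects $\II$ does \emph{not} force $\IK$ into the run containing $\II$, and so does not contradict $|\run|=1$ --- that is precisely why the ``fallback'' in your last paragraph does not go through. Your other attempt --- that extremality of $\optX{k+2}$ yields a forbidden pattern when $\II$ meets some $\IK\in\XX$ --- also fails: extremality permits two intervals of the same optimal solution to intersect (it only constrains them to be extremal to each other), so there is no contradiction there. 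Indeed, the configuration $\optX{k}=\{\IJ,\IL\}$, $\optX{k+2}=\{\IJ,\II,\IL\}$ with $\IJ\prec\II\prec\IL$ and $\IJ\cap\II\neq\emptyset$, $\II\cap\IL\neq\emptyset$, gives $\XX=\{\IJ,\IL\}$, $\Li=\{\II\}$, $\run=\{\II\}$ with $\balanceX{\run}=1$, yet $\II$ intersects two intervals of $\optX{k}$. So the reduction to \lemref{must:intersect} is not available.

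What the paper actually does is the opposite of your plan: it concedes that $\II$ may intersect intervals $\IJ$ of $\optX{k}$, shows via \lemref{allowed:pattern} (applied after noting any intervening $\optX{k+2}$-interval would be forced by the singleton-run hypothesis) that any such $\IJ$ must lie in $\XX$, and then argues directly with marginal values: since $\II$'s neighbors in $\optX{k+2}$ are the same intervals (or a subset) of its neighbors in $\optX{k}$, one gets
$\mvY{\II}{\optX{k+2}} = \mX{\optX{k+2}} - \mX{\optX{k+2}-\II} \leq \mX{\optX{k}+\II} - \voptX{k} \leq \mX{\optX{k+1}} - \voptX{k}$,
and combining with \lemref{marginal} gives $\profitX{k+2}\le\profitX{k+1}$. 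The part of your argument that $\II$ intersects nothing in $\Li\cap\optX{k}$ is in the right spirit (it corresponds to the paper's use of \lemref{allowed:pattern}), but the $\XX$ case cannot be eliminated --- it has to be handled head-on.
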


\begin{proof}
    Let $\run = \{ \II \}$, with $\II \in \optX{k+2} \setminus
    \XX$. The case not handled by \lemref{must:intersect}, is if $\II$
    intersects some interval $\IJ \in \optX{k}$.  Assume that
    $\IJ \prec \II$ (the other case is handled symmetrically), and
    $\IJ$ is the rightmost interval with this property.  If
    $\IJ \in \optX{k} \setminus \optX{k+2}$, then there must be an
    interval $\IK \in \optX{k+2} \setminus \optX{k}$ such that
    $\IJ \prec \IK \prec \II$, as otherwise $\IJ$ and $\II$ would be
    in the same run, which contradicts $\run$ being of size one.
    However, this case is impossible by \lemref{allowed:pattern}.

    Thus, $\IJ \in \XX$.  There are two possibilities:
    \begin{compactenumI}
        \smallskip%
        \item There is another interval $\IL \in \optX{k}$ that
        intersects $\II$. By the same argument $\IL \in \XX$. But
        then, the interval $\II$ has the same two neighbors in
        $\optX{k}$ and $\optX{k+2}$.  By \lemref{only:two} it has no
        other neighbors in $\optX{k}$ or $\optX{k+2}$.  Thus, removing
        $\II$ from $\optX{k+2}$, and adding it to $\optX{k}$ results
        in two covers of size $k+1$. By \lemref{marginal}, we have
        \begin{align}
          \mX{\optX{k+2}} - \mX{\optX{k+1}}
          &\leq%
            \mvY{\II}{\optX{k+2}}
            =%
            \mX{\optX{k+2} } - \mX{\optX{k+2} - \II}
            =%
            \mX{\optX{k} + \II } - \voptX{k}            
            \nonumber
          \\&%
          \leq%
          \mX{\optX{k+1} } - \voptX{k},
          \eqlab{case:a}
        \end{align}
        which establish the claim in this case.
        
        \smallskip%
        \item There is no other interval $\IL \in \optX{k}$ that
        intersects $\II$. Observe that
        \begin{equation*}
            \mvY{\II}{\optX{k+2}}
            =%
            \mX{\optX{k+2} } - \mX{\optX{k+2} - \II}
            \leq%
            \mX{\optX{k} + \II } - \voptX{k}
            \leq%
            \mX{\optX{k+1} } - \voptX{k}
        \end{equation*}
        as $\optX{k+2}$ potentially has one more interval that covers
        portions of $\II$ that is not present in $\optX{k}$.  The
        claim now follows by \lemref{marginal} and arguing as in
        \Eqref{case:a}.
    \end{compactenumI}
\end{proof}

\begin{lemma}
    \lemlab{run:no:O:k}%
    Consider a run $\run$, with $|\run|>1$, and $\balanceX{\run}=1$.
    Then any interval of $\optX{k} \setminus \run$ that intersects
    $\run$ is in $\XX$.
\end{lemma}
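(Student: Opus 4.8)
The plan is to argue by contradiction using the structural constraints collected so far, in particular the extremality of the optimal covers, \lemref{only:two}, \lemref{allowed:pattern}, and \obsref{two}. Suppose $\run = \langle \IJ_1, \ldots, \IJ_p\rangle$ with $p = |\run| > 1$ and $\balanceX{\run} = 1$, so that $\run$ begins and ends with an interval of $\optX{k+2}$ and alternates between $\optX{k+2}$ and $\optX{k}$ along its consecutive intersecting intervals. Let $\IK \in \optX{k} \setminus \run$ be an interval that intersects $\run$, and suppose toward a contradiction that $\IK \notin \XX$, i.e. $\IK \in \optX{k} \setminus \optX{k+2}$. Since $\IK$ intersects $\run$, it intersects some interval $\IJ_s$ of $\run$; I would first observe that it must intersect one of the \emph{endpoints} of the run, say $\IJ_1$, since if it intersected an interior interval of $\run$ then — together with the two run-neighbors of that interior interval — we would have three intervals of $\optX{k} \cup \optX{k+2}$ pairwise hitting a common region, and the consecutive-triple analysis of \lemref{allowed:pattern} (applied to the relevant triple, using that $\IK \notin \XX$) would be violated.

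So assume $\IK$ intersects the left endpoint interval $\IJ_1 \in \optX{k+2} \setminus \optX{k}$, with (without loss of generality) $\IK \prec \IJ_1$; the case where $\IK$ hits the right endpoint is symmetric. Now consider the triple $\IK \prec \IJ_1 \prec \IJ_2$, where $\IJ_2$ is the second interval of the run (which belongs to $\optX{k}$ and intersects $\IJ_1$). If $\IK \cap \IJ_2 \neq \emptyset$, then \lemref{allowed:pattern} forces a specific membership pattern on this consecutive triple: since $\IJ_1 \in \optX{k+2}\setminus\optX{k}$, the pattern must be $\IK \in \optX{k+2}\setminus\optX{k}$, contradicting $\IK \in \optX{k}$. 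If $\IK \cap \IJ_2 = \emptyset$, then $\IJ_1$ is intersected by two intervals of $\optX{k}$ strictly to its left — namely $\IK$ and $\IJ_2$... wait, $\IJ_2$ need not be to the left; but in any case $\IK$ and $\IJ_2$ are two distinct intervals of $\optX{k}$ intersecting $\IJ_1$, with $\IK \prec \IJ_1$. I would then invoke extremality / \obsref{two}: in the optimal solution $\optX{k+2}$... no — $\IJ_1 \in \optX{k+2}$ but $\IK, \IJ_2 \in \optX{k}$, so the relevant observation is that within $\optX{k}$ the interval $\IK$ together with whatever neighbor it has cannot have $\IJ_1$-type obstructions. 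The cleanest route: since $\IK \in \optX{k}$ and $\IK \prec \IJ_1$ with $\IK \cap \IJ_1 \neq \emptyset$, extremality of $\optX{k}$ says $\IK$'s rightmost intersecting interval in $\optX{k}$ (if any) is to its right; but then $\IJ_1 \notin \optX{k}$ could be swapped in to replace that neighbor, or $\IK$ itself is "replaceable" by $\IJ_1$ in a way that contradicts $\IJ_1 \notin \optX{k}$ being excluded from the optimal $k$-cover. I expect to need \lemref{first} here: $\optX{k}$ restricted appropriately is optimal on the relevant point subset, and inserting $\IJ_1$ in place of $\IK$ (whose contribution is dominated because $\IJ_1$ extends further right while $\IJ_2$ covers the right side) yields a cover of size $k$ that is at least as good, contradicting that $\IJ_1 \notin \optX{k}$ — or, if it is equally good, contradicting the specific extremal/tie-breaking normalization.

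The main obstacle I anticipate is the bookkeeping in this last case ($\IK$ disjoint from $\IJ_2$): one must carefully identify which point subset to apply \lemref{first} to, and show that replacing $\IK$ by $\IJ_1$ genuinely does not lose coverage. The key geometric fact making it work is that $\IJ_1$ is an endpoint of the run, so on its right side it is "covered over" only by $\IJ_2 \in \optX{k}$ and nothing in $\optX{k}$ lies between $\IK$ and $\IJ_1$ (else $\IK$ would be inside the run or the run would extend further left), so all of $\IK$'s private contribution relative to $\optX{k} \setminus \IK$ near $\IJ_1$ is recaptured by $\IJ_1$. I would finish by noting that in every case we reached a contradiction, hence $\IK \in \optX{k+2}$ as well, i.e. $\IK \in \XX$.
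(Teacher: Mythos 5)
There is a genuine gap. Your Case~3 misapplies \lemref{allowed:pattern}. With $\opt = \optX{k}$, $\OB = \optX{k+2}$, the triple you consider is $\IK \prec \IJ_1 \prec \IJ_2$ with $\IK \in \optX{k}\setminus\optX{k+2}$, $\IJ_1 \in \optX{k+2}\setminus\optX{k}$, $\IJ_2 \in \optX{k}\setminus\optX{k+2}$. That is the pattern $\opt\setminus\OB,\ \OB\setminus\opt,\ \opt\setminus\OB$, which is exactly allowable pattern~(I) of \lemref{allowed:pattern} --- the lemma forces the first interval to be on the \emph{opposite} side from the middle one, i.e.\ $\IK\in\optX{k}\setminus\optX{k+2}$, not $\IK\in\optX{k+2}\setminus\optX{k}$ as you wrote. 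So no contradiction follows from this triple, and Case~3 collapses. Your Case~4 is, by your own account, speculative (the swap-and-apply-\lemref{first} sketch is not carried through), and your preliminary reduction to ``$\IK$ meets an endpoint interval of the run'' is also argued loosely (some of the configurations you wave away actually need \lemref{only:two}, applied to an interior run interval of $\optX{k+2}$ that would then meet three intervals of $\optX{k}$, rather than \lemref{allowed:pattern}).

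The missing idea is to \emph{not} use the triple $(\IK, \IJ_1, \IJ_2)$ at all, but instead to exploit the fact that $\IK$ is excluded from $\run$: choose $\IK$ to be the \emph{rightmost} interval of $\optX{k}\setminus\run$ to the left of $\run$ that meets $\run$. Since $\IK\notin\XX$, $\IK$ lies in $\Li$; and since $\IK$ meets $\IJ_1$ yet is not swallowed into the run, $\IK$ and $\IJ_1$ cannot be consecutive in $\Li$, so there is some $\IX\in\Li$ with $\IK\prec\IX\prec\IJ_1$. Because $\IK\cap\IJ_1\neq\emptyset$ and no interval is nested, $\IX$ meets $\IJ_1$ as well; if $\IX\in\optX{k}\setminus\XX$ this contradicts $\IK$ being rightmost, so $\IX\in\optX{k+2}\setminus\XX$. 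Now the triple $\IK\prec\IX\prec\IJ_1$ has pattern $\opt\setminus\OB,\ \OB\setminus\opt,\ \OB\setminus\opt$ with $\IK\cap\IJ_1\neq\emptyset$, which \lemref{allowed:pattern} forbids. (A symmetric argument treats $\IK$ on the right, and a short separate argument handles $\IK$ between two consecutive run members, where the run would have been broken at $\IK$.) This is how the paper proves the lemma; the key structural facts are the rightmost-choice normalization and finding the $\optX{k+2}$-interval that ``blocks'' $\IK$ from joining the run, neither of which appears in your proposal.
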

\begin{proof}
    Assume that $\run$ intersects some interval
    $\IJ \in \optX{k} \setminus \run$. And assume for the sake of
    contradiction that $\IJ \notin \XX$.

    If $\IJ$ was to the left of all the intervals of $\run$, then one
    assume that $\IJ$ is the rightmost such interval. Let
    $\IL \in \optX{k+2}$ be the first interval of $\run$. There must
    be another interval $\IK \in \optX{k+2}\setminus \XX$ such that
    $\IJ \prec \IK \prec \IL$, as otherwise $\IJ$ would be part of the
    run. However, this situation is impossible by
    \lemref{allowed:pattern}, as $\IJ \in \optX{k}$,
    $\IK, \IL \in \optX{k+2}$, and $\IJ \cap \IL \neq \emptyset$.

    Symmetric argument implies that $\IJ$ can not be to the right of
    all the intervals of $\run$.

    If $\IJ$ is in the middle between two consecutive intervals of the
    run, then we would have broken the run at $\IJ$, which is again
    impossible.
\end{proof}

\begin{lemma}
    \lemlab{margins}%
    Let $\run$ be a run of balance one, with $|\run|>1$, then $\run$
    can intersect at most two intervals of $\IIL, \IIR \in \XX$, where
    $\IIL \prec \IIR$. Furthermore, $\IIL$ (resp., $\IIR$) intersects
    only the first (resp. last) interval of $\run$.
\end{lemma}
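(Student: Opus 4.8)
The plan is to combine the run decomposition with \lemref{only:two}, \lemref{run:no:O:k}, and the extremality of $\optX{k}$ and $\optX{k+2}$. Write the run, sorted by left endpoints, as $\run = \IL_1 \prec \IJ_1 \prec \IL_2 \prec \IJ_2 \prec \cdots \prec \IL_{s+1}$, where $\IL_i \in \optX{k+2}\setminus\XX$ and $\IJ_i \in \optX{k}\setminus\XX$; since $\balanceX{\run}=1$ and $|\run|>1$, \obsref{two:runs} forces the run to begin and end with an interval of $\optX{k+2}$, so it has odd length $2s+1 \ge 3$, and in particular $s\ge 1$. The first step is to show that no interval of $\XX$ meets any interval of $\run$ other than the two extreme ones $\IL_1$ and $\IL_{s+1}$. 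For a ``middle'' interval $\IL_i$ with $2\le i\le s$: it is not in $\optX{k}$ (else it would lie in $\XX$), so by \lemref{only:two} applied to the extremal solution $\optX{k}$ it intersects at most two intervals of $\optX{k}$; its run-neighbors $\IJ_{i-1},\IJ_i$ already account for both and are not in $\XX$, so no interval of $\XX\subseteq\optX{k}$ meets $\IL_i$. Symmetrically, a middle interval $\IJ_i$ with $1\le i\le s$ is not in $\optX{k+2}$, so by \lemref{only:two} applied to $\optX{k+2}$ its only $\optX{k+2}$-neighbors are $\IL_i,\IL_{i+1}$, and hence no interval of $\XX\subseteq\optX{k+2}$ meets $\IJ_i$.

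The second step analyzes $\IL_1$ (the interval $\IL_{s+1}$ being handled by the mirror-image argument). Since $\IL_1\prec\IJ_1$ intersect and neither contains the other, $\IJ_1$ covers the right endpoint of $\IL_1$; by the extremality of $\optX{k}$ each endpoint of $\IL_1$ is covered by at most one interval of $\optX{k}$ — this is exactly the observation used inside the proof of \lemref{only:two}. Hence if $\IL_1$ meets an interval $\IIL\in\optX{k}$ with $\IIL\ne\IJ_1$, then $\IIL$ covers the \emph{left} endpoint of $\IL_1$, so $\IIL\prec\IL_1$; but $\IL_1$ is the first interval of $\run$, so $\IIL\notin\run$, and then \lemref{run:no:O:k} gives $\IIL\in\XX$. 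Together with \lemref{only:two} (at most two intervals of $\optX{k}$ meet $\IL_1$), this shows $\IL_1$ is met by at most one interval of $\XX$, which moreover lies strictly to the left of $\IL_1$; call it $\IIL$ if it exists. Symmetrically, $\IL_{s+1}$ is met by at most one interval $\IIR\in\XX$, lying strictly to the right of $\IL_{s+1}$. Combined with the first step, $\run$ meets at most the two intervals $\IIL,\IIR$ of $\XX$.

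It remains to prove the ``furthermore'' clause when both $\IIL$ and $\IIR$ exist. From $\IIL\prec\IL_1\prec\IL_{s+1}\prec\IIR$, where the middle inequality holds because $\IL_1$ precedes $\IL_{s+1}$ in the sorted run and $s\ge 1$, we get $\IIL\prec\IIR$, and in particular $\IIL\ne\IIR$. If $\IIL$ also met $\IL_{s+1}$, then, being in $\XX\subseteq\optX{k}$ and distinct from the run-neighbor $\IJ_s$ of $\IL_{s+1}$, it would be the unique interval of $\optX{k}$ other than $\IJ_s$ meeting $\IL_{s+1}$, hence would cover the right endpoint of $\IL_{s+1}$ and satisfy $\IL_{s+1}\prec\IIL$, contradicting $\IIL\prec\IL_{s+1}$. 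So among the intervals of $\run$, $\IIL$ meets only $\IL_1$, and symmetrically $\IIR$ meets only $\IL_{s+1}$.

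The one delicate point I expect is the bookkeeping in the second step: making airtight that the ``extra'' interval of $\optX{k}$ meeting $\IL_1$ really lies outside $\run$, so that \lemref{run:no:O:k} applies. This rests on the two facts that $\IJ_1$ covers the right endpoint of $\IL_1$ (which forces the extra interval to cover the left endpoint, hence to lie to the left of $\IL_1$) and that two intervals of $\optX{k}$ cannot cover the same endpoint of $\IL_1$; the latter is a consequence of extremality that is implicit in the proof of \lemref{only:two} and should be recorded explicitly before it is used.
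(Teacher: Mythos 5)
Your proof is correct and follows essentially the same strategy as the paper's: use \lemref{only:two} on the (non-$\XX$) intervals of the run to rule out $\XX$-intervals touching interior run intervals, then bound what can hit the two end intervals. The one genuine improvement is that you explicitly establish the ``furthermore'' clause --- that $\IIL$ cannot also meet $\IL_{s+1}$, that $\IIL \prec \IIR$, and that $\IIL \neq \IIR$ --- by pinning down on which side of $\IL_1$ and $\IL_{s+1}$ the $\XX$-neighbors must lie (via the endpoint-covering consequence of extremality); the paper's proof leaves this part to the reader after showing only that each end interval has at most one $\XX$-neighbor.
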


\begin{proof}
    Assume for contradiction that $\IIL$ intersects some (internal)
    interval $\IJ \in \run \cap \optX{k}$, and observe that this would
    imply that $\IJ$ intersects three intervals of $\optX{k+2}$, which
    is impossible by \lemref{only:two}.  The same analysis applies if
    $\IJ \in \run \cap \optX{k+2}$ and $\IJ$ is not the first or last
    interval of $\run$.

    Thus, it must be that $\IJ$ is the last or first interval of
    $\run$, and both these intervals belong to $\optX{k+2}$.  Same
    analysis applies to $\IIR$, implying that only the first and last
    intervals in $\run$ can intersect intervals of $\run$.

    As for the second part, assume for contradiction that the (say)
    the first interval $\IJ$ in $\run$ intersects two intervals
    $\II, \II' \in \XX$, and observe that $\IJ$ then intersects three
    intervals that belongs to $\optX{k}$, which is impossible by
    \lemref{only:two}. The same argument applies to the last interval
    in the run $\run$.
\end{proof}

\subsection{The result}

\begin{theorem}
    \thmlab{diminishing:returns}%
    Let $\IIS = (\PS, \IS)$ be an instance of interval cover. We have
    that $\profitX{1} \geq \profitX{2} \geq \cdots$, where
    $\profitX{i}$ is the $i$\th marginal profit of $\IIS$ (see
    \defref{marginal:profit}).
\end{theorem}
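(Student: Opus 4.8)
The plan is to prove $\profitX{k} \geq \profitX{k+1}$ for every $k$, which by \defref{marginal:profit} and \defref{d:returns:opt} amounts to showing $\mX{\optX{k+1}} - \mX{\optX{k}} \geq \mX{\optX{k+2}} - \mX{\optX{k+1}}$. Equivalently, it suffices to exhibit a cover of size $k+1$ whose measure is at least $\voptX{k} + \bigl(\voptX{k+2} - \mX{\optX{k+1}}\bigr)$, i.e.\ whose measure is at least $\voptX{k} + \mvY{\II}{\optX{k+2}}$ for some suitably chosen $\II \in \optX{k+2}$ (using \lemref{marginal}, which gives $\mvY{\II}{\optX{k+2}} \geq \profitX{k+2}$ for any $\II \in \optX{k+2}$, so we are free to pick the most convenient one). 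The whole machinery of runs is set up precisely to locate such an interval: we want to find an interval $\II \in \optX{k+2} \setminus \XX$ that, when moved from $\optX{k+2}$ into $\optX{k}$, produces a valid $(k+1)$-cover whose coverage we can lower bound.

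First I would dispatch the easy cases already isolated in the lemmas. If some interval of $\optX{k+2}$ misses $\optX{k}$ entirely, \lemref{must:intersect} finishes. If some balance-one run has length one, \lemref{smaller} finishes. So assume neither happens: by \obsref{two:runs} there are at least two runs of balance $+1$, and each such run has length $>1$. Now I would pick one of these balance-one runs $\run$ and look at its structure via \lemref{run:no:O:k} and \lemref{margins}: the only intervals of $\optX{k}$ touching $\run$ are at most two intervals $\IIL \prec \IIR$ of $\XX$, touching respectively only the first and last intervals of $\run$ (both of which lie in $\optX{k+2}$). The idea is then to replace the sub-chain of $\run$ inside $\optX{k+2}$ by the corresponding sub-chain inside $\optX{k}$, plus keep one extra interval. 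Concretely, $\run$ contributes $r$ intervals from $\optX{k+2}$ and $r-1$ from $\optX{k}$; forming $\optX{k}$ with the $\optX{k+2}$-part of $\run$ swapped in for the $\optX{k}$-part gives a cover of size $k+1$ (one more interval than $\optX{k}$), and its coverage is $\voptX{k}$ minus what the removed $\optX{k}$-chain of $\run$ contributed plus what the added $\optX{k+2}$-chain of $\run$ contributes. Because $\run$ only interfaces with the rest of $\optX{k}$ through $\IIL$ and $\IIR$ (which stay), the change in coverage is exactly $\mvY{(\run\cap\optX{k+2})}{\optX{k+2}}$ restricted appropriately, and this dominates $\mvY{\II}{\optX{k+2}}$ for the single interval $\II$ of $\run$ whose marginal contribution in $\optX{k+2}$ is largest — hence is at least $\profitX{k+2}$ by \lemref{marginal}.

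The main obstacle is the bookkeeping in that last step: making the ``swap the run'' argument precise. The cover obtained by swapping the $\optX{k}$-chain of $\run$ for its $\optX{k+2}$-chain needs to be shown to (i) have size exactly $k+1$ and (ii) cover at least $\voptX{k} + \mvY{\II}{\optX{k+2}}$ for a good choice of $\II$. For (i) one uses $|\run\cap\optX{k+2}| = |\run\cap\optX{k}| + 1$ (balance one). For (ii) the delicate point is that the points covered by $\run\cap\optX{k+2}$ but by nothing else in $\optX{k+2}$ are, by \lemref{margins}, points outside $\IIL\cup\IIR$ and outside every other interval of $\optX{k}$ as well (since those other intervals don't touch $\run$ at all) — so these points are genuinely new coverage added on top of $\voptX{k}$, while the $\optX{k}$-chain we delete was itself covered within $\run\cap\optX{k+2}\cup\IIL\cup\IIR$. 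One then argues $\sum$ over the run of the marginal values telescopes so that the total new coverage is at least the single largest marginal value $\mvY{\II}{\optX{k+2}}$ over $\II\in\run$, which by \lemref{marginal} is $\geq \profitX{k+2}$, completing the chain of inequalities. I expect this coverage accounting — keeping straight which points are double-covered and which are not, using extremality and \lemref{only:two} to control overlaps at the run's two ends — to be the technically fiddly heart of the proof; everything else is assembling the already-proved lemmas.
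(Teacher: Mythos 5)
Your structural setup is exactly the paper's: dispatch the length-one balance-one runs via \lemref{smaller}, take a balance-one run $\run$ with $|\run|>1$, use \lemref{run:no:O:k} and \lemref{margins} to isolate $\run$'s interaction with the rest, and swap $\run$ into $\optX{k}$ to form the $(k+1)$-cover $\optX{k}^+ = \optX{k}\oplus\run$. Where you go wrong is the last accounting step.

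You claim that the coverage gain $\mX{\optX{k}^+} - \voptX{k}$ dominates $\mvY{\II}{\optX{k+2}}$ for some $\II\in\run\cap\optX{k+2}$, and then invoke \lemref{marginal}. This inequality is false in general. Writing $\RX{k}=\cup(\run\cap\optX{k})$, $\RX{k+2}=\cup(\run\cap\optX{k+2})$, and $\II=\IIL\cup\IIR$, the disjointness lemmas give exactly $\mX{\optX{k}^+}-\voptX{k} = \mX{\RX{k+2}\setminus\II} - \mX{\RX{k}}$. The subtracted term $\mX{\RX{k}}$ is the coverage of the $\optX{k}$-chain you deleted, and it can be large enough to push the gain \emph{below} every single $\mvY{\II'}{\optX{k+2}}$ with $\II'\in\run$. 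Concretely, take the run $A_1=[0,3]$, $B_1=[2,5]$, $A_2=[4,7]$ with $\IIL=[-1,1]$, $\IIR=[6,8]$ (uniform point density): the swap gain is $\mX{[1,3]\cup[4,6]}-\mX{[2,5]} = 4-3 = 1$, while $\mvY{A_1}{\optX{k+2}}=\mvY{A_2}{\optX{k+2}}=2$. So your ``telescoping'' step does not go through, and a single-interval marginal value is the wrong target.

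The paper's argument avoids this by performing the \emph{symmetric} swap: it also forms $\optX{k+2}^- = \optX{k+2}\oplus\run$, another $(k+1)$-cover, and proves via the disjointness of $\LX{k}$, $\LX{k+2}$, $\RX{k}$, $\RX{k+2}$ that $\mX{\optX{k}^+}-\voptX{k} \geq \mX{\optX{k+2}}-\mX{\optX{k+2}^-}$. Since $\optX{k+2}^-$ is a $(k+1)$-cover, $\mX{\optX{k+2}}-\mX{\optX{k+2}^-} \geq \mX{\optX{k+2}}-\mX{\optX{k+1}}=\profitX{k+2}$ directly, with no appeal to \lemref{marginal} at all. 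You need this two-sided comparison; the one-sided swap plus \lemref{marginal} that you propose is genuinely insufficient for runs of length greater than one (it is exactly what \lemref{smaller} does for the length-one case, where there is no $\RX{k}$ to subtract).
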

\begin{proof}
    Consider the optimal solutions,
    $\optX{k}, \optX{k+1}, \optX{k+2}$, and their decomposition into
    runs. If there is a run of length one in this decomposition, with
    balance one, then the claim holds by \lemref{smaller}. As there
    are at least two runs in the decomposition with balance one, by
    \obsref{two:runs}, we consider such a run $\run$, with
    $\balanceX{\run} =1$, and $|\run| >1$.

    By \lemref{margins} such a run can intersect at most two intervals
    of $\XX$, and if so they intersect only the first and last
    intervals in the run.  Let $\IIL, \IIR \in \XX$ be these two
    intervals, and assume that $\IIL \prec \IIR$, with $\IIL$ (resp.,
    $\IIR$) intersect the first (resp., last) interval of $\run$.  If
    there is no interval $\IIL$ (resp., $\IIR$) with this property, we
    take $\IIL$ (resp., $\IIR$) to be the empty set.

    The argument used in the proof of \lemref{run:no:O:k} implies that
    $\run$ does not intersect any interval of
    $(\optX{k} - \IIL - \IIR) \setminus \run$.

    If there is an interval $\IJ \in \optX{k+2} \setminus \run$ that
    intersects an interval $\IK \in \run \cap \optX{k}$, then there
    are three intervals of $\optX{k+2}$ that intersect $\IK$ (two in
    the run, and $\IJ$), which is impossible by \lemref{only:two}. So
    consider the sets
    \begin{equation*}
        \LX{k} =  \cup ((\optX{k} - \IIL -\IIR) \setminus \run),\quad
        \LX{k+2} =  \cup (\optX{k+2} \setminus \run),\quad
        \RX{k} =  \cup(\optX{k} \cap \run),
        \quad\text{and}\quad%
        \RX{k+2} = \cup(\optX{k+2} \cap \run).
    \end{equation*}    
    The above argument implies that $\LX{k}$ is disjoint from both
    $\RX{k}$ and $\RX{k+2}$, and similarly, $\LX{k+2}$ and $\RX{k}$
    are disjoint.  As such
    $\mX{ \LX{k} \cup \RX{k+2} } = \mX{\LX{k} } + \mX{\RX{k+2}}$,
    $\mX{ \LX{k} \cup \RX{k}} = \mX{\LX{k}} + \mX{\RX{k}}$, and
    $\mX{\LX{k+2} \cup \RX{k}} = \mX{\LX{k+2}} + \mX{\RX{k}}$.

    Consider the two covers $\optX{k}^+ = \optX{k} \oplus \run$ and
    $\optX{k+2}^- = \optX{k+2} \oplus \run$. We have that
    $|\optX{k}^+| = |\optX{k+2}^-| = k+1$. Observe that
    $\IIL, \IIR \subseteq \LX{K+2}$, and let $\II = \IIL \cup \IIR$.
    We have that
    \begin{align*}
      \mX{\optX{k+2}} - \mX{\optX{k+1}}%
      &\leq%
        \mX{\optX{k+2}} - \mX{\optX{k+2}^-}%
      =%
      \mX{\LX{k+2}  \cup \RX{k+2} } - \mX{ \LX{k+2} \cup \RX{k}}
      \\&%
      =%
      \mX{\LX{k+2}  \cup (\RX{k+2} \setminus \II) } - \mX{\LX{k+2}}
      - \mX{ \RX{k}}
      \\&%
      \leq%
      \mX{  \RX{k+2} \setminus \II } - \mX{ \RX{k}}.
    \end{align*}
    Now, as $\LX{k} \cup \II$ is disjoint from
    $\RX{k+2} \setminus \II$ and $\RX{k}$, we have
    \begin{align*}
      \mX{\optX{k+2}} - \mX{\optX{k+1}}%
      &%
        \leq%
        \mX{ \LX{k} \cup  \II \cup (\RX{k+2} \setminus \II) } - \mX{ \LX{k}
        \cup \II \cup \RX{k}}
      \\&
      =%
      \mX{ \LX{k} \cup \II \cup  \RX{k+2} } - \mX{ \LX{k} \cup \II \cup \RX{k}}
      =%
      \mX{\optX{k}^+} - \voptX{k}%
      \\&%
      \leq%
      \mX{\optX{k+1}} - \voptX{k},
    \end{align*}
\end{proof}

\end{document}